\documentclass[a4paper,11pt]{amsart}

\usepackage{graphicx}
\usepackage{amsmath}
\usepackage{amssymb}
\usepackage[active]{srcltx}
\usepackage{hyperref}
\usepackage{bbm}
\usepackage{enumerate}
\usepackage{mathrsfs}

\addtolength{\evensidemargin}{-15mm}
\addtolength{\oddsidemargin}{-15mm}
\addtolength{\textwidth}{30mm}
\addtolength{\textheight}{20mm}
\addtolength{\topmargin}{-10mm}

\newtheorem{thm}{Theorem}
\newtheorem{lem}[thm]{Lemma}%
\newtheorem{cor}[thm]{Corollary}%
\theoremstyle{remark}
\newtheorem{remark}{Remark}[section] %

\theoremstyle{remark}

\numberwithin{equation}{section}

\def\PP{{\mathbb P}}
\def\QQ{{\mathbb Q}}
\def\RR{{\mathbb R}}
\def\TT{{\mathbb T}}

\def\ZZ{{\mathbb Z}}

\def\one{{\mathbbm{1}}}

\def\pr{\operatorname{pr}}

\def\vecb{{\text{\boldmath$b$}}}

\def\vece{{\text{\boldmath$e$}}}
\def\vecf{{\text{\boldmath$f$}}}
\def\vecg{{\text{\boldmath$g$}}}

\def\vecJ{{\text{\boldmath$J$}}}

\def\vecl{{\text{\boldmath$l$}}}
\def\vecell{{\text{\boldmath$\ell$}}}
\def\vecm{{\text{\boldmath$m$}}}

\def\vecq{{\text{\boldmath$q$}}}

\def\vecu{{\text{\boldmath$u$}}}
\def\vecv{{\text{\boldmath$v$}}}

\def\vecw{{\text{\boldmath$w$}}}
\def\vecx{{\text{\boldmath$x$}}}

\def\vecalf{{\text{\boldmath$\alpha$}}}
\def\vecbeta{{\text{\boldmath$\beta$}}}

\def\vecphi{{\text{\boldmath$\phi$}}}

\def\veceta{{\text{\boldmath$\eta$}}}

\def\vectheta{{\text{\boldmath$\theta$}}}
\def\vecxi{{\text{\boldmath$\xi$}}}

\def\US{{\S_1^{d-1}}}

\def\vecL{{\text{\boldmath$L$}}}

\def\vecnull{{\text{\boldmath$0$}}}

\def\curF{{\mathscr F}}

\def\scrB{{\mathcal B}}

\def\scrD{{\mathcal D}}
\def\scrE{{\mathcal E}}

\def\scrL{{\mathcal L}}
\def\scrM{{\mathcal M}}
\def\scrN{{\mathcal N}}

\def\scrP{{\mathcal P}}

\def\scrT{{\mathcal T}}
\def\scrU{{\mathcal U}}

\def\scrX{{\mathcal X}}

\def\fD{{\mathfrak D}}

\def\fR{{\mathfrak R}}

\def\Leb{\operatorname{Leb}}

\def\tvecx{\widetilde{\vecx}}

\def\e{\mathrm{e}}

\def\diag{\operatorname{diag}}

\def\C{\operatorname{C{}}}

\def\L{\operatorname{L{}}}

\def\GL{\operatorname{GL}}

\def\S{\operatorname{S{}}}

\def\SL{\operatorname{SL}}

\def\SO{\operatorname{SO}}

\def\supp{\operatorname{supp}}

\def\trans{^\mathrm{t}}

\def\bs{\backslash}

\def\Onder#1#2#3#4#5{#1 \setbox0=\hbox{$#1$}\setbox1=\hbox{$#2$}
       \dimen0=.5\wd0 \dimen1=\dimen0 \dimen2=\dp0 \dimen3=\dimen2
       \advance\dimen0 by .5\wd1 \advance\dimen0 by -#4
       \advance\dimen1 by -.5\wd1 \advance\dimen1 by -#4
       \advance\dimen2 by -#3 \advance\dimen2 by \ht1
       \advance\dimen2 by 0.3ex \advance\dimen3 by #5
        \kern-\dimen0\raisebox{-\dimen2}[0ex][\dimen3]{\box1}
       \kern\dimen1}

\newcommand{\tvecL}{\widetilde\vecL}

\newcommand{\lsl}{\mathfrak{sl}}

\newcommand{\osigma}{\overline\sigma}
\newcommand{\tA}{\widetilde{A}}
\newcommand{\tJ}{\widetilde{\vecJ}}
\newcommand{\tL}{\widetilde{L}}
\newcommand{\tT}{\widetilde{T}}
\newcommand{\oN}{\overline{N}}

\newcommand{\tR}{\widetilde{\fR}}

\newcommand{\tOmega}{\widetilde\Omega}
\newcommand{\oOmega}{\overline\Omega}

\newcommand{\Q}{\mathbb{Q}}
\newcommand{\R}{\mathbb{R}}
\newcommand{\Z}{\mathbb{Z}}

\newcommand{\col}{\: : \:}

\newcommand{\bn}{\mathbf{0}}

\newcommand{\tphi}{\widetilde\vecphi}

\newcommand{\ttau}{{\widetilde{\tau}}}

\newcommand{\tg}{\widetilde{g}}

\newcommand{\ve}{\varepsilon}

\newcommand{\matr}[4]{\left( \begin{matrix} #1 & #2 \\ #3 & #4 \end{matrix} \right) }
\newcommand{\cmatr}[2]{\left( \begin{matrix} #1 \\ #2 \end{matrix} \right) }

\newcommand{\smatr}[4]{\bigr( \begin{smallmatrix} #1 & #2 \\ #3 & #4 \end{smallmatrix} \bigr) }

\def \todist {\,\,\buildrel\text{\rm d}\over\longrightarrow\,\,}

\def\hole{\rho}

\title[Universal hitting time statistics]{Universal hitting time statistics for integrable flows}
\author{Carl P. Dettmann}
\author{Jens Marklof}
\author{Andreas Str\"ombergsson}
\address{School of Mathematics, University of Bristol,
Bristol BS8 1TW, U.K.\newline
\rule[0ex]{0ex}{0ex} \hspace{8pt}{\tt carl.dettmann@bristol.ac.uk}}
\address{School of Mathematics, University of Bristol,
Bristol BS8 1TW, U.K.\newline
\rule[0ex]{0ex}{0ex} \hspace{8pt}{\tt j.marklof@bristol.ac.uk}}
\address{Department of Mathematics, Box 480, Uppsala University,
SE-75106 Uppsala, Sweden\newline
\rule[0ex]{0ex}{0ex} \hspace{8pt}{\tt astrombe@math.uu.se}}
\date{7 June 2016/9 August 2016. To appear in Journal of Statistical Physics}

\thanks{The research leading to these results has received funding from the European Research Council under the European Union's Seventh Framework Programme (FP/2007-2013) / ERC Grant Agreement n. 291147.  CPD is supported by EPSRC Grant EP/N002458/1. AS is supported by a grant from the G\"oran Gustafsson Foundation for
Research in Natural Sciences and Medicine, and also by the Swedish Research Council Grant 621-2011-3629.}
\subjclass[2010]{37J35,37A50,37A17}
\dedicatory{To Professors D.\ Ruelle and Ya.G.\ Sinai on the occasion of their 80th birthday}

\begin{document}

\begin{abstract}
The perceived randomness in the time evolution of ``chaotic'' dynamical systems can be characterized by universal probabilistic limit laws, which do not depend on the fine features of the individual system. One important example is the Poisson law for the times at which a particle with random initial data hits a small set. This was proved in various settings for dynamical systems with strong mixing properties. The key result of the present study is that, despite the absence of mixing, the hitting times of integrable flows also satisfy universal limit laws which are, however, not Poisson. We describe the limit distributions for ``generic'' integrable flows and a natural class of target sets, and illustrate our findings with two examples: the dynamics in central force fields and ellipse billiards. The convergence of the hitting time process follows from a new equidistribution theorem in the space of lattices, which is of independent interest. Its proof exploits Ratner's measure classification theorem for unipotent flows, and extends earlier work of Elkies and McMullen.
\end{abstract}

\maketitle

\section{Introduction}

Let $(\scrM,\curF,\nu)$ be a probability space and consider a measure-preserving dynamical system 
\begin{equation}
\varphi^t:\scrM \to \scrM .
\end{equation}
A fundamental question is how often a trajectory with random initial data $x\in\scrM$ intersects a given target set $\scrD\in\curF$ within time $t$. If $\scrD$ is fixed, this problem has led to many important developments in ergodic theory, which show that, if $\varphi^t$ is sufficiently ``chaotic'' (e.g., partially hyperbolic), the number of intersections satisfies a central limit theorem and more general invariance principles. One of the first results in this direction was Sinai's proof of the central limit theorem for geodesic flows \cite{Sinai60} and, with Bunimovich, the finite-horizon Lorentz gas \cite{Bunimovich:1980ur}. We refer the reader to  \cite{Balint:2011jt,Dolgopyat:2009bl,Gouezel15,Szasz:2007uo} for further references to the literature on this subject. In the case of non-hyperbolic dynamical systems, such as horocycle flows or toral translations, the classical stable limit laws generally fail and must be replaced by system-dependent limit theorems \cite{Bufetov13,Bufetov13b,Bufetov14,DF1,DF,Griffin14}. If on the other hand one considers a sequence of target sets $\scrD_\hole\in\curF$ such that $\nu(\scrD_\hole)\to 0$ as $\hole\to 0$, then the number of intersections within time $t$ (now measured in units of the mean return time to $\scrD_\hole$) satisfies a Poisson limit law, provided $\varphi^t$ is mixing with sufficiently rapid decay of correlations. The first results of this type were proved by Pitskel \cite{Pitskel91} for Markov chains, and by Hirata \cite{Hirata93} in the case of Axiom A diffeomorphisms by employing transfer operator techniques and the Ruelle zeta function. (Hirata's paper was in fact motivated by Sinai's work \cite{Sinai91a,Sinai91b} on the Poisson distribution for quantum energy levels of generic integrable Hamiltonians, following a conjecture by Berry and Tabor \cite{Berry77,Marklof01} in the context of quantum chaos.) For more recent studies on the Poisson law for hitting times in ``chaotic'' dynamical systems, see \cite{Abadi11,Chazottes13,Freitas14,Haydn13,Haydn14a,Lucarini16,Rousseau14} and references therein. 

In the present paper we prove  analogous limit theorems for integrable Hamiltonian flows $\varphi^t$, which are not Poisson yet universal in the sense that they do not depend on the fine features of the individual system considered. 
The principal result of this study is explained in Section \ref{sec:Integrable} for the case of flows with two degrees of freedom, where the target set is a union of small intervals of varying position, length and orientation on each Liouville torus. In the limit of vanishing target size, the sequence of hitting times converges to a limiting process which is described in Section \ref{sec:limit}. Sections \ref{sec:central} and \ref{sec:billiards} illustrate the universality of our limit distribution in the case of two classic examples: the motion of a particle in a central force field and the billiard dynamics in an ellipse. In both cases, the limit process for the hitting times, measured in units of the mean return time on each Liouville torus, is independent of the choice of potential or ellipse, and in fact only depends on the number of connected components of the target set on the invariant torus. The results of Section \ref{sec:limit} are generalized in Section \ref{sec:general} to integrable flows with $d$ degrees of freedom, where unions of small intervals are replaced by unions of shrinking dilations of $k$ given target sets. The key ingredient in the proof of the limit theorems for hitting time statistics is the equidistribution of translates of certain submanifolds in the homogeneous space $G/\Gamma$, where $G=\SL(d,\RR)\ltimes(\RR^d)^k$ and $\Gamma=\SL(d,\ZZ)\ltimes(\ZZ^d)^k$. These results, which are stated and proved in Section \ref{HOMDYNsec}, generalize the equidistribution theorems by Elkies and McMullen \cite{Elkies04} in the case of nonlinear horocycles ($d=2$, $k=1$), and are based on Ratner's celebrated measure classification theorem. The application of these results to the hitting times is carried out in Section \ref{MAINPROOFsec}, and builds on our earlier work for the linear flow on a torus \cite{partI}.

\section{Integrable flows with two degrees of freedom}\label{sec:Integrable}

To keep the presentation as transparent as possible, we first restrict our attention to Hamiltonian flows with two degrees of freedom, whose phase space is the four-dimensional symplectic manifold $\scrX$. (The higher dimensional case is treated in Section \ref{sec:general}.) The basic example is of course $\scrX=\RR^2\times\RR^2$, where the first factor represents the particle's position and the second its momentum. To keep the setting more general, we will not assume Liouville-integrability on the entire phase space, but only on an open subset $\scrM\subset\scrX$, a so-called {\em integrable island.} Liouville integrability \cite[Sect.~1.4]{Bolsinov-Fomenko} implies that there is a foliation (the {\em Liouville foliation}) of $\scrM$ by two-dimensional  leaves. Regular leaves are smooth Lagrangian submanifolds of $\scrM$ that fill $\scrM$ bar a set of measure zero. A compact and connected regular leaf is called a {\em Liouville torus}. Every Liouville torus has a neighbourhood that can be parametrised by action-angle variables $(\vectheta,\vecJ)\in\TT^2\times\scrU$, where $\TT^2=\RR^2/\ZZ^2$ and $\scrU$ is a bounded open subset of $\RR^2$. In these coordinates the Hamiltonian flow is given by
\begin{equation}\label{eq:flow}
\varphi^t: \TT^2\times\scrU \to \TT^2\times\scrU, \quad (\vectheta,\vecJ) \mapsto  (\vectheta + t\, \vecf(\vecJ),\vecJ) ,
\end{equation}
with the smooth Hamiltonian vector field $\vecf=\nabla_\vecJ H$. In what follows, the Hamiltonian structure is in fact completely irrelevant, and {\em we will assume $\scrU$ is a bounded open subset of $\R^m$ ($m\geq1$ arbitrary), and $\vecf:\scrU\to\RR^2$ a smooth function.} We will refer to the corresponding $\varphi^t$ in \eqref{eq:flow} simply as an {\em integrable flow}. Even in the Hamiltonian setting, it is often not necessary to represent the dynamics in action-angle variables to apply our theory; cf.\ the examples of the central force field and billiards in ellipses discussed in Sections \ref{sec:central} and \ref{sec:billiards}.

We will consider random initial data $(\vectheta,\vecJ)$ that is distributed according to a given Borel probability measure $\Lambda$ on $\TT^2\times\scrU$. One example is 
\begin{equation}\label{eq:inv}
\Lambda=\Leb_{\TT^2}\times\lambda,
\end{equation}
where $\Leb_{\TT^2}$ is the uniform probability measure on $\TT^2$ and $\lambda$ is a given absolutely continuous Borel probability measure on $\scrU$. This choice of $\Lambda$ is $\varphi^t$-invariant. One of the key features of this work is that our conclusions also hold for more singular and non-invariant measures $\Lambda$, such as $\Lambda=\delta_{\vectheta_0}\times\lambda$, where $\delta_{\vectheta_0}$ is a point mass at $\vectheta_0$.
The most general setting we will consider is to define $\Lambda$ as the push-forward of a given (absolutely continuous) probability measure $\lambda$ on $\scrU$ by the map $\vecJ\mapsto(\vectheta(\vecJ),\vecJ)$,
where $\vectheta:\scrU\to\TT^2$ is a fixed smooth map;
this means that we consider random initial data in $\TT^2\times\scrU$ of the form
$(\vectheta(\vecJ),\vecJ)$, where $\vecJ$ is a random point in $\scrU$ distributed according $\lambda$.
This is the set-up that we use in the formulation of our main result, Theorem \ref{thm:main000} below. We will demonstrate in Remark \ref{rem:the} that this setting is indeed rather general, and allows a greater selection of measures than is apparent; for instance invariant measures of the form \eqref{eq:inv} can be realized within this framework. 

We also note that the smoothness assumptions on $\vecf$ and $\vectheta$ are less restrictive than they may appear: We can allow discontinuities in the derivatives of theses maps, provided there is an open subset $\scrU'\subset\scrU$ with $\lambda(\scrU\setminus\scrU')=0$, so that the restrictions of $\vecf$ and $\vectheta$ to $\scrU'$ are smooth. Furthermore, the smoothness requirements are a result of an application of Sard's theorem in Theorem \ref{KEYEQUIDISTRTHM2} and may in fact be replaced by finite differentiability conditions.

We consider target sets $\scrD_\hole=\scrD_\hole^{(k)}$ that, in each leaf, appear as disjoint unions of $k$ short intervals transversal to the flow direction. To give a precise definition of $\scrD_\hole$, fix smooth functions $\vecu_j:\scrU\to\S^1$, $\vecphi_j:\scrU\to\TT^2$, and $\ell_j:\scrU\to\RR_{>0}$ ($j=1,\ldots,k$) which describe the orientation, midpoint and length of the $j$th interval in each leaf. 
Set
\begin{equation}\label{target00}
\scrD_\hole^{(k)} = \bigcup_{j=1}^k  \scrD(\vecu_j,\vecphi_j,\hole\ell_j) ,
\end{equation}
where
\begin{equation}
\scrD(\vecu,\vecphi,\ell) := \bigg\{ \big(\vecphi(\vecJ)+s  \vecu(\vecJ)^\perp,\vecJ\big)\in\TT^2\times\scrU \,\bigg|\, -\frac{\ell(\vecJ)}{2}<s<\frac{\ell(\vecJ)}{2} \bigg\} ,
\end{equation} 
with $\vecu(\vecJ)^\perp$ denoting a unit vector perpendicular to $\vecu(\vecJ)$.
This yields, in each leaf $\TT^2\times\{\vecJ\}$, a union of $k$ intervals, where the $j$th interval has length $\hole\ell_j(\vecJ)$, is centered at $\vecphi_j(\vecJ)$ and perpendicular to $\vecu_j(\vecJ)$. 
As mentioned, we assume that each interval is transversal to the flow direction,
i.e.\ $\vecu_j(\vecJ)\cdot\vecf(\vecJ)\neq0$ for all $j\in\{1,\ldots,k\}$ and all $\vecJ\in\scrU$;
in fact we will even assume $\vecu_j(\vecJ)\cdot\vecf(\vecJ)>0$, without any 
loss of generality.

Now, for any initial condition $(\vectheta,\vecJ)$, the set of hitting times 
\begin{equation}\label{scrTdef}
\scrT(\vectheta,\vecJ,\scrD_\rho) :=\{ t >0 \mid \varphi^t(\vectheta,\vecJ) \in\scrD_\rho \} 
\end{equation}
is a discrete (possibly empty) subset of $\RR_{>0}$, the elements of which we label by 
\begin{equation}
0<t_1(\vectheta,\vecJ,\scrD_\rho)<t_2(\vectheta,\vecJ,\scrD_\rho)<\ldots .
\end{equation}
We call $t_i(\vectheta,\vecJ,\scrD_\rho)$ the $i$th {\em entry time to $\scrD_\rho$} if $(\vectheta,\vecJ)\notin\scrD_\rho$, and the $i$th {\em return time to $\scrD_\rho$} if $(\vectheta,\vecJ)\in\scrD_\rho$. A simple volume argument (Santalo's formula \cite{Chernov97}) shows that for any fixed $\vecJ\in\scrU$ such that the components of $\vecf(\vecJ)$ are not rationally related,
the first return time to $\scrD_\rho$ on the leaf $\TT^2\times\{\vecJ\}$
satisfies the formula
\begin{align}
\int_{\scrD_\rho}t_1(\vectheta,\vecJ,\scrD_\rho)\,d\nu_\vecJ(\vectheta)=1,
\end{align}
where $\nu_\vecJ$ is the invariant 
measure on $\scrD_\rho$ obtained by disintegrating Lebesgue measure on $\TT^2\times\{\vecJ\}$ 
with respect to the section $\scrD_\rho$ of the flow $\varphi^t$. 
The measure $\nu_\vecJ$ is explicitly given by
\begin{align}
\int_{\scrD_\rho} g \,d\nu_\vecJ=
\sum_{j=1}^k \bigl(\vecu_j(\vecJ)\cdot\vecf(\vecJ)\bigr) \int_{-\rho\ell_j(\vecJ)/2}^{\rho\ell_j(\vecJ)/2} 
g \bigl(\vecphi(\vecJ)+s \vecu(\vecJ)^\perp,\,\vecJ\bigr)\,ds,
\qquad\forall g \in\C(\scrD_\rho).
\end{align}
Recall that by transversality $\vecu_j(\vecJ)\cdot\vecf(\vecJ)>0$.
It follows that the mean return time with respect to $\nu_\vecJ$ equals
\begin{align}\label{mean01}
\frac{\osigma^{(k)}(\vecJ)}{\rho},\qquad\text{where }\hspace{10pt}
\osigma^{(k)} (\vecJ):=   
\frac1{\sum_{j=1}^k \ell_j(\vecJ)\vecu_j(\vecJ)\cdot\vecf(\vecJ)}.
\end{align}
If we also average over $\vecJ$ with respect to the measure $\lambda$,
the mean return time becomes
\begin{align}\label{mean02}
\frac{\osigma^{(k)}_\lambda}{\rho},\qquad\text{where }\hspace{10pt}
\overline \sigma_\lambda^{(k)} :=  \int_\scrU \overline \sigma^{(k)} (\vecJ)\, \lambda(d\vecJ) .
\end{align}
We have assumed here that the pushforward of $\lambda$ by $\vecf$ has no atoms at points with 
rationally related coordinates. This holds in particular if $\lambda$ is $\vecf$-regular as defined below.

For $\vecJ$ a random point in $\scrU$ distributed according $\lambda$,
the hitting times $t_n(\vectheta(\vecJ),\vecJ,\scrD_\rho^{(k)})$ become random variables,
which we denote by $\tau_{n,\rho}^{(k)}$.
Also $\osigma^{(k)}(\vecJ)$ becomes a random variable, which we denote by $\osigma^{(k)}$.
In this paper, we are interested in the distribution of the sequence of entry times 
$\tau^{(k)}_{n,\hole}$ rescaled by the mean return time \eqref{mean02}, or by the conditional mean return time \eqref{mean01}.

Finally we introduce two technical conditions.
Note that 
$\vecf(\vecJ)\neq\bn$ for all $\vecJ\in\scrU$, by the transversality assumption made previously.
We say that $\lambda$ is {\em $\vecf$-regular} if
the pushforward of $\lambda$ under the map
\begin{equation}\label{fregdef}
\scrU \to\S^1, \qquad \vecJ \mapsto \frac{\vecf(\vecJ)}{\|\vecf(\vecJ)\|},
\end{equation}
is absolutely continuous with respect to Lebesgue measure on $\S^1$.
We say a $k$-tuple of smooth functions $\vecphi_1,\ldots,\vecphi_k:\scrU \to\TT^2$ is \textit{$(\vectheta,\lambda)$-generic,}
if for all $\vecm=(m_1,\ldots,m_k)\in\ZZ^k\setminus\{\vecnull\}$ we have
\begin{equation}\label{hyp3expl0}
\lambda\bigg(\bigg\{ \vecJ\in\scrU\col \sum_{j=1}^k m_j \, \big(\vecphi_j(\vecJ)
-\vectheta(\vecJ)\big) \in \RR \vecf(\vecJ) + \QQ^2 \bigg\}\bigg) = 0.
\end{equation}

The following is the main result of this paper.

\begin{thm}\label{thm:main000}
Let $\vecf:\scrU\to \RR^2$ and $\vectheta:\scrU\to\TT^2$ be smooth maps, 
$\lambda$ an absolutely continuous Borel probability measure on $\scrU$,
and for $j=1,\ldots,k$, let $\vecu_j:\scrU \to\S^1$, $\vecphi_j:\scrU \to\TT^2$  and $\ell_j:\scrU\to\RR_{>0}$ be smooth maps.
Assume $\vecu_j(\vecJ)\cdot\vecf(\vecJ)>0$ for all $\vecJ\in\scrU$, $j\in\{1,\ldots,k\}$.
Also assume that $\lambda$ is $\vecf$-regular and $(\vecphi_1,\ldots,\vecphi_k)$ is $(\vectheta,\lambda)$-generic. 
Then there are sequences of random variables $(\tau_i)_{i=1}^\infty$ and $(\widetilde\tau_i)_{i=1}^\infty$ in $\RR_{>0}$ such that in the limit $\hole\to 0$, for every integer $N$,
\begin{equation}\label{thm:main000res1}
\bigg( \frac{\hole \tau_{1,\hole}^{(k)}}{\overline \sigma_\lambda^{(k)}},\ldots,\frac{\hole \tau_{N,\hole}^{(k)}}{\overline \sigma_\lambda^{(k)}} \bigg) \todist (\tau_1,\ldots,\tau_N),
\end{equation}
and
\begin{equation}
\bigg( \frac{\hole \tau_{1,\hole}^{(k)}}{\overline \sigma^{(k)}},\ldots,\frac{\hole \tau_{N,\hole}^{(k)}}{\overline \sigma^{(k)}} \bigg) \todist (\widetilde\tau_1,\ldots,\widetilde\tau_N).
\end{equation}
\end{thm}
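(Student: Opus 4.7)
The plan, building on our earlier analysis for linear toral flows \cite{partI}, is to encode the rescaled hitting times as geometric statistics of a $\vecJ$-parametrised family of points in the homogeneous space $G/\Gamma$ with $G=\SL(2,\RR)\ltimes(\RR^2)^k$ and $\Gamma=\SL(2,\ZZ)\ltimes(\ZZ^2)^k$, and then to extract the limit via the Ratner-type equidistribution Theorem \ref{KEYEQUIDISTRTHM2}. For fixed $\vecJ$ the flow meets the $j$th interval of $\scrD_\rho^{(k)}$ at time $t>0$ precisely when the affine lattice $\ZZ^2+\vecphi_j(\vecJ)-\vectheta(\vecJ)$ contains a point of the form $t\vecf(\vecJ)+s\vecu_j(\vecJ)^\perp$ with $|s|<\rho\ell_j(\vecJ)/2$. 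Choosing $A(\vecJ)\in\SL(2,\RR)$ that maps $\vecf(\vecJ)$ to a positive multiple of $\vece_1$ and then applying the diagonal element $a(\rho)=\diag(\rho,\rho^{-1})$ converts the $j$th hitting condition into the presence of a lattice point in a fixed horizontal strip of width $O(1)$, with horizontal coordinate equal to $\rho t$. Hence the joint law of $(\rho\tau_{1,\rho}^{(k)},\ldots,\rho\tau_{N,\rho}^{(k)})$ is the pushforward by a fixed measurable function $F_N:G/\Gamma\to\RR_{>0}^N$ of the measure $(\Psi_\rho)_*\lambda$, where $\Psi_\rho:\scrU\to G/\Gamma$ is the map sending $\vecJ$ to the coset represented by the normalised $k$-tuple of affine lattices just described.

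\medskip
The central step is to show that $(\Psi_\rho)_*\lambda$ converges weakly to the Haar probability measure on $G/\Gamma$ as $\rho\to 0$: this is precisely Theorem \ref{KEYEQUIDISTRTHM2}, proved in Section \ref{HOMDYNsec} via Ratner's measure classification theorem for unipotent flows, in the spirit of Elkies--McMullen \cite{Elkies04}. The two hypotheses of the present theorem are exactly the non-degeneracy conditions required to force the limit in Ratner's dichotomy to be Haar: $\vecf$-regularity guarantees that the projection of $(\Psi_\rho)_*\lambda$ to $\SL(2,\ZZ)\bs\SL(2,\RR)$ is an expanding translate of a regular direction curve, recovering the original Elkies--McMullen setting, while $(\vectheta,\lambda)$-genericity \eqref{hyp3expl0} rules out every closed $\Gamma$-invariant subvariety that could otherwise trap the orbit in the $(\RR^2)^k$ factor.

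\medskip
With equidistribution established, the limit random variables $\tau_i$ and $\widetilde\tau_i$ are defined intrinsically on $G/\Gamma$ via $F_N$ and the normalisations $\osigma_\lambda^{(k)}$ and $\osigma^{(k)}$, the latter being a function of the $\SL(2,\RR)$-component of the random point. The main technical obstacle is in passing from weak convergence of measures to convergence in distribution of the hitting times, because $F_N$ is unbounded on the non-compact space $G/\Gamma$ (lattices deep in the cusp possess arbitrarily many short vectors in any strip) and is discontinuous on a null set. The plan is to approximate $F_N$ by bounded continuous truncations $F_{N,T}$ that cap the horizontal coordinate at $T$, verify that the discontinuity sets of $F_{N,T}$ are Haar-null, and establish uniform integrability of $F_N-F_{N,T}$ under $(\Psi_\rho)_*\lambda$ as $T\to\infty$, uniformly in $\rho$, by Siegel-type mean value estimates for the number of points of a random affine lattice in thin slabs. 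This uniform tail bound is the heart of the matter; the scheme developed in \cite{partI} for linear flows on a single torus provides a template, but must now be adapted to the enlarged homogeneous space and to the $\vecJ$-dependent family of orbits.
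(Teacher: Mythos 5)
Your overall architecture matches the paper's: encode hitting times as lattice-point counts for a $\vecJ$-parametrised family of affine lattices, map into $G/\Gamma$ with $G=\SL(2,\RR)\ltimes(\RR^2)^k$, apply the Ratner-type equidistribution result of Section~\ref{HOMDYNsec}, and translate back. The lattice encoding you give (the flow hits the $j$th target iff $t\vecf(\vecJ)+s\vecu_j(\vecJ)^\perp\in\vecphi_j(\vecJ)-\vectheta(\vecJ)+\ZZ^2$ for some small $s$) is exactly right, as is the role of the two hypotheses.

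However, your framing of the final analytic step is a misdiagnosis. You propose to treat the hitting-time map $F_N:G/\Gamma\times\scrU\to\RR^N_{>0}$ as an unbounded function, truncate it, and establish uniform integrability of the tail uniformly in $\rho$. That machinery is not needed, and in fact does not address the actual difficulty. To obtain convergence in distribution, one works directly with the cumulative distribution functions $\PP(\rho\tau_{n,\rho}/\osigma\leq T_n\ \forall n)$; each such probability is (after the encoding and up to an $\epsilon$-approximation) of the form $(\mu_\rho\times\lambda)(B)$ for a set $B=B[(Y_n),(Z_n)]\subset X\times\scrU$ whose indicator is bounded by $1$, so there is no unboundedness problem. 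What one actually must show is: (i) $(\mu_X\times\lambda)(\partial B)=0$, so that the Portmanteau theorem upgrades the equidistribution from $\C_b$ test functions to $\one_B$ (this is Lemmas~\ref{BDRYlem} and \ref{BBDRYMEAS0lem}); and (ii) $(\mu_X\times\lambda)(B[(Y_n),(Z_n)])$ depends \emph{continuously} on the endpoints $(Y_n),(Z_n)$ (Lemma~\ref{BBpvolsamecontLEM}, proved via the Siegel mean value bound of Lemma~\ref{SIEGELLEM}). Point (ii) is what you are missing: the map $D(\rho)R_{\vecv(\vecJ)}$ takes the cylinder $A_{j,\rho,T}(\vecJ)$ only \emph{approximately} to the normalised cylinder $\widetilde A_{j,0,T}(\vecJ)$ (Lemma~\ref{AjrhoTapprLEM}), so one needs to sandwich between $B[(\ve),(T_n-\ve)]$ and $B[(-\ve),(T_n+\ve)]$ and then let $\ve\to 0$; the continuity in (ii) is exactly what allows this passage. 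The Siegel estimate therefore enters to prove continuity of the limiting distribution, not uniform integrability.

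Two further details your sketch omits that are needed for the argument to close. First, before the encoding is valid one must restrict to $\scrU_\eta=\{\vecJ:\|\vecphi_j(\vecJ)-\vecphi_\ell(\vecJ)\|>\eta\ \forall j\neq\ell\}$ so that the $k$ targets are pairwise disjoint for small $\rho$; the $(\vectheta,\lambda)$-genericity hypothesis is used here to guarantee $\lambda(\scrU_\eta)\to 1$ as $\eta\to 0$. Second, the second convergence (normalisation by $\osigma^{(k)}(\vecJ)$, which depends on $\vecJ$) is handled by working on the product $X\times\scrU$ throughout, with Theorem~\ref{KEYEQUIDISTRTHM2} stated for test functions on this product; your description leaves this implicit. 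Lastly, for the first convergence one must rule out the degenerate case $\osigma_\lambda^{(k)}=\infty$ separately.
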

Note that if $\osigma_\lambda^{(k)}=\infty$ then \eqref{thm:main000res1} is trivial, with $\tau_i=0$ for all $i$, since $\tau_{i,\hole}^{(k)}<\infty$ a.s.\ for every fixed $\rho$.

\begin{remark}\label{rem:the}
Recall that Theorem \ref{thm:main000} assumes that the initial data is $(\vectheta(\vecJ),\vecJ)$ with $\vecJ\in\scrU$ distributed according to $\lambda$. This seems to exclude natural choices such as invariant measures of the form \eqref{eq:inv}. Let us demonstrate that this is not the case. The setting of Theorem \ref{thm:main000} (as well as its generalisation to arbitrary dimension $d\geq 2$, Theorem \ref{thm:main001} below) in fact permits random initial data $(\vectheta,\vecJ)$ 
distributed according to any probability measure $\Lambda$ on $\TT^d\times\scrU$ of the form
$\Lambda=\iota_*\lambda_0$,
where $\lambda_0$ is an absolutely continuous Borel probability measure on an open subset $\scrU_0\subset\R^{m_0}$ for some $m_0\in\Z^+$, and some smooth map $\iota:\scrU_0\to\TT^d\times\scrU$. Indeed, such $\Lambda$ can be realized within the setting of Theorem \ref{thm:main000} by using 
\begin{equation}
\scrU_0, \quad \vecf_0:=\vecf\circ\pr_2\circ\iota, \quad \vectheta_0:=\pr_1\circ\iota, \quad \lambda_0
\end{equation}
in place of 
\begin{equation}
\scrU, \quad \vecf, \quad \vectheta, \quad \lambda,
\end{equation}
where $\pr_1,\pr_2$ are the projection maps from $\TT^d\times\scrU$ to $\TT^d$ and $\scrU$, respectively.
Of course, for Theorem \ref{thm:main000} to apply we need to assume that $\lambda_0$ is $\vecf_0$-regular,
and that $(\vecphi_1,\ldots,\vecphi_k)$ is $(\vectheta_0,\lambda_0)$-generic.
\end{remark}

\begin{remark}
We describe the limit sequences $(\tau_i)_{i=1}^\infty$ and $(\widetilde\tau_i)_{i=1}^\infty$ in Section \ref{sec:limit}. A particular highlight is that in the case of a single target ($k=1)$, or in the case of multiple targets with the same lengths $\ell_1=\ldots=\ell_k$ and orientiation $\vecu_1=\ldots=\vecu_k$, the distribution of $(\widetilde\tau_i)_{i=1}^\infty$ is {\em universal}. This means that it is independent of the choice of $\scrU$, $\vecf$, $\lambda$, target orientations, positions and sizes.
In fact a weaker form of universality holds also in the general case, 
and for both $(\tau_i)_{i=1}^\infty$ and $(\widetilde\tau_i)_{i=1}^\infty$.
Indeed, let us define the {\em target weight functions} $\vecL=(L_1,\ldots,L_k)$ and $\tvecL=(\tL_1,\ldots,\tL_k)$
from $\scrU$ to $(\R_{>0})^k$, through
\begin{equation}\label{Ljdef}
L_j(\vecJ)=\overline \sigma_\lambda^{(k)}\,\ell_j(\vecJ)\vecu_j(\vecJ)\cdot\vecf(\vecJ)
\end{equation}
and 
\begin{equation}\label{tLjdef}
\widetilde L_j(\vecJ)=\overline \sigma^{(k)}(\vecJ)\,\ell_j(\vecJ)\vecu_j(\vecJ)\cdot\vecf(\vecJ).
\end{equation}
Then the distribution of $(\tau_i)_{i=1}^\infty$ depends on the system data \textit{only} via the distribution of
$\vecL(\vecJ)$ for $\vecJ$ random in $\scrU$ according to $\lambda$,
and similarly $(\ttau_i)_{i=1}^\infty$ depends only on the distribution of $\tvecL(\vecJ)$.
Furthermore, both $(\tau_i)_{i=1}^\infty$ and $(\ttau_i)_{i=1}^\infty$ yield
stationary point processes, i.e.\ the random set of time points $\{\tau_i\}$ has the same distribution as
$\{\tau_i-t\}\cap\R_{>0}$ for every fixed $t\geq0$,
and similarly for $\{\ttau_i\}$
(cf.\ Section \ref{sec:general}).
\end{remark}

\begin{remark}
Theorem \ref{thm:main000} is stated for the convergence of entry time distributions. It is a general fact that the convergence of entry time distributions implies the convergence of return time distributions and vice versa, with a simple formula relating the two \cite{suspension}.
\end{remark}

\section{The limit distribution}\label{sec:limit}

\begin{figure}
\centerline{\includegraphics[width=400pt]{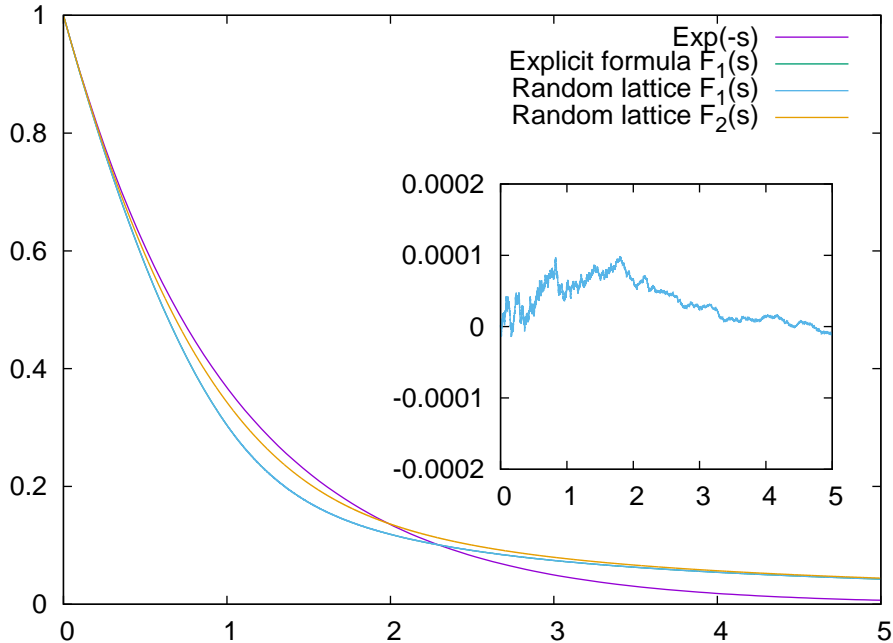}}
\vspace{-25pt}
\caption{Numerically computed $F_1(s)$ and $F_2(s)$, compared with the exponential function $\e^{-s}$ and the explicit formula \eqref{eq:F1s} for $F_1(s)$. The inset shows the difference between the numerically computed $F_1(s)$ and \eqref{eq:F1s}.
\label{Fig1}}
\end{figure}

We will now describe the limit processes $(\tau_i)_{i=1}^\infty$ and $(\widetilde\tau_i)_{i=1}^\infty$ in terms of elementary random variables in the unit cube. A more conceptual description in terms of Haar measure of the special linear group $\SL(2,\RR)$ will be given in Section \ref{sec:general}.

Pick uniformly distributed random points $(a,b,c)$ in the unit cube $(0,1)^3$.
The push-forward of the uniform probability measure under the diffeomorphism
\begin{equation}\label{diffeoo}
(0,1)^3 \to F, \qquad (a,b,c) \mapsto \bigg(\sin(\tfrac\pi 3(a-\tfrac12)) , \frac{\cos(\tfrac\pi 3(a-\tfrac12))}{1-b}  ,\pi c\bigg)
\end{equation}
yields the probability measure $\mu_F=\frac{3}{\pi^2}\, y^{-2} dx\,dy\,d\theta$ on the domain
\begin{align}\label{F}
F=\bigl\{ (x,y,\theta)\in\RR^3 \col |x|<\tfrac12,\: x^2+y^2>1,\: y>0,\: 0<\theta<\pi \bigr\} .
\end{align}
For $x,y,\theta\in\RR$ with $y>0$ and $0\leq\theta<\pi$, consider the Euclidean lattice
\begin{align}
\scrL(x,y,\theta) =  k_{\theta}\matr{\sqrt y}00{1/\sqrt y}\matr10x1\Z^2,
\qquad\text{where }\: k_\theta:=\matr{\cos\theta}{\sin\theta}{-\sin\theta}{\cos\theta} .
\end{align}
A basis for this lattice is given by  the two vectors
\begin{align}\label{LATTICEBASIS}
\vecb_1=y^{-1/2}\,k_{\theta}\cmatr{y}{x}\quad\text{and}\quad
\vecb_2=y^{-1/2}\,k_{\theta}\cmatr{0}{1}.
\end{align}
Note that $\det(\vecb_1,\vecb_2)=1$ and hence $\scrL(x,y,\theta) $ has unit covolume. If we choose $(x,y,\theta)$ random according to the probability measure $\mu_F$, then $\scrL(x,y,\theta)$ represents a {\em random Euclidean lattice} (of covolume one).
Similarly, for $\vecalf\in\TT^2$, the shifted lattice
\begin{align}
\scrL(x,y,\theta,\vecalf) =  k_{\theta}\matr{\sqrt y}00{1/\sqrt y}\matr10x1(\Z^2+\vecalf)
\end{align}
represents a {\em random affine Euclidean lattice} if in addition $\vecalf$ is uniformly distributed in $\TT^2$. 
For a given affine Euclidean lattice $\scrL$ and $\ell>0$,  consider the {\em cut-and-project set}
\begin{equation}\label{PLldef}
\scrP(\scrL, l):= \bigg\{ y_1 >0 : \begin{pmatrix} y_1 \\ y_2 \end{pmatrix} \in \scrL ,\; -\frac{l}{2} < y_2 <\frac{l}{2} \bigg\} \subset \RR_{>0}.
\end{equation}

Let $(x,y,z)$ be randomly distributed according to $\mu_F$, $\vecalf_1,\ldots,\vecalf_k$ be independent and uniformly distributed in $\TT^2$,
and $\vecJ\in\scrU$ distributed according to $\lambda$.
Let $L_j(\vecJ)$ be as in \eqref{Ljdef}.
We will prove in Section \ref{MAINPROOFsec} that 
the elements of the random set
\begin{equation}\label{RANDOMSETKUNION}
\bigcup_{j=1}^k  \scrP\big(\scrL(x,y,\theta,\vecalf_j), L_j(\vecJ)\big),
\end{equation}
ordered by size, form precisely the sequence of random variables $(\tau_i)_{i=1}^\infty$ in Theorem \ref{thm:main000}. This sequence evidently only depends on the choice of target weight function $\vecL$ and the choice of $\scrU$, $\lambda$. 
Similarly, replacing $L_j(\vecJ)$ by $\tL_j(\vecJ)$ (cf.\ \eqref{tLjdef}) in \eqref{RANDOMSETKUNION},
we obtain the sequence $(\widetilde\tau_i)_{i=1}^\infty$.
Note that if $\ell_1=\ldots=\ell_k$ and $\vecu_1=\ldots=\vecu_k$, then $\tL_j(\vecJ)=1/k$,
and thus $(\widetilde\tau_i)_{i=1}^\infty$ is indeed universal as we stated below Theorem \ref{thm:main000}.

Let us describe in some more detail the distribution of the first entry times $\tau_1$ and $\widetilde\tau_1$. In the case of $k$ holes, we have
\begin{equation}
\PP ( \tau_1 > s ) = \int_\scrU F_k(s;\vecL(\vecJ)) \, \lambda(d\vecJ) , 
\end{equation}
\begin{equation}
\PP ( \widetilde \tau_1 > s ) = \int_\scrU  F_k(s;\widetilde \vecL(\vecJ)) \, \lambda(d\vecJ),
\end{equation}
with the universal function
\begin{equation}\label{Fk}
F_k(s,\vecl) = \PP\big( \scrP(\scrL(x,y,\theta,\vecalf_j),l_j)\cap(0,s]=\emptyset \text{ for all $j=1,\ldots,k$} \big) ,
\end{equation}
where $(x,y,\theta)$ is taken to be randomly distributed according to $\mu_F$ and $\vecalf_1,\ldots,\vecalf_k$ independent and uniformly distributed in $\TT^2$, and $\vecl=(l_1,\ldots,l_k)$.
It follows from the invariance properties of the underlying Haar measure (this will become clear in Section \ref{sec:general}) that for any $h>0$
\begin{equation}\label{Fkinv}
F_k\bigg(\frac{s}{h},h\vecl\bigg)= F_k(s,\vecl) . 
\end{equation}

In the case of one hole ($k=1$), the function $F_1(s):=F_1(s,1)$ appears as a limit in various other problems;
notably 
it corresponds to the distribution of free path lengths in the periodic Lorentz gas 
in the small scatterer limit \cite{Boca07,partI}.
It is explicitly given by
\begin{align}\label{eq:F1s}
F_1(s)=\begin{cases}
{\displaystyle\frac 3{\pi^2} s^2 -s+1} & (0 \leq s \leq 1);
\\[11pt]
{\displaystyle
\frac {12}{\pi^2}
( \Xi(s) -  \Xi(s/2))
+ \frac 6{\pi^2} s \log s
+ \Bigl(\frac{6+6 \log 2}{\pi^2} - 2\Bigr)s + \frac{18 \log 2}{\pi^2}}
 & (1 \leq s),
\end{cases}
\end{align}
where $\Xi(s)$ for $s>0$ is defined by
$\Xi''(s) = (1-s^{-1})^2 \log |1-s^{-1}|$ and $\Xi(1)=\Xi'(1)=0$.
In particular $F_1(s)$ has a heavy tail: One has
\begin{align}
F_1(s)=\frac2{\pi^2s}+O\Bigl(\frac1{s^2}\Bigr)
\qquad\text{as }\: s\to\infty.
\end{align}
The formula \eqref{eq:F1s} was derived in 
\cite[Sec.\ 8]{SV}; cf.\ also 
\cite[Theorem 1]{Boca07} and 
\cite{Dahlqvist97}. We are not aware of explicit formulas for the multiple-hole case $k>1$. In this case we evaluate the right hand side of \eqref{Fk} numerically using a Monte Carlo algorithm.
That is, we repeatedly generate a random tuple $(x,y,\theta,\vecalf_1,\ldots,\vecalf_k)$ as described above,
and then determine the smallest $s>0$ such that for some $j\in\{1,\ldots,k\}$
there exists a lattice point $(s,y_2)\in\scrL(x,y,\theta,\vecalf_j)$ in the strip $-l_j/2<y_2<l_j/2$.
In more detail, for given $j$,
in order to determine the left-most point in the intersection of
$\scrL(x,y,\theta,\vecalf_j)$ and the strip $\R_{>0}\times(-l_j/2,l_j/2)$,
one may proceed as follows.
Write $\scrL(x,y,\theta,\vecalf_j)=\vecbeta+\Z\vecb_1+\Z\vecb_2$
with $\vecb_1,\vecb_2$ as in \eqref{LATTICEBASIS} and $\vecbeta\in\R^2$.
After possibly interchanging $\vecb_1$ and $\vecb_2$, and then possibly negating $\vecb_1$,
we may assume that the line $\R\vecb_2$ does not coincide with the $x$-axis
and that the half plane $\R_{>0}\vecb_1+\R\vecb_2$ intersects the $x$-axis in the interval $(0,+\infty)$.
Now determine the smallest integer $m_0$ for which the line $\vecbeta+m_0\vecb_1+\R\vecb_2$ intersects the
strip $\R_{>0}\times(-l_j/2,l_j/2)$,
and then successively for $m=m_0,m_0+1,m_0+2,\ldots$,
check whether there is one or more integers $n$ for which
$\vecbeta+m\vecb_1+n\vecb_2$ lies in the strip.
Note that once this happens for the first time, say for 
$(s',y')=\vecbeta+m_1\vecb_1+n\vecb_2$,
we only need to investigate at most finitely many further $m$-values
$m=m_1+1,m_1+2,\ldots$, namely those for which the line $\vecbeta+m\vecb_1+\R\vecb_2$
intersects the box $(0,s')\times(-l_j/2,l_j/2)$.

Our calculation for $F_2(s):=F_2(s,(\frac12,\frac12))$ used $10^8$ random lattices. The result is presented in Figure \ref{Fig1}. We tested the algorithm by using it to calculate $F_1(s)$ and comparing the resulting graph with the explicit formula \eqref{eq:F1s}.

\section{Central force fields}\label{sec:central}

\begin{figure}
\centerline{\includegraphics[width=400pt]{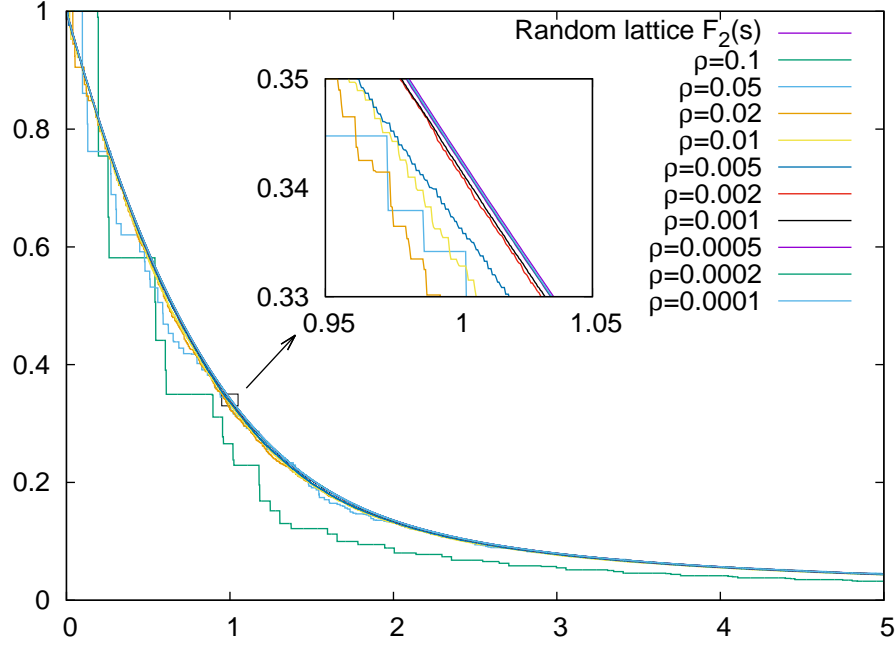}}
\vspace{-25pt}
\caption{Numerical simulations for the entry time distribution $\PP ( \widetilde \tau_1 > s )$ for the potential  $V(r)=r-1$, with different holes sizes $\rho$. We consider particles of mass $m=1$ with initial position in polar coordinates $(r_0,\phi_0)=(1,-2)$, initial velocity $v=0.3$, initial angles uniform in $[0.5,1]$ with a sample size $10^8$.
The target is located at radius $r_0$ and angle interval $[-\rho/2,\rho/2]$. The deviation from the predicted distribution $F_2(s)$ is shown in the inset. 
 \label{f:V}}
\vspace{10pt}
\end{figure} 

\begin{figure}
\centerline{\includegraphics[width=400pt]{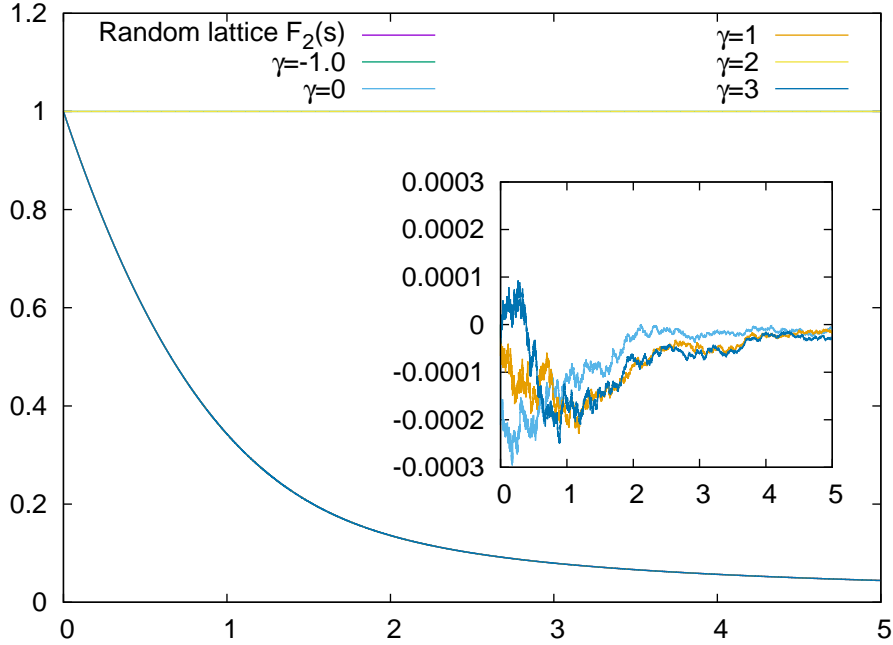}}
\vspace{-25pt}
\caption{Numerical simulations for the entry time distribution $\PP ( \widetilde \tau_1 > s )$ for the potential $V(r)=\frac{r^\gamma-1}{\gamma}$ ($\gamma\neq0$) and $V(r)=\log r$ ($\gamma=0$). The hole size is $\rho=10^{-4}$, and all other parameter values as in Fig.~\ref{f:V}. The cases $\gamma=-1,2$ correspond to the Coulomb potential and isotropic harmonic oscillator, for which the assumptions of Theorem \ref{thm:main000} are not satisfied, and indeed the hitting probability is zero for our choice of initial data. In the remaining cases the deviation from the predicted distribution $F_2(s)$ is shown in the inset. 
\label{f:V2}}
\end{figure}

The dynamics of a point particle subject to central force field in $\RR^3$ takes place in a plane perpendicular to its angular momentum, which is a constant of motion. We choose a coordinate system in which the angular momentum reads $(0,0,L)$, $L\geq 0$. The equations of motion for a particle of unit mass read in polar coordinates
\begin{equation}
\dot\phi = \frac{L}{r^2},  \qquad \dot{r} =\pm \sqrt{2[E-V(r)]-\frac{L^2}{r^2}},
\end{equation}
where $V(r)$ is the potential as a function of the distance to the origin, and $E$ the total energy. It will be convenient to set $\vecJ=(E,L)$, although this choice does {\em not} represent the canonical action variables in this problem. The equations of motion separate, and the dynamics in $r$ is described by a one-dimensional Hamiltonian with effective potential $V(r)+\frac{L^2}{2r^2}$. For a given initial $r_0=r_0(\vecJ)$, the dynamics takes place between the {\em periastron} $r_-=r_-(\vecJ)\leq r_0(\vecJ)$ and the {\em apastron} $r_+=r_+(\vecJ)\geq r_0(\vecJ)$, the minimal/maximal distance to the origin of the particle trajectory with energy $E$ and angular momentum $L$. We will consider cases when the motion is bounded, i.e., $0<r_-\leq r_+<\infty$. Then these values are the turning points of the particle motion, and thus solutions to $V(r)+\frac{L^2}{2r^2}=E$. 
The solution of the equations of motion $(r(t),\phi(t))$ with $(r(0),\phi(0))=(r_0,\phi_0)$ and initial radial velocity $\dot r(0)\geq 0$ is either circular with $\dot r(t)=0$ for all $t$, or otherwise implicitly given by 
\begin{equation}\label{timet}
t = 
\begin{cases}
\displaystyle
\int_{r_0}^{r(t)} \frac{dr'}{\sqrt{2[E-V(r')]-\frac{L^2}{{r'}^2}}}  + n T & (\dot r(t) \geq 0) 
\\[20pt]
\displaystyle
\bigg(\int_{r_0}^{r_+} + \int_{r(t)}^{r_+}\bigg) \frac{dr'}{\sqrt{2[E-V(r')]-\frac{L^2}{{r'}^2}}}  
+ n T & (\dot r(t) \leq 0) ,
\end{cases}
\end{equation}
where $n$ is an arbitrary integer.
The period is 
\begin{equation}\label{Tdef}
T =T(\vecJ) = 2 \int_{r_-(\vecJ)}^{r_+(\vecJ)} \frac{dr}{\sqrt{2[E-V(r)]-\frac{L^2}{r^2}}} .
\end{equation}
Also 
\begin{equation}\label{phit}
\phi(t) = 
\begin{cases}
\displaystyle
\phi_0+\int_{r_0}^{r(t)} \frac{\frac{L}{{r'}^2}\, dr'}{\sqrt{2[E-V(r')]-\frac{L^2}{{r'}^2}}}  + n \alpha & (\dot r(t) \geq 0) 
\\[20pt]
\displaystyle
\phi_0+\bigg(\int_{r_0}^{r_+} + \int_{r(t)}^{r_+}\bigg) \frac{\frac{L}{{r'}^2}\, dr'}{\sqrt{2[E-V(r')]-\frac{L^2}{{r'}^2}}}  
+ n \alpha & (\dot r(t) \leq 0) ,
\end{cases}
\end{equation}
with rotation angle 
\begin{equation}\label{alphaJdef}
\alpha =\alpha(\vecJ) = 2 \int_{r_-(\vecJ)}^{r_+(\vecJ)} \frac{\frac{L}{r^2}\, dr}{\sqrt{2[E-V(r)]-\frac{L^2}{r^2}}}.
\end{equation}

The dynamics is described best by first considering the return map to the cross section defined by 
restricting the radial variable to $r=r_0$ with non-negative radial velocity $\dot r \geq 0$; here $r_0=r_0(\vecJ)$ is permitted to depend on $\vecJ$. This cross section is thus simply parametrized by $\phi \in \RR/2\pi\ZZ$. The corresponding return map is 
\begin{equation}\label{mapp}
\phi \mapsto \phi +\alpha(\vecJ) \bmod 2\pi ,
\end{equation}
with rotation angle $\alpha(\vecJ)$ as in \eqref{alphaJdef}, and return time $T(\vecJ)$ as in \eqref{Tdef}.
We turn the map \eqref{mapp} into a flow of the form \eqref{eq:flow} by considering its suspension flow
\begin{equation}
\varphi^t : \TT^2\times\scrU \to \TT^2\times\scrU,\quad (\vectheta,\vecJ ) \mapsto \bigg(\vectheta+ \frac{t}{T(\vecJ)} \begin{pmatrix} 1 \\ \frac{\alpha(\vecJ)}{2\pi}\end{pmatrix},\vecJ\bigg) .
\end{equation}
A comparison with \eqref{eq:flow} yields 
\begin{equation}
\vecf(\vecJ) = T(\vecJ)^{-1} \begin{pmatrix} 1 \\ \frac{\alpha(\vecJ)}{2\pi}\end{pmatrix} .
\end{equation} 
As to the hypotheses of Theorem \ref{thm:main000},  we see that a Borel probability measure $\lambda$ on
$\scrU$ is {\em $\vecf$-regular} if
the push-forward of $\lambda$ by the map
\begin{equation}\label{f1def}
\scrU \to\RR, \qquad \vecJ \mapsto \alpha(\vecJ),
\end{equation}
is absolutely continuous with respect to Lebesgue measure on $\RR$.
Note that although this condition can hold for most potentials $V$, it fails for the Coulomb potential and the isotropic harmonic oscillator, where every orbit is closed.

A natural choice of target set in polar coordinates is 
\begin{equation}\label{target}
\{ (r,\phi) \mid r=r_0(\vecJ),\; -\pi \hole < \phi < \pi\hole \} ,
\end{equation}
with no restriction on the sign of the radial velocity $\dot r$. We distinguish two cases:

(I) If $r_0(\vecJ)=r_+(\vecJ)$ or $r_0(\vecJ)=r_-(\vecJ)$, the target set is of the form \eqref{target00}, where
\begin{equation}
\scrD_\hole^{(1)} = \scrD\bigg(\vecu_1,\vecphi_1,\hole\bigg) ,
\quad \vecu_1=\begin{pmatrix} 1 \\ 0 \end{pmatrix}, \quad \vecphi_1=\begin{pmatrix} 0 \\ 0 \end{pmatrix}.
\end{equation}
In this simple setting $\vecphi_1=\vecnull$ is $(\vectheta,\lambda)$-generic if (recall \eqref{hyp3expl0}) 
\begin{equation}\label{hyp3explBBB}
\lambda\bigg(\bigg\{ \vecJ\in\scrU \col \vectheta(\vecJ) \in \RR \begin{pmatrix} 1 \\ \frac{\alpha(\vecJ)}{2\pi}\end{pmatrix} + \QQ^2 \bigg\}\bigg) = 0.
\end{equation}

(II) If $r_-(\vecJ) <r_0(\vecJ)< r_+(\vecJ)$, then the particle attains the value $r=r_0(\vecJ)$ with radial velocity $\dot r<0$ before returning to the section $(r_0,\dot r>0)$. The traversed angle is
\begin{equation}
\alpha_*(\vecJ) = 2 \int_{r_0(\vecJ)}^{r_+(\vecJ)} \frac{\frac{L}{r^2}\, dr}{\sqrt{2[E-V(r)]-\frac{L^2}{r^2}}},
\end{equation}
and the corresponding travel time is
\begin{equation}
T_*(\vecJ) = 2 \int_{r_0(\vecJ)}^{r_+(\vecJ)} \frac{dr}{\sqrt{2[E-V(r)]-\frac{L^2}{r^2}}} .
\end{equation}
The target set \eqref{target} has therefore the following angle-action representation, recall \eqref{target00}:
\begin{equation}
\scrD_\hole^{(2)} = \bigcup_{j=1}^2  \scrD(\vecu_j,\vecphi_j,\hole) ,
\end{equation}
with identical orientation
\begin{equation}
\vecu_1(\vecJ)=\vecu_2(\vecJ)=\begin{pmatrix} 1 \\ 0 \end{pmatrix}, 
\end{equation}
located at
\begin{equation}
\vecphi_1(\vecJ)= \begin{pmatrix} 0 \\ 0 \end{pmatrix},\qquad 
\vecphi_2(\vecJ)= \frac{T_*(\vecJ)}{T(\vecJ)} \begin{pmatrix} 1 \\ \frac{\alpha(\vecJ)}{2\pi}\end{pmatrix} - \begin{pmatrix} 0\\  \frac{\alpha_*(\vecJ)}{2\pi} \end{pmatrix} .
\end{equation}
Here the target location is $(\vectheta,\lambda)$-generic if for all $(m_1',m_2')\in\ZZ^2\setminus\{\vecnull\}$
\begin{equation}\label{hyp3explBBB1}
\lambda\bigg(\bigg\{ \vecJ\in\scrU \col m_1' \vectheta(\vecJ) + m_2' \, 
\begin{pmatrix}0\\  \frac{\alpha_*(\vecJ)}{2\pi} \end{pmatrix}
 \in \RR \begin{pmatrix} 1 \\ \frac{\alpha(\vecJ)}{2\pi}\end{pmatrix} + \QQ^2 \bigg\}\bigg) = 0
\end{equation}
(indeed, set $(m_1,m_2)=(m_2'-m_1',-m_2')$ in \eqref{hyp3expl0}).

For our numerical simulations of the first entry time, the relevant parameters used were as follows. The potential is
\begin{equation}
V(r)=\begin{cases}
\frac{r^\gamma-1}{\gamma}&(\gamma\neq 0)\\
\ln(r)& (\gamma=0),
\end{cases}
\end{equation}
where $\gamma\in\RR$, $\gamma>-2$.
The particle mass is $m=1$, initial position in polar coordinates $(r_0,\phi_0)=(1,-2)$, initial velocity $0.3$ with directions uniform in $[0.5,1]\subset[0,2\pi]$ (the sample size is $10^8$); the target is the angular interval $[-\rho/2,\rho/2]$ located at radius $r_0=1$. 
Fig.~\ref{f:V}
displays the results of computations with several values of $\rho$ and fixed $\gamma=1$, and Fig.~\ref{f:V2} the corresponding results for fixed $\rho=10^{-4}$ and various values of $\gamma$.

\section{Integrable billiards}\label{sec:billiards}

\begin{figure}
\centerline{\includegraphics[width=400pt]{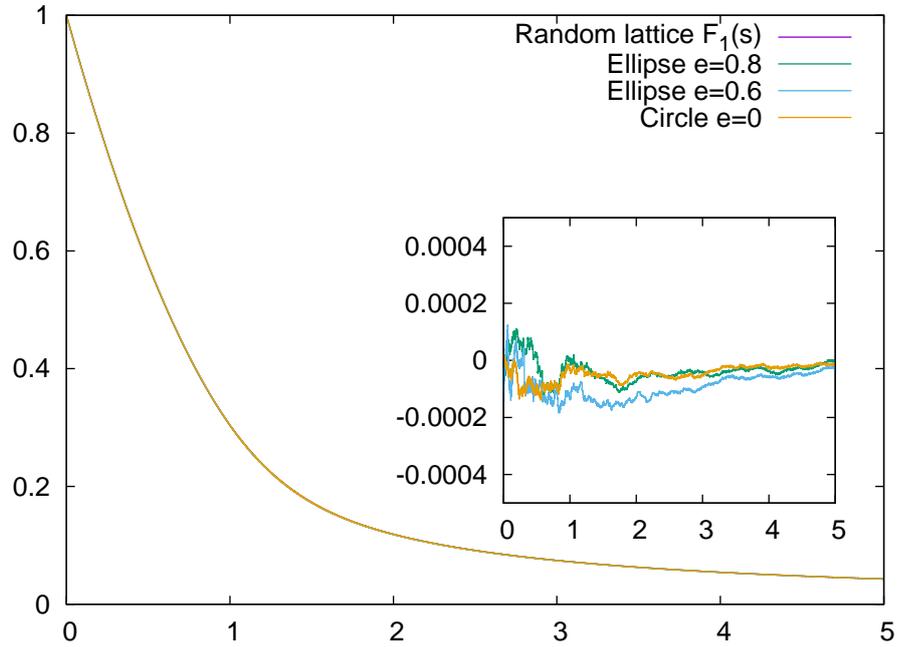}}
\centerline{\includegraphics[width=400pt]{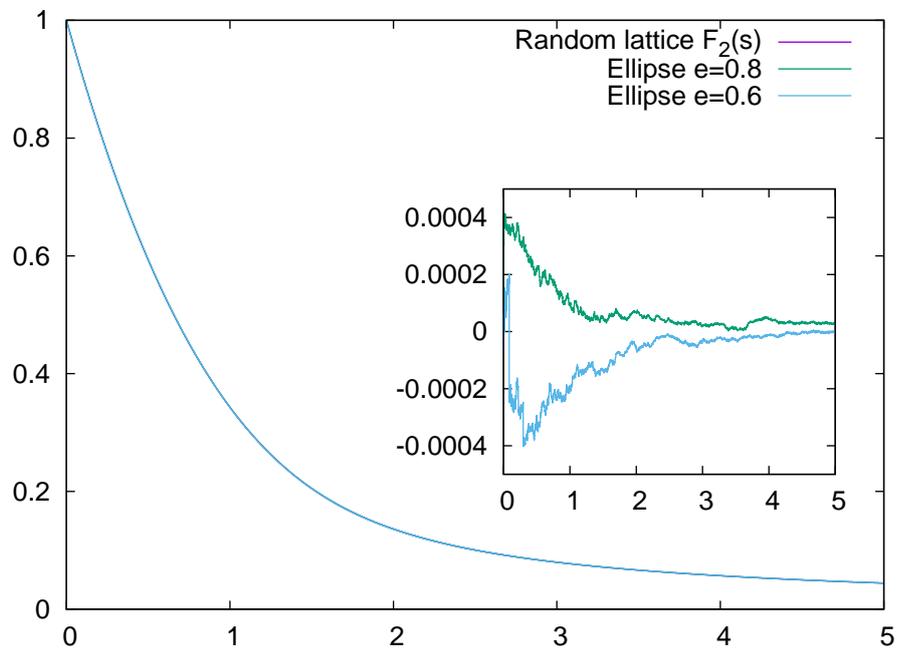}}
\caption{Numerical simulations confirming that the entry time distribution $\PP ( \widetilde \tau_1 > s )$ for an arbitrary ellipse scales to the expected universal functions for initial conditions with $\ve<1$ (upper panel) and $\ve>1$ (lower panel).  The inset panels highlight the difference between the ellipse simulations and theoretical predictions $F_1(s)$ resp.\ $F_2(s)$. The choice of initial data and target set is specified at the end of Section \ref{sec:billiards}.
\label{f:f1f2}}
\end{figure}

The dynamics of a point particle in a billiard is integrable if there is a coordinate system in which the Hamilton-Jacobi equation separates.  All known examples in two dimensions involve either very particular polygonal billiards, whose dynamics unfolds to a linear flow on a torus, or billiards whose boundaries are aligned with elliptical coordinate lines (or the degenerate cases of circular or parabolic coordinates).  While many configurations can be constructed from arcs of confocal ellipses and hyperbolas, the most natural and studied is the ellipse billiard itself, of which the circle is a special case.  Scaling of escape from a circular billiard with a single small hole to a universal function of the product of hole size and time was observed in Fig.~3 of \cite{BD05}. We will here consider billiards in general ellipses, where the target set is a sub-interval of the boundary. Action-angle coordinates for the billiard flow have been described in the literature, for example in \cite{SMOD77}. For our purposes it will be simpler to formulate the dynamics in terms of the billiard map, which is the return map of the billiard flow to the boundary; see \cite{Tab05} for a detailed discussion. The billiard domain is confined by the ellipse
$\{ (b\cos\phi,a\sin\phi) \mid \phi\in[0,2\pi) \}$
with semi-axes $a\geq b$, eccentricity $e=\sqrt{1-b^2/a^2}$ and foci $(0,\pm ae)$.
The billiard dynamics conserves the kinetic energy $E=\frac12\|\vecxi\|^2$ (where $\vecxi$ denotes the particle's momentum) and the product $L_+L_-$ of angular momenta $L_\pm=x_1\xi_2-(x_2\mp ae)\xi_1$ about the foci. Note that a change in energy $E>0$ only affects the speed of the billiard particle but not its trajectory, and we will fix $E=\frac12$ in the following without loss of generality.

Each segment of the trajectory is tangent to a caustic given by a confocal conic of eccentricity
\begin{equation}\label{e:ve}
\ve=\sqrt{\frac{a^2e^2}{a^2e^2+L_+L_-}}\in(e,\infty)
\end{equation}
For $\ve<1$ we have elliptic caustics, where the orbit rotates around the foci. For $\ve=1$ we have the separatrix, where the orbit passes through the foci; this has zero probability with respect to an absolutely continuous distribution of initial conditions. For $\ve>1$ we have hyperbolic caustics, and the orbit passes between the foci.  Solving Eq.~(\ref{e:ve}) for $\vecxi$ gives two solutions, which for $\ve<1$ correspond to the direction of rotation of the orbit, and for $\ve>1$ are both contained in the closure of a single aperiodic orbit.

Following \cite{Tab05} in our notation, we parametrize the billiard boundary by the new parameter $\theta\in\TT$ defined by
\begin{equation}
\theta = \begin{cases}
\frac{F(\phi,\ve)}{F(2\pi,\ve)} \mod1 & (\ve<1)\\
\frac{F(\arcsin(\ve\sin\phi),\ve^{-1})}{F(2\pi,\ve^{-1})} \mod1 & (\ve>1),\end{cases} 
\end{equation}
where  $F$ is the elliptic integral of the first kind~\cite{OLBC10}
\begin{equation}
F(\phi,k)=\int_0^\phi\frac{d t}{\sqrt{1-k^2\sin^2t}}.
\end{equation}
The choice of branch for the $\arcsin$ (for $\ve>1$) depends on the choice of solution for $\vecxi$ in \eqref{e:ve}.
The billiard map reads in these new coordinates
\begin{equation}
\TT\to\TT,\qquad \theta\mapsto \theta+f(\ve) \mod1
\end{equation}
where
\begin{equation}
f(\ve)=\begin{cases}
\pm 2\frac{F\left(\arccos\sqrt{\frac{e^2(1-\ve^2)}{\ve^2(1-e^2)}},\ve\right)}{F(2\pi,\ve)}&(\ve<1)\\
2\frac{F\left(\arccos\sqrt{\frac{e^2(\ve^2-1)}{\ve^2-e^2}},\ve^{-1}\right)}{F(2\pi,\ve^{-1})}&(\ve>1).
\end{cases}
\end{equation}
Here, the $\pm$ (for $\ve<1$) again depends on the choice of solution for $\vecxi$ in \eqref{e:ve}.
The time between collisions with the boundary, averaged over the equilibrium measure associated with $\ve$, is given by
\begin{equation}
\bar{l}=\left\{\begin{array}{cc}\frac{2b\sqrt{1-e^2/\ve^2}\Pi(e^2,\ve)}{K(\ve)}&(\ve<1)\\
\frac{2b\sqrt{1-e^2/\ve^2}\Pi(e^2/\ve^2,\ve^{-1})}{K(\ve^{-1})}&(\ve>1)\end{array}\right.
\end{equation}
where $K(\ve)=F(\frac{\pi}{2},\ve)=\frac{1}{4}F(2\pi,\ve)$ and
\begin{equation}
\Pi(\alpha^2,k)=\int_0^{\frac{\pi}{2}}\frac{dt}{(1-\alpha^2\sin^2t) \sqrt{1-k^2\sin^2t}}
\end{equation}
are complete elliptic integrals of the first and third kind respectively~\cite{OLBC10}.  Even when $f(\ve)$ is rational,
hence the orbit is periodic (a set of zero measure of initial conditions), the mean collision time is independent
of the starting point, and hence given by the above formula~\cite{CCS93}.

We consider a single target set in the billiard's boundary given by the interval $\phi_0-\frac{\hole}{2}<\phi<\phi_0+\frac{\hole}{2}$.
If $\ve>1$, we assume the target intersects the region covered by the orbit, i.e., $\ve\sin\phi_0<1$.
In this case a single target in $\phi$ corresponds to two equal-sized targets in $\theta$ located at $\theta_0=\theta_0^{(1)}$ and $\theta_0^{(2)}$ (which are functions of $\phi_0$ and $\ve$). 
If $\ve< 1$, a single target in $\phi$ corresponds to a single target in $\theta$.  

For $\phi=\phi_0+s$ with $|s|$ small and $\theta$ (respectively $\theta_0$) the value defined by (5.2) for $\phi$ (respectively $\phi_0$),
\begin{equation}
\theta=\theta_0+s\left\{\begin{array}{cc}\frac{1}{F(2\pi,\varepsilon)\sqrt{1-\varepsilon^2\sin^2\phi_0}}&(\varepsilon<1)\\
\frac{\varepsilon}{F(2\pi,\varepsilon^{-1})\sqrt{1-\varepsilon^2\sin^2\phi_0}}&(\varepsilon>1)\end{array}\right\}+O(s^2).
\end{equation}
Up to a small error, which is negligible when $\rho\to0$, the target becomes the interval $\theta_0-\frac{\rho \ell}{2}<\theta<\theta_0+\frac{\rho \ell}{2}$ where
\begin{equation}
\ell=\ell(\varepsilon)=\left\{\begin{array}{cc}\frac{1}{F(2\pi,\varepsilon)\sqrt{1-\varepsilon^2\sin^2\phi_0}}&(\varepsilon<1)\\
\frac{\varepsilon}{F(2\pi,\varepsilon^{-1})\sqrt{1-\varepsilon^2\sin^2\phi_0}}&(\varepsilon>1)\end{array}\right.
.
\end{equation}

%
%

The circle is a special case, with $e=0$ and hence $\ve=0$.  The constant of motion is the angular momentum about the centre, $L=x_1\xi_2-x_2\xi_1$.  In this case
\begin{equation}
\theta=\frac{\phi}{2\pi},\quad f(0)=\pm\frac{1}{\pi}\arccos\frac{L}{a},\quad \ell=\frac{1}{2\pi},\quad \bar{l}=2\sqrt{a^2-L^2},
\end{equation}
which is consistent with the above expressions for ellipses in the limit $e\to 0$. For ellipses of small eccentricity, this approach gives a systematic expansion in powers of $e^2$.

Finally, we have for the mean return time \eqref{mean01}
\begin{equation}
\overline \sigma^{(k)} (\ve) = 
\begin{cases}
\frac{\bar{l}}{\ell(\ve)} & (\ve<1,\: \text{i.e.\ } k=1) \\[5pt]
\frac{\bar{l}}{2\ell(\ve)} & (\ve>1,\: \text{i.e.\ } k=2) .
\end{cases}
\end{equation}

For our numerical simulations of the first entry time, the relevant parameters used were as follows: $a=10$, $b\in\{6,8,10\}$ corresponding to $e\in\{0.8,0.6,0\}$
respectively.  The target was $2.8-5\times 10^{-5}<\phi<2.8+5\times 10^{-5}$, i.e.\ $\phi_0=2.8$ and $\rho=10^{-4}$.  The entry time distribution $\PP ( \widetilde \tau_1 > s )$ for the actual billiard flow was sampled by taking a {\em fixed} initial point $\vecx=(3,7)$ inside the ellipse, and $10^8$ initial directions $\vecxi\in\S^1$ chosen randomly with uniform angular distribution in the intervals $[2,2.6]$ or $[3.8,4.4]$ for the hyperbolic or elliptic
caustics, respectively. All the numerical curves are shown in Fig.~\ref{f:f1f2} and are identical within numerical errors too small to see on the plot; differences between the ellipse calculations and the theoretical predictions from Theorem \ref{thm:main000} are shown in the inset panels.

\section{Integrable flows in arbitrary dimension}\label{sec:general}

We now state the generalization of Theorem \ref{thm:main000} to arbitrary dimension $d\geq2$.
The basic setting is just as in Section \ref{sec:Integrable}, but with $\TT^d$ in place of $\TT^2$:
Let $\scrU$ be a bounded open subset of $\R^m$ for some $m\in\Z^+$, and let 
$\vecf:\scrU\to\R^d$ be a smooth function.
We consider the flow
\begin{equation}\label{eq:flowgen}
\varphi^t: \TT^d\times\scrU \to \TT^d\times\scrU, \quad (\vectheta,\vecJ) \mapsto  (\vectheta + t\, \vecf(\vecJ),\vecJ) .
\end{equation}
Let $\lambda$ be an absolutely continuous Borel probability measure on $\scrU$,
and let $\vectheta$ be a smooth map from $\scrU$ to $\TT^d$.
We will consider the random initial data $(\vectheta(\vecJ),\vecJ)$ in $\TT^d\times\scrU$,
where $\vecJ$ is a random point in $\scrU$ distributed according $\lambda$.

We next define the target sets.
Let us fix a map $\vecv\mapsto R_\vecv$, $\US\to\SO(d)$,
such that $R_\vecv \vecv=\vece_1$ for all $\vecv\in\US$,
and such that $\vecv\mapsto R_\vecv$ is smooth throughout $\US\setminus\{\vecv_0\}$,
where $\vecv_0$ is a fixed point in $\US$.
Fix $k\in\Z^+$ and for each $j=1,\ldots,k$, 
fix smooth functions $\vecu_j:\scrU\to\US$, $\vecphi_j:\scrU\to\TT^d$
and a bounded open subset $\Omega_j\subset\R^{d-1}\times\scrU$.
Set
\begin{align}
\scrD_\rho=\scrD_\rho^{(k)}:=\bigcup_{j=1}^k  \scrD_\rho(\vecu_j,\vecphi_j,\Omega_j) ,
\end{align}
where
\begin{align}
\scrD_\rho(\vecu,\vecphi,\Omega) := \biggl\{
\biggl(\vecphi(\vecJ)+\rho R_{\vecu(\vecJ)}^{-1}\cmatr{0}{\vecx},\,\vecJ\biggr)\in\TT^d\times\scrU
\:\bigg|\:(\vecx,\vecJ)\in\Omega_j\biggr\}.
\end{align}
Here we use the convention
\begin{equation}
\cmatr{0}{\vecx}:=\begin{pmatrix} 0 \\ x_1 \\ \vdots \\x_{d-1}\end{pmatrix} \in\R^d \quad\text{when}\quad 
\vecx=\begin{pmatrix}  x_1 \\ \vdots \\x_{d-1} \end{pmatrix} .
\end{equation}
Note that all points $R_{\vecu(\vecJ)}^{-1}\cmatr{0}{\vecx}$ lie in the linear subspace orthogonal to $\vecu(\vecJ)$ in $\R^d$.
We write $\Omega_j(\vecJ):=\{\vecx\in\R^{d-1}\col (\vecx,\vecJ)\in\Omega_j\}$,
and assume $\Omega_j(\vecJ)\neq\emptyset$ for all $\vecJ\in\scrU$.
As in Section \ref{sec:Integrable} we also impose the 
condition $\vecu_j(\vecJ)\cdot\vecf(\vecJ)>0$ for all $j\in\{1,\ldots,k\}$ and $\vecJ\in\scrU$,
which implies that each sub-target $\scrD_\rho(\vecu_j,\vecphi_j,\Omega_j)$  
is transversal to the flow direction.
Note that the target set $\scrD_\rho^{(k)}$ defined here generalizes the one introduced in 
Section \ref{sec:Integrable}.
Indeed, for $d=2$, and given smooth functions $\vecu_j:\scrU\to\S^1$, $\vecphi_j:\scrU\to\TT^2$, and 
$\ell_j:\scrU\to\RR_{>0}$ ($j=1,\ldots,k$),
we recover the target set in \eqref{target00} 
as $\bigcup_{j=1}^k  \scrD_\rho(\vecu_j,\vecphi_j,\Omega_j)$
where $\Omega_j=\{(s,\vecJ)\col\vecJ\in\scrU,\:-\frac12\ell_j(\vecJ)<s<\frac12\ell_j(\vecJ)\}$.

For any initial condition $(\vectheta,\vecJ)$,
let $\scrT(\vectheta,\vecJ,\scrD^{(k)}_\rho)$ be the set of hitting times, as in \eqref{scrTdef}.
This is a discrete subset of $\R_{>0}$, and we label its elements 
\begin{align}
0<t_1(\vectheta,\vecJ,\scrD^{(k)}_\rho)<t_2(\vectheta,\vecJ,\scrD^{(k)}_\rho)<\ldots .
\end{align}
Again by Santalo's formula, for any fixed $\vecJ\in\scrU$ such that the components of $\vecf(\vecJ)$ are not rationally related,
the first return time to $\scrD_\rho$ on the leaf $\TT^d\times\{\vecJ\}$
satisfies the formula
\begin{align}
\int_{\scrD_\rho}t_1(\vectheta,\vecJ,\scrD_\rho)\,d\nu_\vecJ(\vectheta)=1,
\end{align}
where $\nu_\vecJ$ is the invariant 
measure on $\scrD_\rho$ obtained by disintegrating Lebesgue measure on $\TT^d\times\{\vecJ\}$ 
with respect to the section $\scrD_\rho$ of the flow $\varphi^t$; explicitly
\begin{align}
\int_{\scrD_\rho} g\,d\nu_\vecJ=
\sum_{j=1}^k \bigl(\vecu_j(\vecJ)\cdot\vecf(\vecJ)\bigr) \int_{\rho\Omega_j(\vecJ)}
g \biggl(\vecphi_j(\vecJ)+R_{\vecu_j(\vecJ)}^{-1}\cmatr0{\vecx},\,\vecJ\biggr)\,d\vecx,
\qquad\forall g \in\C(\scrD_\rho).
\end{align}
It follows that the mean return time with respect to $\nu_\vecJ$ equals
\begin{align}\label{osigmakformula}
\frac{\osigma^{(k)}(\vecJ)}{\rho^{d-1}},\qquad\text{where }\hspace{10pt}
\overline \sigma^{(k)} (\vecJ):= \frac1{\sum_{j=1}^k\Leb(\Omega_j(\vecJ))\,\vecu_j(\vecJ)\cdot\vecf(\vecJ)},
\end{align}
with $\Leb$ denoting Lebesgue measure on $\R^{d-1}$.
If we also average over $\vecJ$ with respect to the measure $\lambda$
(assuming that the pushforward of $\lambda$ by $\vecf$ has no atoms at points with rationally related coordinates),
the mean return time becomes
\begin{align}
\frac{\osigma^{(k)}_\lambda}{\rho^{d-1}},\qquad\text{where }\hspace{10pt}
\overline \sigma^{(k)}_\lambda:=  \int_\scrU \overline \sigma^{(k)} (\vecJ) \lambda(d\vecJ).
\end{align}
As in Section \ref{sec:Integrable}, for $\vecJ$ a random point in $\scrU$ distributed according $\lambda$,
the hitting times $t_n(\vectheta(\vecJ),\vecJ,\scrD_\rho^{(k)})$ become random variables,
which we denote by $\tau_{n,\rho}^{(k)}$;
also $\osigma^{(k)}(\vecJ)$ becomes a random variable, which we denote by $\osigma^{(k)}$.
We say that $\lambda$ is {\em $\vecf$-regular} if
the pushforward of $\lambda$ under the map
\begin{equation}\label{f1defGEN}
\scrU \to\US, \qquad \vecJ \mapsto \frac{\vecf(\vecJ)}{\|\vecf(\vecJ)\|},
\end{equation}
is absolutely continuous with respect to Lebesgue measure on $\US$,
and we say the $k$-tuple of smooth functions $\vecphi_1,\ldots,\vecphi_k:\scrU \to\TT^d$ is 
\textit{$(\vectheta,\lambda)$-generic,} if
for all $\vecm=(m_1,\ldots,m_k)\in\ZZ^k\setminus\{\vecnull\}$ we have
\begin{equation}\label{thlmbgenDEF}
\lambda\bigg(\bigg\{ \vecJ\in\scrU : \sum_{j=1}^k m_j \, \big(\vecphi_j(\vecJ)
-\vectheta(\vecJ)\big) \in \RR \vecf(\vecJ) + \QQ^d \bigg\}\bigg) = 0.
\end{equation}

The following theorem generalizes Theorem \ref{thm:main000} to arbitrary dimension $d\geq2$.
\begin{thm}\label{thm:main001}
Let $\vecf:\scrU\to \RR^d$ and $\vectheta:\scrU\to\TT^d$ be smooth maps, 
$\lambda$ an absolutely continuous Borel probability measure on $\scrU$,
and for $j=1,\ldots,k$, let
$\vecu_j:\scrU \to\US$ and $\vecphi_j:\scrU \to\TT^d$ be smooth maps and 
$\Omega_j$ a bounded open subset of $\R^{d-1}\times\scrU$.
For each $j=1,\ldots,k$, assume that 
\begin{enumerate}[{\rm (i)}]
\item $\lambda(\vecu_j^{-1}(\{\vecv_0)\}))=0$ (where by assumption $\vecv_0$ is the point in $\US$ such that
$\vecv\mapsto R_\vecv$ is smooth throughout $\US\setminus\{\vecv_0\}$),
\item $\vecu_j(\vecJ)\cdot\vecf(\vecJ)>0$ for all $\vecJ\in\scrU$,
\item $\Omega_j$ has boundary of measure zero with respect to $\Leb\times\lambda$, 
\item $\Leb(\Omega_j(\vecJ))$ is a smooth and positive function of $\vecJ\in\scrU$.
\end{enumerate}
Also assume that $\lambda$ is $\vecf$-regular
and $(\vecphi_1,\ldots,\vecphi_k)$ is $(\vectheta,\lambda)$-generic.
Then there are sequences of random variables $(\tau_i)_{i=1}^\infty$ and $(\widetilde\tau_i)_{i=1}^\infty$ in $\RR_{>0}$ such that in the limit $\hole\to 0$, for every integer $N$,
\begin{equation}\label{thm:main001res1}
\bigg( \frac{\hole^{d-1} \tau_{1,\hole}^{(k)}}{\overline \sigma_\lambda^{(k)}},\ldots,\frac{\hole^{d-1} \tau_{N,\hole}^{(k)}}{\overline \sigma_\lambda^{(k)}} \bigg) \todist (\tau_1,\ldots,\tau_N),
\end{equation}
and
\begin{equation}\label{thm:main001res2}
\bigg( \frac{\hole^{d-1} \tau_{1,\hole}^{(k)}}{\overline \sigma^{(k)}},\ldots,\frac{\hole^{d-1} \tau_{N,\hole}^{(k)}}{\overline \sigma^{(k)}} \bigg) \todist (\widetilde\tau_1,\ldots,\widetilde\tau_N).
\end{equation}
\end{thm}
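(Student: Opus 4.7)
The plan is to generalize to arbitrary $d\geq2$ the approach behind Theorem \ref{thm:main000}, itself based on our earlier work \cite{partI} for linear flows on tori. The idea is to encode the hitting times as first coordinates of points of a rescaled affine unimodular lattice lying in a bounded cylinder, and then deduce the limit laws from an equidistribution theorem in the space of affine lattices (Section \ref{HOMDYNsec}), which is obtained via Ratner's measure classification. To set this up, fix $\vecJ\in\scrU$ and $j\in\{1,\ldots,k\}$. The condition $\varphi^t(\vectheta(\vecJ),\vecJ)\in\scrD_\rho(\vecu_j,\vecphi_j,\Omega_j)$ translates, after applying the rotation $R_{\vecu_j(\vecJ)}$ and using that the first component of $R_{\vecu_j(\vecJ)}\vecf(\vecJ)$ equals $\vecu_j(\vecJ)\cdot\vecf(\vecJ)>0$, into the statement that for some $\vecm\in\Z^d$ the vector $R_{\vecu_j(\vecJ)}\bigl(t\vecf(\vecJ)+\vectheta(\vecJ)-\vecphi_j(\vecJ)-\vecm\bigr)$ has vanishing first coordinate and remaining coordinates in $\rho\Omega_j(\vecJ)$. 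Solving for $t$ and rescaling $t\mapsto\rho^{d-1}t$ identifies the rescaled hitting-time set on leaf $\vecJ$ with the positive first coordinates of the affine unimodular lattice
\begin{equation}
\scrL_j(\vecJ,\rho):=D_\rho\,\Phi_j(\vecJ)\bigl(\Z^d+\vecalf_j(\vecJ)\bigr),\qquad D_\rho:=\diag\bigl(\rho^{-(d-1)},\rho,\ldots,\rho\bigr),
\end{equation}
intersected with $\R\times\Omega_j(\vecJ)$, where $\Phi_j\col\scrU\to\SLR$ and $\vecalf_j\col\scrU\to\TT^d$ are explicit smooth maps.

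Next I will analyse the joint distribution of the $k$-tuple $\bigl(D_\rho\Phi_j(\vecJ),\,\vecalf_j(\vecJ)\bigr)_{j=1}^k$ in $\Gamma\backslash G$, where $G=\SLR\ltimes(\R^d)^k$ and $\Gamma=\SLZ\ltimes(\Z^d)^k$. Since $D_\rho$ escapes to infinity in $\SLR$ as $\rho\to0$, this is an equidistribution question for translates of a smooth finite-dimensional family by the diagonal one-parameter semigroup $\{D_\rho\}$. The hypotheses of Theorem \ref{thm:main001} are designed to match the nondegeneracy conditions required by the equidistribution theorem of Section \ref{HOMDYNsec}: $\vecf$-regularity of $\lambda$ controls the direction of $\vecf(\vecJ)$, hence of $\vecu_j$ and $R_{\vecu_j}$, so that the image curves $\vecJ\mapsto\Phi_j(\vecJ)$ meet the expanding horospherical direction of $\{D_\rho\}$ transversally; $(\vectheta,\lambda)$-genericity of $(\vecphi_1,\ldots,\vecphi_k)$ prevents the affine shifts $(\vecalf_1,\ldots,\vecalf_k)$ from being trapped in a proper $\Q$-rational subvariety of $(\TT^d)^k$; and conditions (i)--(iv) together with Sard's theorem supply the required smoothness. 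Under these conditions, the pushforward of $\lambda$ by the map $\vecJ\mapsto\bigl(D_\rho\Phi_j(\vecJ),\,\vecalf_j(\vecJ)\bigr)_{j=1}^k\Gamma$ converges weakly as $\rho\to0$ to the Haar probability measure on $\Gamma\backslash G$, i.e.\ the product of the $\SLR$-Haar measure with Lebesgue on $(\TT^d)^k$.

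Finally I will conclude by a continuous mapping argument. Conditional on $\vecJ$, the list of the first $N$ positive first coordinates of $\bigcup_{j=1}^k \scrL_j\cap(\R\times\Omega_j(\vecJ))$ is a $\PP$-a.s.\ continuous functional of $(\scrL_j)_{j=1}^k$ in the space of $k$-tuples of affine unimodular lattices, where continuity relies on assumption (iii) on $\partial\Omega_j$ and the positivity in (iv). Combined with the equidistribution just obtained, this yields the joint convergence in distribution of $\bigl(\rho^{d-1}\tau_{n,\rho}^{(k)}/\osigma^{(k)}\bigr)_{n=1}^N$ to $(\widetilde\tau_n)_{n=1}^N$, where $(\widetilde\tau_n)$ is the ordered sequence of positive first coordinates of points of a Haar-random $k$-tuple of affine unimodular lattices in $\Gamma\backslash G$ intersected with cylinders whose cross-sections have Lebesgue measures equal to the components of $\tvecL(\vecJ)$, with $\vecJ\sim\lambda$ drawn independently. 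The normalization in \eqref{thm:main001res1} differs only by the deterministic factor $\osigma^{(k)}(\vecJ)/\osigma_\lambda^{(k)}$, which replaces the weights $\tvecL$ by $\vecL$ and yields the variables $(\tau_n)$. The hard part will be the equidistribution step of the middle paragraph, in particular establishing it uniformly across the $k$ targets and matching the algebraic nondegeneracy conditions of Ratner's theorem with the geometric hypotheses ($\vecf$-regularity and $(\vectheta,\lambda)$-genericity) placed on the system data.
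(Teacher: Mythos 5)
Your high-level strategy is the right one and matches the paper's: reduce hitting times on each leaf to lattice-point counts in shrinking cylinders, normalize by a diagonal expansion, invoke the Ratner-based equidistribution result in $X=G/\Gamma$ with $G=\SL(d,\R)\ltimes(\R^d)^k$ (via Theorem~\ref{KEYEQUIDISTRTHM2}), and conclude by a continuous-mapping/boundary-measure-zero argument. However, there is a concrete gap in your middle step. You propose to rotate the ambient $\R^d$ by the \emph{target-dependent} rotation $R_{\vecu_j(\vecJ)}$, so that your linear parts $\Phi_j(\vecJ)$ genuinely depend on $j$, and then you claim that the $k$-tuple $\bigl(D_\rho\Phi_j(\vecJ),\vecalf_j(\vecJ)\bigr)_{j=1}^k$ lives in $\Gamma\backslash G$. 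But $G=\SL(d,\R)\ltimes(\R^d)^k$ has a \emph{single} $\SL(d,\R)$ factor acting diagonally on all $k$ translation components; an element of $G$ consists of one matrix $M$ together with $k$ vectors $\vecxi_1,\ldots,\vecxi_k$. If the linear parts $\Phi_j$ are distinct, there is no element $g\in G$ whose $j$-th component is $\bigl(D_\rho\Phi_j(\vecJ),\vecalf_j(\vecJ)\bigr)$ for every $j$, so your object does not define a point of $X=G/\Gamma$, and the equidistribution theorem simply does not apply to it. The same confusion shows up when you write that ``$\vecf$-regularity controls the direction of $\vecf(\vecJ)$, hence of $\vecu_j$ and $R_{\vecu_j}$'' --- the hypothesis constrains only $\vecv(\vecJ)=\vecf(\vecJ)/\|\vecf(\vecJ)\|$; the target orientations $\vecu_j$ are free and independent of $\vecf$.

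The correct move, which is precisely what the paper's proof does, is to rotate by the \emph{flow}-direction rotation $R_{\vecv(\vecJ)}$, which is independent of $j$, and to absorb the mismatch between flow direction and target orientation into the cylinders. Concretely, Lemma~\ref{MAINPFlem1} encodes hitting times as lattice counts $\#\bigl(A_{j,\rho,T}(\vecJ)\cap(\vecphi_j(\vecJ)-\vectheta(\vecJ)+\Z^d)\bigr)$ with cylinders oriented along $\vecf(\vecJ)$, and Lemma~\ref{AjrhoTapprLEM} shows that the single linear map $D(\rho)R_{\vecv(\vecJ)}$ carries all these cylinders to essentially $\rho$-independent normalized cylinders $\tA_{j,Y,Z}(\vecJ)$, whose cross-section is $-\tR_j(\vecJ)\Omega_j(\vecJ)$ with $\tR_j(\vecJ)$ the lower-right $(d-1)\times(d-1)$ block of $R_{\vecv(\vecJ)}R_{\vecu_j(\vecJ)}^{-1}$. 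The relevant lattice data is then $D(\rho)R_{\vecv(\vecJ)}\bigl(1_d,(\vecphi_j(\vecJ)-\vectheta(\vecJ))_{j=1}^k\bigr)$, which \emph{is} a genuine element of $G$ because the linear part is shared across all $j$; Theorem~\ref{KEYEQUIDISTRTHM2} then applies, and the continuity/boundary analysis (Lemmas~\ref{BBpvolsamecontLEM}--\ref{BBDRYMEAS0lem}) plus Lemma~\ref{MAINPFlem1} completes the proof exactly as you sketch in your final paragraph. So the architecture of your proposal is sound, but the choice of rotation at the encoding step is wrong and needs to be replaced by $R_{\vecv(\vecJ)}$ with the orientation correction $\tR_j(\vecJ)$ pushed into the target sets.
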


We next give an explicit description of the limit processes $(\tau_i)_{i=1}^\infty$ and $(\widetilde\tau_i)_{i=1}^\infty$
appearing in Theorem \ref{thm:main001}.
For a given affine Euclidean lattice $\scrL$ in $\R^d$ and a subset $\Omega\subset\R^{d-1}$, 
consider the cut-and-project set
\begin{align}\label{PLOMEGAdef}
\scrP(\scrL,\Omega):=\biggl\{t>0\col\cmatr{t}{\vecx}\in \scrL,\: \vecx\in-\Omega\biggr\}.
\end{align}
Fix an arbitrary (measurable) fundamental domain $F\subset\SL(d,\R)$ for 
$\SL(d,\R)/\SL(d,\Z)$,
and let $\mu_F$ be the (left and right) Haar measure on $\SL(d,\R)$ restricted to $F$,
normalized to be a probability measure.
If we choose $g\in F$ random according to $\mu_F$ then $g\Z^d$ represents a random Euclidean lattice in $\R^d$
(of covolume one).
Similarly, if $\vecalf$ is a random point in $\TT^d$, uniformly distributed and independent from $g$,
then the shifted lattice
$g(\Z^d+\vecalf)$ represents a random affine Euclidean lattice in $\R^d$.

Let us define
\begin{align}\label{vJdef}
\vecv(\vecJ)=\frac{\vecf(\vecJ)}{\|\vecf(\vecJ)\|}\in\US
\qquad (\vecJ\in\scrU).
\end{align}
For $j\in\{1,\ldots,k\}$ and $\vecJ\in\scrU$ we set
$\fR_j(\vecJ)=R_{\vecv(\vecJ)}R_{\vecu_j(\vecJ)}^{-1}\in\SO(d)$,
and let $\tR_j(\vecJ)$ be the bottom right $(d-1)\times(d-1)$ submatrix of $\fR_j(\vecJ)$.\label{tRjJdef}
In other words, $\tR_j(\vecJ)$ is the matrix of the linear map 
$\vecx\mapsto\biggl(\fR_j(\vecJ)\cmatr{0}{\vecx}\biggr)_{\!\!\perp}$ on $\R^{d-1}$,
where $\vecu_\perp:=(u_2,\ldots,u_d)\trans\in\R^{d-1}$ for $\vecu=(u_1,\ldots,u_d)\trans\in\R^d$.
Noticing that $\fR_j(\vecJ)$ is an orientation preserving isometry of $\R^d$ which takes
$\vece_1$ to $\fR_j(\vecJ)(\vece_1)$ and $\cmatr{0}{\R^{d-1}}$ onto $(\fR_j(\vecJ)(\vece_1))_\perp$,
we find that
\begin{align}\label{dettRjJ}
\det\tR_j(\vecJ)=\vece_1\cdot\fR_j(\vecJ)(\vece_1)
=\vece_1\cdot R_{\vecv(\vecJ)}(\vecu_j(\vecJ))=\vecu_j(\vecJ)\cdot\vecv(\vecJ)>0.
\end{align}
For $\vecJ\in\scrU$ we define
\begin{align}\label{oOmegadef1}
\oOmega_j(\vecJ):=\bigl(\osigma^{(k)}_\lambda\|\vecf(\vecJ)\|\bigr)^{1/(d-1)}\tR_j(\vecJ)\Omega_j(\vecJ)\subset\R^{d-1}
\end{align}
and
\begin{align}\label{tOmegadef2}
\tOmega_j(\vecJ):=\bigl(\osigma^{(k)}(\vecJ)\|\vecf(\vecJ)\|\bigr)^{1/(d-1)}\tR_j(\vecJ)\Omega_j(\vecJ)\subset\R^{d-1}.
\end{align}
Geometrically, thus, both $\oOmega_j(\vecJ)$ and $\tOmega_j(\vecJ)$ 
are obtained by orthogonally projecting the sub-target
$\{\vecx\in\TT^d\col(\vecx,\vecJ)\in\scrD_\rho(\vecu_j,\vecphi_j,\Omega_j)\}$
onto the hyperplane orthogonal to the flow direction $\vecf(\vecJ)$
(which is identified with $\R^{d-1}$ via the rotation $R_{\vecv(\vecJ)}$),
and then scaling the sets with appropriate scalar factors,
which in particular make $\oOmega_j(\vecJ)$ and $\tOmega_j(\vecJ)$ independent of $\rho$.

Now let $\vecJ$, $g$ and $\vecalf_1,\ldots,\vecalf_k$ be independent random points
in $\scrU$, $F$ and $\TT^d$, respectively, distributed according to 
$\lambda$, $\mu_F$ and $\Leb_{\TT^d}$.
We will prove in Section \ref{MAINPROOFsec} that the elements of the random set
\begin{align}\label{tauiGENDISCR1}
\bigcup_{j=1}^k\scrP(g(\Z^d+\vecalf_j),\oOmega_j(\vecJ)),
\end{align}
ordered by size, form precisely the sequence of random variables $(\tau_i)_{i=1}^\infty$ in Theorem \ref{thm:main001}.
Similarly the elements of
\begin{align}\label{ttauiGENDISCR1}
\bigcup_{j=1}^k\scrP(g(\Z^d+\vecalf_j),\tOmega_j(\vecJ)),
\end{align}
ordered by size, form the sequence of random variables $(\ttau_i)_{i=1}^\infty$.
We will also see in the proof that, for any $N\in\Z^+$,
both $(\tau_1,\ldots,\tau_N)$ and $(\widetilde\tau_1,\ldots,\widetilde\tau_N)$
have \textit{continuous} distributions, that is, the cumulative distribution functions
$\PP\bigl(\tau_n\leq T_n\text{ for }1\leq n\leq N)$ and
$\PP\bigl(\ttau_n\leq T_n\text{ for }1\leq n\leq N)$ depend
continuously on $(T_n)\in\R_{>0}^N$.

One verifies easily that the above description generalizes the one in Section \ref{sec:limit}.
Indeed, note that the image of the set $F$ in \eqref{F} under the map
\begin{align}
(x,y,\theta)\mapsto k_{\theta}\matr{\sqrt y}00{1/\sqrt y}\matr10x1
\end{align}
is a fundamental domain for $\SL(2,\R)/\SL(2,\Z)$,
and the pushforward of the measure $\mu_F$ in Section \ref{sec:limit} gives the measure $\mu_F$ 
considered in the present section.
Note also that for $d=2$, $\fR_j(\vecJ)$ is the $1\times1$ matrix with the single entry $\vecu_j(\vecJ)\cdot\vecv(\vecJ)$
(cf.\ \eqref{dettRjJ}), and now one checks that if
$\Omega_j=\{(s,\vecJ)\col\vecJ\in\scrU,\:-\frac12\ell_j(\vecJ)<s<\frac12\ell_j(\vecJ)\}$
then for any affine Euclidean lattice $\scrL$,
the cut-and-project set $\scrP(\scrL,\oOmega_j(\vecJ))$ equals $\scrP(\scrL,L_j(\vecJ))$,
and similarly $\scrP(\scrL,\tOmega_j(\vecJ))$ equals $\scrP(\scrL,\tL_j(\vecJ))$
(cf.\ \eqref{PLldef} and \eqref{PLOMEGAdef}).
%
%

Finally let us point out three invariance properties of the limit distributions.
First, both $(\tau_i)_{i=1}^\infty$ and $(\ttau_i)_{i=1}^\infty$ yield
\textit{stationary} point processes, i.e.\ the random set of time points $\{\tau_i\}$ has the same distribution as
$\{\tau_i-t\}\cap\R_{>0}$ for every fixed $t\geq0$,
and similarly for $\{\ttau_i\}$.
This is clear from the explicit description above, using in particular the fact that 
Lebesgue measure on the torus $\TT^d$ 
is invariant under any translation.
Secondly, by the same argument, the distributions of $(\tau_i)_{i=1}^\infty$ and $(\ttau_i)_{i=1}^\infty$
are not affected by any leaf-wise \textit{translation} of any of the sets $\Omega_j$,
i.e.\ replacing $\Omega_j$ by
the set $\{(\vecx+\vecg(\vecJ),\vecJ)\col(\vecx,\vecJ)\in\Omega_j\}$,
where $\vecg$ is any bounded continuous function from $\scrU$ to $\R^{d-1}$.
Thirdly, we point out the identity
\begin{align}
\scrP\biggl(\matr{{h}^{-1}}{\bn\trans}{\bn}{H}\scrL,H\Omega\biggr)={h}^{-1}\scrP(\scrL,\Omega),
\end{align}
which holds for any $\scrL$ and $\Omega$ as in \eqref{PLOMEGAdef},
and any $H\in\GL_{d-1}(\R)$ with $h=\det H>0$.
Note also that the map
\begin{align}
g\SL(d,\Z)\mapsto\matr{{h}^{-1}}{\bn\trans}{\bn}{H} g\SL(d,\Z)
\end{align}
is a measure preserving transformation of $\SL(d,\R)/\SL(d,\Z)$ onto itself.
For $d=2$ these two facts immediately lead to the formula \eqref{Fkinv} in Section \ref{sec:limit}.
For general $d\geq2$, the same facts imply for example that if $\vecu_1=\cdots=\vecu_k$
then the limit random sequences $(\tau_i)_{i=1}^\infty$ and $(\ttau_i)_{i=1}^\infty$ 
are not affected if 
$\Omega_j$ is replaced by $\{(H_1\vecx,\vecJ)\col(\vecx,\vecJ)\in\Omega_j\}$ 
simultaneously for all $j$,
where $H_1$ is any fixed $(d-1)\times(d-1)$ matrix with positive determinant.
Indeed, the given replacement has the effect that both $\osigma^{(k)}(\vecJ)$ and $\osigma^{(k)}_\lambda$ are multiplied by
the constant $(\det H_1)^{-1}$; thus both
$\oOmega_j(\vecJ)$ and $\tOmega_j(\vecJ)$ get transformed by the linear map
$H:=(\det H_1)^{-1/(d-1)}\tR_j(\vecJ)H_1\tR_j(\vecJ)^{-1}$,
which has determinant $1$ and is independent of $j$ since 
$\vecu_1(\vecJ)=\cdots=\vecu_k(\vecJ)$;
hence the statement follows from the two facts noted above. 

\section{An application of Ratner's Theorem}
\label{HOMDYNsec}

In this section we will introduce a homogeneous space $G/\Gamma$ which parametrizes such $k$-tuples of translates of
a common lattice as appear in \eqref{tauiGENDISCR1} and \eqref{ttauiGENDISCR1},
and then use Ratner's classification of unipotent-flow invariant measures   
to prove an asymptotic equidistribution result in $G/\Gamma$,
Theorem \ref{escMAINRATNERTHM1}, which will be a key ingredient for our proof of 
Theorem \ref{thm:main001} in Section \ref{MAINPROOFsec}.

Let $\SL(d,\R)$ act on $(\R^d)^k$ through
\begin{align}\label{SLdRaction}
M\vecv =M(\vecv_1,\ldots,\vecv_k)=(M\vecv_1,\ldots,M\vecv_k),
\end{align}
for $\vecv=(\vecv_1,\ldots,\vecv_k)\in(\R^d)^k$ and $M\in\SL(d,\R)$.
Let $G$ be the semidirect product
\begin{align*}
G=\SL(d,\R)\ltimes(\R^d)^k,
\end{align*}
with multiplication law
\begin{align*}
(M,\vecxi)(M',\vecxi')=(MM',\vecxi+M\vecxi').
\end{align*}
We extend the action of $\SL(d,\R)$ to an action of $G$ on $(\R^d)^k$, by defining
\begin{align}\label{GACTRd}
(M,\vecxi)\vecv:=M\vecv+\vecxi\qquad\quad \text{for}\quad
(M,\vecxi)\in G,\:\vecv\in(\R^d)^k.
\end{align}

Set $\Gamma=\SL(d,\Z)\ltimes(\Z^d)^k$ and $X=G/\Gamma$.
Let $\mu_X$ be the (left and right) Haar measure on $G$, normalized so as to induce a probability measure on $X$,
which we also denote by $\mu_X$.
We also set
\begin{align*}
D(\rho)=\text{diag}[\rho^{d-1},\rho^{-1},\ldots,\rho^{-1}]\in\SL(d,\R),\qquad\rho>0,
\end{align*}
and
\begin{align*}
n_-(\vecx)=\matr 1{\bn\trans}{\vecx}{1_{d-1}}\in\SL(d,\R),\qquad\vecx\in\R^{d-1}.
\end{align*}

We view $\SL(d,\R)$ as embedded in $G$ through $M\mapsto(M,\bn)$.
\begin{thm}\label{escMAINRATNERTHM1}
Let $M\in\SL(d,\R)$;
let $\scrU$ be an open subset of $\R^{d-1}$; 
let $\vecphi:\scrU\to(\R^d)^k$
be a Lipschitz map,
%
%
and let $\lambda$ be a Borel probability measure on $\scrU$ which is
absolutely continuous with respect to Lebesgue measure.
Writing $\vecphi(\vecv)=(\vecphi_1(\vecv),\ldots,\vecphi_k(\vecv))$, we assume that 
for every $\vecw=(w_1,\ldots,w_k)\in\Z^k\setminus\{\bn\}$,
\begin{align}\label{escMAINRATNERTHM1res1}
\lambda\biggl(\biggl\{\vecv\in\scrU\col
\sum_{j=1}^k w_j\cdot\vecphi_j(\vecv)\in\R M^{-1}\cmatr1{-\vecv}+\Q^d\biggr\}\biggr)=0.
\end{align}
Then for any bounded continuous function
$f:X\to\R$,
\begin{align}\label{escMAINRATNERTHM1res2}
\lim_{\rho\to0}\int_{\scrU}
f\bigl(D(\rho) n_-(\vecv) M (1_d,\vecphi(\vecv))\bigr)\,d\lambda(\vecv)
=\int_{X}f(g)\,d\mu_X(g).
\end{align}
\end{thm}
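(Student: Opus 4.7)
The plan is to follow the by-now standard template of Elkies--McMullen: write the push-forward $\mu_\rho$ of $\lambda$ under $\vecv\mapsto g_\rho(\vecv)\Gamma$, where $g_\rho(\vecv):=D(\rho)n_-(\vecv)M(1_d,\vecphi(\vecv))$, and show that every weak-$*$ limit of $\{\mu_\rho\}_{\rho>0}$ equals the Haar measure $\mu_X$; this is enough for \eqref{escMAINRATNERTHM1res2}. First I ensure no mass escapes to infinity in the cusps of $X$. Since the matrix entries of $D(\rho)n_-(\vecv)$ are polynomials of bounded degree in $\vecv$ and $\rho^{\pm 1}$, and $\lambda$ is absolutely continuous with relatively compact support, the Kleinbock--Margulis $(C,\alpha)$-good non-divergence machinery applies uniformly in $\rho$, giving tightness of $\{\mu_\rho\}$ on $X$.

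Next I identify the invariance of any weak-$*$ limit $\mu$. Using the commutation $D(\rho)n_-(\vecx)D(\rho)^{-1}=n_-(\rho^{-d}\vecx)$ together with the abelianness of $n_-$, a direct manipulation in the semidirect product yields
\begin{align*}
g_\rho(\vecv+\rho^d s\vecu)=n_-(s\vecu)\,g_\rho(\vecv)\,\bigl(1_d,\,\vecphi(\vecv+\rho^d s\vecu)-\vecphi(\vecv)\bigr)
\end{align*}
for every $s\in\R$, $\vecu\in\R^{d-1}$. By the Lipschitz hypothesis on $\vecphi$ the last factor converges to the identity in $G$ uniformly in $\vecv$. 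Combining this with the change of variables $\vecv\mapsto\vecv+\rho^d s\vecu$ (trivial Jacobian) and the absolute continuity of $\lambda$ (which controls the resulting density shift by an $L^1$-approximation), one deduces that $\mu$ is invariant under $n_-(s\vecu)$ for every $s,\vecu$, hence under the full connected unipotent subgroup $U:=\{n_-(\vecx):\vecx\in\R^{d-1}\}$.

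By Ratner's measure classification theorem, $\mu$ decomposes into $U$-ergodic components each supported on a closed orbit $Hg\Gamma$, with $H\supseteq U$ a closed connected subgroup of $G$ and $H\cap g\Gamma g^{-1}$ of finite covolume in $H$. A standard argument (project to the $\SL(d,\R)$-factor and use that $U$ is a maximal horospherical subgroup of $\SL(d,\R)$) forces $\pi(H)=\SL(d,\R)$ for every relevant $H$, so $H=\SL(d,\R)\ltimes W$ with $W$ an $\SL(d,\R)$-invariant $\Q$-rational subspace of $(\R^d)^k$. Such $W$'s correspond bijectively to $\Q$-subspaces $S\subseteq\Q^k$ via $W=S\otimes_\Q\R^d$, and closedness of the orbit $Hg\Gamma$ translates into a $\Q$-linear rationality condition on the affine part of $g$, dual to a primitive generator $\vecw\in\Z^k\setminus\{\bn\}$ of $S^\perp\cap\Z^k$.

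The final step, which is also the main obstacle, is to show that for every proper $W$ the rationality condition confines $\vecv$ to a $\lambda$-null set. A careful analysis of the leading behavior of $g_\rho(\vecv)$ as $\rho\to 0$ — exploiting that the contracting direction of $D(\rho)$, pulled back through $n_-(\vecv)M$, is precisely the line $\R M^{-1}\cmatr{1}{-\vecv}$ — shows that the obstruction for $\vecv$ to contribute to the $W$-component is exactly the condition $\sum_j w_j\vecphi_j(\vecv)\in\R M^{-1}\cmatr{1}{-\vecv}+\Q^d$. Hypothesis \eqref{escMAINRATNERTHM1res1} then provides $\lambda$-measure zero for each such $\vecw$; a countable union over $\vecw\in\Z^k\setminus\{\bn\}$ kills all proper $H$'s simultaneously. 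Hence $\mu=\mu_X$, completing the proof. The conceptual and technical weight of the argument lies in this final step: one must track precisely which one-dimensional rational direction is seen by $D(\rho)n_-(\vecv)M$ at infinity, and only with the particular direction $\R M^{-1}\cmatr{1}{-\vecv}$ does the Ratner output reduce exactly to the hypothesis.
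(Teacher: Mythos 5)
Your proposal follows the same Elkies--McMullen template as the paper: pass to a weak-$*$ limit $\mu$, establish $n_-(\R^{d-1})$-invariance by the change-of-variables/Lipschitz argument, invoke Ratner's classification, pin down $H$ and transfer its rationality to a countable family of ``singular'' conditions, and finally kill each singular set by the genericity hypothesis. Two points differ from the paper's organization. For non-divergence, you invoke Kleinbock--Margulis $(C,\alpha)$-good functions; the paper instead proves directly (Lemma~\ref{MUPROJMUXPlem}) that $\pi_*\mu=\mu_{X'}$ on $X'=\SL(d,\R)/\SL(d,\Z)$ via expanding horospherical translates, which gives tightness for free since $\pi$ has compact fibers, and is then reused to force $\pi_*\nu=\mu_{X'}$ for $P$-a.e.\ ergodic component $\nu$ (whence $\pi(H)=\SL(d,\R)$ via Lemma~\ref{piHFULLlem}). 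Your ``maximal horospherical'' shortcut for $\pi(H)=\SL(d,\R)$ only applies once you already know $\pi_*\nu=\mu_{X'}$ for the \emph{ergodic components}, not merely for $\mu$ itself, so you should make that reduction explicit. Second, the correct structure of $H$ is not $\SL(d,\R)\ltimes W$ but a \emph{conjugate} $(1_d,\vecxi)(\SL(d,\R)\ltimes L(U))(1_d,\vecxi)^{-1}$; the paper derives this via Malcev's theorem on Levi subgroups (Lemma~\ref{HEXPLlem}), and the translation vector $\vecxi$ is precisely what produces, in Lemma~\ref{SUPPORTlem}, the rational denominator $q$ and the affine conditions you gesture at.

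The genuine gap is the last step. You correctly identify the contracting direction of $D(\rho)$ pulled back through $n_-(\vecv)M$ as $\R M^{-1}\cmatr{1}{-\vecv}$, and correctly observe that this is exactly what must match the genericity hypothesis --- but you assert rather than prove that the singular set has $\mu$-measure zero. In the paper this is Lemma~\ref{MUTOT0lem}, and it is not a routine consequence of the hypothesis: one has to approximate the closed set $\scrX_{U,q,\vecxi}$ by open cylinders $\scrX_{U,q,C,\eta}$, then reduce to counting $\vecm\in q^{-1}\Z^d$ with $|\vece_1\cdot M\vecm|$ in dyadic ranges, using a Lipschitz estimate $\Leb(X_\rho^\vecm)\ll\rho^{d-1}|\vece_1\cdot M\vecm|^{1-d}$ and the initial reduction $\lambda\ll\Leb$. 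The hypothesis \eqref{escMAINRATNERTHM1res1} is only used to kill finitely many exceptional $\vecm$ (those in $A_3\setminus A_2$); the tail is handled by the dyadic sum, which is where the $\eta$-limit is needed. Without this quantitative estimate the proposal is an outline, not a proof, of the key lemma.
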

\begin{remark}
For related results on equidistribution of expanding translates of \textit{curves,}
cf.\ Shah, \cite[Thm.\ 1.2]{Shah2010}.
\end{remark}
\begin{remark}\label{EMconn}
The proof of Theorem \ref{escMAINRATNERTHM1}
extends trivially to the more general situation when $\Gamma$ is a subgroup of
$\SL(d,\Z)\ltimes(\Z^d)^k$ of finite index.
In this form, Theorem \ref{escMAINRATNERTHM1}
contains Elkies and McMullen, \cite[Thm.\ 2.2]{Elkies04} as a special case.
Indeed, applying Theorem \ref{escMAINRATNERTHM1}
with $d=2$, $k=1$, $M=\smatr0{-1}10$,
$\varphi(v)=\cmatr{x(v)+vy(v)}{y(v)}$ and 
$f(g):=f_0(M^{-1}g)$, where $f_0:X\to\R$ is an arbitrary bounded continuous function,
and noticing
$D(\rho) n_-(v)M(1_2,\varphi(v))=MD(\rho^{-1})\biggl(\matr1{-v}01,\cmatr{x(v)}{y(v)}\biggr)$,
we obtain
\begin{align*}
\lim_{s\to\infty}\int_{\scrU}
f_0\biggl(D(s) \biggl(\matr1{-v}01,\cmatr{x(v)}{y(v)}\biggr)\biggr)\,d\lambda(v)
=\int_{X}f_0(g)\,d\mu_X(g),
\end{align*}
provided that
\begin{align*}
\lambda\bigl(\bigl\{v\in\scrU\col x(v)\in\Q+\Q v\bigr\}\bigr)=0.
\end{align*}
Our proof of Theorem \ref{escMAINRATNERTHM1} follows the same basic strategy as the proof of
Thm.\ 2.2 in \cite{Elkies04}, but with several new complications arising. 
\end{remark}

\begin{remark}
Theorem \ref{escMAINRATNERTHM1} also generalizes
\cite[Thm. 5.2]{partI},
which is obtained by taking $k=1$ and $\vecphi(\vecv)=\vecphi$ a constant vector independent of $\vecv$.
Indeed note that \eqref{escMAINRATNERTHM1res2} in this case is equivalent with $\vecphi\notin\Q^d$.
(To translate into the setting of \cite{partI}, 
where vectors are represented as row matrices and one considers $\Gamma\bs G$ in place of $G/\Gamma$;
apply the map $(M,\vecxi)\mapsto (M\trans,\vecxi\trans)$.)
\end{remark}

We now give the proof of Theorem \ref{escMAINRATNERTHM1}; it extends until page \pageref{escMAINRATNERTHM1pfFINISH}.
Let $M,\scrU,\vecphi,\lambda$ satisfy all the assumptions of Theorem \ref{escMAINRATNERTHM1}.
As an initial reduction, let us note that by a standard approximation argument
where one removes from $\scrU$ a subset of small $\lambda$-measure,
we may in fact assume that $\scrU$ is bounded, 
and furthermore that there is a constant $B>0$ such that $\lambda(A)\leq B\Leb(A)$ for every Borel set $A\subset\scrU$.
\label{INITIALREDUCTION}
(We will only use these properties in the proof of Lemma \ref{MUTOT0lem} below.)

For each $\rho>0$, let $\mu_\rho$ be the probability measure on $X$ defined by
\begin{align}
\mu_\rho(f)=\int_{\scrU} f\bigl(D(\rho)n_-(\vecv)M(1_d,\vecphi(\vecv))\bigr)\,d\lambda(\vecv),
\qquad f\in \C_c(X).
\end{align}
Our task is to prove that $\mu_\rho$ converges weakly to $\mu_X$ as $\rho\to0$.
In fact it suffices to prove that $\mu_\rho(f)\to\mu_X(f)$ holds for every
function $f$ in the space of continuous compactly supported functions on $X$, $\C_c(X)$.
Recall that the unit ball in the dual space of $\C_c(X)$ is compact in the weak* topology
(Alaoglu's Theorem).
Hence by a standard subsequence argument, it suffices to prove that 
every weak* limit of $(\mu_\rho)$ as $\rho\to0$ must equal $\mu_X$.
Thus from now on, we let $\mu$ be a weak* limit of $(\mu_\rho)$,
i.e.\ $\mu$ is a Borel measure (apriori not necessarily a probability measure) on
$X$,
and we have $\mu_{\rho_j}(f)\to\mu(f)$ for every $f\in\C_c(X)$,
where $(\rho_j)$ is a fixed sequence of positive numbers tending to $0$.
Our task is to prove $\mu=\mu_X$.

Let $\pi:G\to\SL(d,\R)$ be the projection $(M,\vecxi)\mapsto M$;
this map induces a projection $X\to X':=\SL(d,\R)/\SL(d,\Z)$ which we also call $\pi$.
Let $\mu_{X'}$ be the unique $\SL(d,\R)$ invariant probability measure on $X'$.

\begin{lem}\label{MUPROJMUXPlem}
$\pi_*\mu=\mu_{X'}$.
\end{lem}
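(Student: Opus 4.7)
The strategy will be to establish that $\pi_*\mu$ is a probability measure and is invariant under the expanding horospherical subgroup $H=\{n_-(\vecw)\col\vecw\in\R^{d-1}\}$, and then to invoke Dani's classification of $H$-invariant probability measures on $X'=\SL(d,\R)/\SL(d,\Z)$, which forces $\pi_*\mu=\mu_{X'}$. Note first that, since $\pi$ simply forgets the translation component $\vecxi\in(\R^d)^k$, $\pi_*\mu_\rho$ is the pushforward of $\lambda$ under $\vecv\mapsto D(\rho)n_-(\vecv)M\cdot\SL(d,\Z)$, so the lemma is really a statement about equidistribution of expanding translates of a piece of horospherical orbit in $X'$ and is independent of $\vecphi$; alternatively it could be cited from the classical Kleinbock-Margulis / Dani-Smillie type equidistribution results.

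To prove horospherical invariance directly, I would exploit the commutation identity $n_-(\vecw)D(\rho)=D(\rho)n_-(\rho^d\vecw)$, which reflects the fact that $D(\rho)^{-1}$ expands the horospherical subgroup by a factor of $\rho^{-d}$. For any $f\in\C_c(X')$ and $\vecw\in\R^{d-1}$ this yields
\begin{align*}
\int_{\scrU}f\bigl(n_-(\vecw)D(\rho)n_-(\vecv)M\bigr)\,d\lambda(\vecv)
=\int_{\scrU}f\bigl(D(\rho)n_-(\vecv+\rho^d\vecw)M\bigr)\,d\lambda(\vecv).
\end{align*}
By the initial reduction on p.~\pageref{INITIALREDUCTION}, $\lambda$ has density bounded by a constant $B$ on a bounded open set; hence the translate of $\lambda$ by the small vector $\rho^d\vecw$ converges to $\lambda$ in total variation as $\rho\to0$, so the right-hand side differs from $\pi_*\mu_\rho(f)$ by $o_\rho(1)$. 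Passing to the limit along $\rho_j\to0$ gives $\pi_*\mu(f\circ L_{n_-(\vecw)})=\pi_*\mu(f)$ for all $\vecw\in\R^{d-1}$ and $f\in\C_c(X')$.

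For the non-divergence, I would appeal to a quantitative estimate of Dani-Margulis / Kleinbock-Margulis type: the starting points $n_-(\vecv)M\cdot\SL(d,\Z)$ lie in a compact subset of $X'$ as $\vecv$ ranges over the bounded set $\scrU$, and standard non-divergence results for the $D(\rho)$-translates of such a family then yield tightness of the family $\{\pi_*\mu_\rho\}_{\rho>0}$, hence $\pi_*\mu(X')=1$. Combining this with $H$-invariance and Dani's classification of $H$-invariant probability measures on $X'$ completes the argument.

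The main technical obstacle is the non-divergence step, since $X'$ is non-compact and one must quantitatively control how close the expanding translates $D(\rho)n_-(\vecv)M\cdot\SL(d,\Z)$ come to the cusp as $\rho\to0$. However, because the starting lattices form a compact family and $D(\rho)$ expands the horospherical directions in a well-understood way, this is by now a textbook application of the non-divergence machinery in homogeneous dynamics, and introduces no difficulty beyond what is standard in the literature.
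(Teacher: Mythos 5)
Your opening observation — that $\pi_*\mu_\rho$ is the pushforward of $\lambda$ under $\vecv\mapsto D(\rho)n_-(\vecv)M\cdot\SL(d,\Z)$, so the lemma reduces to equidistribution of expanding translates of a piece of horospherical orbit and is independent of $\vecphi$ — is exactly what the paper does. The paper's proof is essentially one line: it makes this observation and then cites \cite[Prop.\ 2.2.1]{KleinbockMargulis96} for the resulting equidistribution, noting that the underlying mechanism is the thickening method (Margulis), which uses mixing of the diagonal flow. Your parenthetical remark that one could simply cite Kleinbock--Margulis or Dani--Smillie type results matches the paper's route precisely.

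However, the self-contained alternative you then sketch has a genuine gap. The $H$-invariance step is fine (the computation $n_-(\vecw)D(\rho)=D(\rho)n_-(\rho^d\vecw)$ followed by $\L^1$-continuity of translation is correct, and it is in fact a simplified version of the paper's proof of Lemma \ref{escMAINRATNERTHM1pflem1} for the full group $G$), and quantitative non-divergence does give tightness, so $\pi_*\mu(X')=1$. But the concluding step — "invoke Dani's classification of $H$-invariant probability measures, which forces $\pi_*\mu=\mu_{X'}$" — is false as stated. Classification does not imply uniqueness: the full expanding horospherical subgroup $H=n_-(\R^{d-1})$ has closed, indeed \emph{compact}, orbits in $X'=\SL(d,\R)/\SL(d,\Z)$. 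For instance $H\cap\SL(d,\Z)=n_-(\Z^{d-1})$ is a lattice in $H\cong\R^{d-1}$, so the orbit of the identity coset is compact, and its normalized Lebesgue measure is an $H$-invariant probability measure on $X'$ distinct from $\mu_{X'}$ (for $d=2$ these are the familiar periodic horocycles). To conclude $\pi_*\mu=\mu_{X'}$ one must additionally rule out that the limit gives positive mass to such closed $H$-orbits, which requires a genuine equidistribution argument — the thickening method, or a linearization/non-concentration argument in the style of Dani--Margulis, Ratner, or Mozes--Shah — not mere invariance plus classification. This is precisely why the paper cites the equidistribution theorem rather than arguing via measure classification at this stage. (The classification route via Ratner's theorem is used by the paper for the \emph{full} measure $\mu$ on $G/\Gamma$ in the later lemmas, where the extra non-concentration input comes from Lemma \ref{MUTOT0lem}, which in turn relies on the genericity hypothesis \eqref{escMAINRATNERTHM1res1}; no such hypothesis is available, or needed, for the projected measure $\pi_*\mu$.)
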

\begin{proof}
For any $f\in\C_c(X')$ we have
\begin{align}\label{MUPROJMUXPlempf1}
\pi_*\mu(f)=\lim_{j\to\infty}\mu_{\rho_j}(f\circ\pi)=
\lim_{j\to\infty}\int_\scrU f\bigl(D(\rho_j)n_-(\vecv)M\bigr)\,d\lambda(\vecv)
=\mu_{X'}(f).
\end{align}
For the last equality, cf., e.g., \cite[Prop.\ 2.2.1]{KleinbockMargulis96}.
(The point here is that $f$ is averaged along expanding translates of a horospherical subgroup,
and such translates can be proved to become asymptotically equidistributed
using the so called thickening method,
originally introduced in the 1970 thesis of Margulis \cite{Margulisthesis}.)
\end{proof}

\begin{lem}\label{escMAINRATNERTHM1pflem1}
$\mu$ is invariant under $n_-(\vecx)$ for every $\vecx\in\R^{d-1}$.
\end{lem}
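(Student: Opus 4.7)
The plan is to exploit the fact that $D(\rho)$ conjugates $n_-(\vecx)$ to a small translation as $\rho\to 0$. A direct matrix computation gives $D(\rho)^{-1} n_-(\vecx) D(\rho) = n_-(\rho^d \vecx)$, which combined with $n_-(\veca)n_-(\vecb)=n_-(\veca+\vecb)$ yields
\begin{equation}
n_-(\vecx) D(\rho) n_-(\vecv) M (1_d, \vecphi(\vecv)) = D(\rho) n_-(\vecv + \rho^d \vecx) M (1_d, \vecphi(\vecv)).
\end{equation}
Fix $f \in \C_c(X)$ and write $\lambda = h\,d\vecv$ with $h \in L^1(\RR^{d-1})$ supported in $\scrU$. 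Performing the change of variables $\vecv \mapsto \vecv - \rho^d \vecx$ in $\mu_\rho(f\circ L_{n_-(\vecx)})$ (with $L_g$ denoting left translation by $g\in G$) expresses this as an integral which differs from $\mu_\rho(f)$ only in (i) the density $h$ being shifted by $\rho^d\vecx$, and (ii) $\vecphi(\vecv)$ being replaced by $\vecphi(\vecv - \rho^d\vecx)$, while the $\SL(d,\RR)$-part $g_\vecv^\rho := D(\rho) n_-(\vecv) M$ is unchanged.

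Difference (i) contributes at most $\|f\|_\infty \,\|h - h(\cdot - \rho^d\vecx)\|_{L^1}$, which vanishes as $\rho \to 0$ by continuity of translation in $L^1$. For (ii), the two arguments of $f$ agree in their $\SL(d,\RR)$-coordinate and differ in the $(\RR^d)^k$-coordinate by $g_\vecv^\rho \Delta$, where $\Delta := \vecphi(\vecv - \rho^d\vecx) - \vecphi(\vecv)$; equivalently, the two points of $X$ differ by the left multiplier $(1_d,\, g_\vecv^\rho \Delta)\in G$. The Lipschitz hypothesis on $\vecphi$ gives $|\Delta| = O(\rho^d)$ uniformly on $\scrU$, $M$ is fixed, $n_-(\vecv)$ is bounded on the bounded set $\scrU$, and $D(\rho)$ inflates by at most $\rho^{-1}$ in any coordinate; hence
\begin{equation}
|g_\vecv^\rho \Delta| = O(\rho^{d-1}) \longrightarrow 0
\end{equation}
uniformly in $\vecv$ as $\rho \to 0$. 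Uniform continuity of $f\in\C_c(X)$ (via continuity of the $G$-action on $X$) then forces the integrand in (ii) to vanish uniformly, so its contribution also tends to $0$.

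Taking $\rho = \rho_j \to 0$ in $\mu_{\rho_j}(f\circ L_{n_-(\vecx)}) - \mu_{\rho_j}(f) \to 0$ yields $\mu(f\circ L_{n_-(\vecx)}) = \mu(f)$ for every $f\in\C_c(X)$, which is the desired $n_-(\vecx)$-invariance of $\mu$. The main obstacle is precisely the bound $|g_\vecv^\rho \Delta| = O(\rho^{d-1})$: the expanding factor $\rho^{-1}$ inside $D(\rho)$ would amplify the displacement without the Lipschitz control on $\vecphi$, and it is only the matching decay $\rho^d$ from that control which brings the product down to the (barely) decaying $\rho^{d-1}$. Everything else is bookkeeping and standard approximation.
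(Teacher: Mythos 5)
Your argument is correct and follows essentially the same route as the paper: conjugate $n_-(\vecx)$ past $D(\rho)$, substitute $\vecv\mapsto\vecv-\rho^d\vecx$, split the discrepancy into the $L^1$-translation term for the density and the term coming from replacing $\vecphi(\vecv)$ by $\vecphi(\vecv-\rho^d\vecx)$, and control the latter by noting that the resulting left multiplier $(1_d,g_\vecv^\rho\Delta)$ is $O(\rho^{d-1})$-close to the identity (uniformly, using the Lipschitz bound $\|\Delta\|=O(\rho^d)$ and boundedness of $\scrU$) so that uniform continuity of $f\in\C_c(X)$ applies. The only cosmetic difference is that you phrase the perturbation as a single left multiplier $(1_d,g_\vecv^\rho\Delta)$ and bound its size by the operator norm of $D(\rho)$, whereas the paper tracks the coordinates of $D(\rho_j)n_-(\vecw)M\veceta$ explicitly; both give the same $O(\rho^{d-1})$ estimate.
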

\begin{proof}
(Cf.\ \cite[Thm.\ 2.5]{Elkies04}.)
Let $\lambda'\in\L^1(\R^{d-1})$ be the Radon-Nikodym derivative of $\lambda$ with respect to Lebesgue measure
(thus $\lambda'(\vecv)=0$ for $\vecv\notin\scrU$).
Let $f\in \C_c(X)$ and $\vecx\in\R^{d-1}$ be given,
and define $f_1\in\C_c(X)$ through $f_1(p)=f(n_-(\vecx)p)$.
Then our task is to prove that $\mu(f_1)=\mu(f)$,
viz., to prove that the difference
\begin{align*}
\int_{\scrU} f\bigl(n_-(\vecx)D(\rho_j)n_-(\vecv)M(1_d,\vecphi(\vecv))\bigr)\lambda'(\vecv)\,d\vecv
-\int_{\scrU} f\Bigl(D(\rho_j)n_-(\vecw)M(1_d,\vecphi(\vecw))\Bigr)\lambda'(\vecw)\,d\vecw
\end{align*}
tends to $0$ as $j\to\infty$.
Using $n_-(\vecx)D(\rho_j)=D(\rho_j)n_-(\rho_j^d\vecx)$ 
and substituting $\vecv=\vecw-\rho_j^d\vecx$ in the first integral,
the difference can be rewritten as
\begin{align}\notag
\int_{(\scrU+\rho_j^d\vecx)\cap\scrU}
\Bigl(f\bigl(D(\rho_j)n_-(\vecw)M(1_d,\vecphi(\vecw-\rho_j^d\vecx))\bigr)
-f\bigl(D(\rho_j)n_-(\vecw)M(1_d,\vecphi(\vecw))\bigr)\Bigr)\lambda'(\vecw)\,d\vecw
\\
+\int_{\scrU+\rho_j^d\vecx}
f\bigl(D(\rho_j)n_-(\vecw)M(1_d,\vecphi(\vecw-\rho_j^d\vecx))\bigr)
\bigl(\lambda'(\vecw-\rho_j^d\vecx)-\lambda'(\vecw)\bigr)\,d\vecw
\\\notag
-\int_{\scrU\setminus(\scrU+\rho_j^d\vecx)}
f\bigl(D(\rho_j)n_-(\vecw)M(1_d,\vecphi(\vecw))\bigr)\lambda'(\vecw)\,d\vecw.
\end{align}
The absolute value of this expression is bounded above by
\begin{align}\label{escMAINRATNERTHM1pflem1pf1}
&\sup_{\vecw\in(\scrU+\rho_j^d\vecx)\cap\scrU}\:
\Bigl|f\bigl(D(\rho_j)n_-(\vecw)M(1_d,\vecphi(\vecw-\rho_j^d\vecx))\bigr)
-f\bigl(D(\rho_j)n_-(\vecw)M(1_d,\vecphi(\vecw))\bigr)\Bigr|
\\\notag
&\hspace{200pt}
+\Bigl(\sup_{X}|f|\Bigr)\int_{\R^{d-1}}\bigl|\lambda'(\vecw-\rho_j^d\vecx)-\lambda'(\vecw)\bigr|\,d\vecw.
\end{align}
By assumption, there exists $C>0$ such that
$\|\vecphi(\vecw')-\vecphi(\vecw)\|\leq C\|\vecw'-\vecw\|$ for all $\vecw,\vecw'\in\scrU$,
where in the left hand side
$\|\cdot\|$ is the standard Euclidean norm on $(\R^d)^k$.
In particular for any $\vecw\in(\scrU+\rho_j^d\vecx)\cap\scrU$ 
we have $\vecphi(\vecw-\rho_j^d\vecx)=\vecphi(\vecw)+\veceta$
for some $\veceta=\veceta(\vecw,j)\in(\R^d)^k$ satisfying $\|\veceta\|\leq C\rho_j^{d}\|\vecx\|$,
and thus 
\begin{align*}
D(\rho_j)n_-(\vecw)M(1_d,\vecphi(\vecw-\rho_j^d\vecx))
&=D(\rho_j)n_-(\vecw)M(1_d,\vecphi(\vecw)+\veceta)
\\
&=\bigl(1_d,D(\rho_j)n_-(\vecw)M\veceta\bigr)\,D(\rho_j)\,n_-(\vecw)\,M\,\bigl(1_d,\vecphi(\vecw)\bigr).
\end{align*}
Now if $M\veceta=(\veceta_1',\ldots,\veceta_k')$ and 
$\veceta_\ell'=(\eta_{\ell,1}',\ldots,\eta_{\ell,d}')\trans$ for each $\ell$, then the $\ell$th component of
$D(\rho_j)n_-(\vecw)M\veceta$
equals
$\rho_j^{-1}\eta'_{\ell,1}\cmatr{\rho_j^{d}}{\vecw}+\rho_j^{-1}(0,\eta'_{\ell,2},\cdots,\eta'_{\ell,d})\trans$.
Now $\|M\veceta\|\ll_{C,M}\rho_j^d\|\vecx\|$,
and thus
the element $(1_d,D(\rho_j)n_-(\vecw)M\veceta)$ tends to the identity in $G$ as $j\to\infty$,
uniformly over all $\vecw\in(\scrU+\rho_j^d\vecx)\cap\scrU$.
But $f$ is uniformly continuous since $f\in \C_c(X)$; 
hence it follows that the first term in the right hand side of \eqref{escMAINRATNERTHM1pflem1pf1} 
tends to zero as $j\to\infty$.
Also the second term 
tends to zero; cf., e.g., \cite[Prop.\ 8.5]{Folland}.
This completes the proof of the lemma.
\end{proof}

Since $\mu$ is $n_-(\R^{d-1})$-invariant,   
we can apply ergodic decomposition to $\mu$:
Let $\scrE$ be the set of ergodic $n_-(\R^{d-1})$-invariant probability measures on $X$,
provided with its usual Borel $\sigma$-algebra;
then there exists a unique Borel probability measure $P$ on $\scrE$ such that
\begin{align}\label{ERGDEC}
\mu=\int_{\scrE}\nu\,dP(\nu).
\end{align}
Cf., e.g., \cite[Thm.\ 4.4]{Varadarajan}.
Note that \eqref{ERGDEC} together with Lemma \ref{MUPROJMUXPlem} implies 
$\mu_{X'}=\pi_*\mu=\int_{\scrE}\pi_*\nu\,dP(\nu)$,
and for each $\nu\in\scrE$, $\pi_*\nu$ is an ergodic $n_-(\R^{d-1})$-invariant measure on $X'$.
Hence in fact $\pi_*\nu=\mu_{X'}$ for $P$-almost all $\nu\in\scrE$,
by uniqueness of the ergodic decomposition of $\mu_{X'}$.

Now fix an arbitrary $\nu\in\scrE$ satisfying $\pi_*\nu=\mu_{X'}$.
We now apply Ratner's classification of unipotent-flow invariant measures, \cite[Thm 3]{mR91a}, to $\nu$.
Let $H$ be the closed (Lie) subgroup of $G$ given by
\begin{align*}
H=\{g\in G\col g_*\nu=\nu\},
\end{align*}
where $g_*\nu$ denotes the push-forward of $\nu$ by the map $x\mapsto gx$ on $X$
(viz., $(g_*\nu)(B):=\nu(g^{-1}B)$ for any Borel set $B\subset X$).
Note that
\begin{align}\label{Hcontainsnm}
n_-(\R^{d-1})\subset H,
\end{align}
by definition.
The conclusion from \cite[Thm 3]{mR91a} is that there is some $g_0\in G$
such that $\nu(Hg_0\Gamma/\Gamma)=1$.
Note that in this situation the measure $\nu_0:=g_{0*}^{-1}\nu$ is
$g_0^{-1}Hg_0$ invariant 
and $\nu_0(g_0^{-1}Hg_0\Gamma/\Gamma)=1$. 
Hence under the standard identification of
$g_0^{-1}Hg_0\Gamma/\Gamma$ with the homogeneous space
$g_0^{-1}Hg_0/(\Gamma\cap g_0^{-1}Hg_0)$
(viz., $h\Gamma\mapsto h(\Gamma\cap g_0^{-1}Hg_0)$ for $h\in g_0^{-1}Hg_0$),
$\nu_0$ is the unique invariant probability measure on
$g_0^{-1}Hg_0/(\Gamma\cap g_0^{-1}Hg_0)$,
induced from a Haar measure on $g_0^{-1}Hg_0$.
In particular $\Gamma\cap g_0^{-1}Hg_0$ is a lattice in $g_0^{-1}Hg_0$,
and both $g_0^{-1}Hg_0\Gamma/\Gamma$ and $Hg_0\Gamma/\Gamma$ are closed subsets of $X$
(cf.\ also \cite[Thm.\ 1.13]{mR72});
furthermore $\supp(\nu)=Hg_0\Gamma/\Gamma$.
\begin{lem}\label{piHFULLlem}
In this situation, $\pi(H)=\SL(d,\R)$.
\end{lem}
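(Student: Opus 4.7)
The plan is to transfer the closedness of the Ratner orbit in $X$ down to $X'$ using the crucial fact that the projection $\pi\colon X\to X'$ has compact fibre $(\R^d)^k/(\Z^d)^k\cong(\TT^d)^k$, so is proper and in particular a closed map. Since by Ratner's theorem $Hg_0\Gamma/\Gamma$ is closed in $X$, its image
\begin{align*}
\pi(Hg_0\Gamma/\Gamma)=\pi(H)\pi(g_0)\SL(d,\Z)/\SL(d,\Z)
\end{align*}
is closed in $X'$. Combining $\nu(Hg_0\Gamma/\Gamma)=1$ with the hypothesis $\pi_*\nu=\mu_{X'}$, this closed subset carries full Haar measure in $X'$; since $\mu_{X'}$ has full support, I conclude
\begin{align*}
\pi(H)\pi(g_0)\SL(d,\Z)=\SL(d,\R).
\end{align*}

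To extract $\pi(H)=\SL(d,\R)$ from this set-theoretic identity, I write $\SL(d,\R)$ as the countable union $\bigcup_{\gamma\in\SL(d,\Z)}\pi(H)\pi(g_0)\gamma$. Since $\SL(d,\R)$ has positive (in fact infinite) Haar measure, at least one of these translates --- and hence $\pi(H)$ itself, by translation invariance of Haar measure --- has positive Haar measure. The set $\pi(H)$ is Borel (indeed $F_\sigma$), being the continuous image of the $\sigma$-compact closed subgroup $H\subset G$, hence in particular Lebesgue measurable. Steinhaus's theorem then yields that $\pi(H)\cdot\pi(H)^{-1}=\pi(H)$ contains an open neighbourhood of the identity; thus $\pi(H)$ is an open subgroup of the connected Lie group $\SL(d,\R)$, which forces $\pi(H)=\SL(d,\R)$.

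I do not anticipate a substantive obstacle: once Ratner's theorem has been invoked the whole argument reduces to compactness of the torus fibres of $\pi$ together with the classical Steinhaus theorem for measurable subgroups. The points requiring care are the closedness of $\pi$ (standard, since fibre bundles with compact fibres are proper maps) and the Lebesgue measurability of $\pi(H)$ (which follows from $\sigma$-compactness of $H$); the hypothesis that $n_-(\R^{d-1})\subset H$ plays no role in this particular step, but will presumably be used in the subsequent analysis of the structure of $H$.
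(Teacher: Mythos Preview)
Your proof is correct and follows essentially the same route as the paper: both use compactness of the fibres of $\pi$ to deduce $\pi(H)\pi(g_0)\SL(d,\Z)=\SL(d,\R)$ (the paper phrases this via $\pi(\supp\nu)=\supp\pi_*\nu$, you via ``$\pi$ is a closed map and $\mu_{X'}$ has full support'', which amounts to the same thing). The only difference is that the paper passes from this identity to $\pi(H)=\SL(d,\R)$ in a single word (``thus''), whereas you spell out the justification via $\sigma$-compactness of $H$ and Steinhaus' theorem; an equally valid alternative would be a Baire category argument on the countable cover $\SL(d,\R)=\bigcup_{n,\gamma}\pi(K_n)\pi(g_0)\gamma$ with $K_n\subset H$ compact.
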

\begin{proof}
(Cf.\ \cite[Thm.\ 2.8]{Elkies04}.)
We have $\pi(\supp\nu)=\supp\pi_*\nu$, since $\pi:X\to X'$ has compact fibers,
and $\supp\pi_*\nu=X'$, since we are assuming $\pi_*\nu=\mu_{X'}$.
Also $\supp\nu=Hg_0\Gamma/\Gamma$.
Hence 
$\pi(H)\pi(g_0)\SL(d,\Z)=\SL(d,\R)$,
and thus $\pi(H)=\SL(d,\R)$.
\end{proof}
In the next lemma we deduce from \eqref{Hcontainsnm} and Lemma \ref{piHFULLlem}
an explicit presentation of $H$.
For $\vecxi=(\vecxi_1,\ldots,\vecxi_k)\in(\R^d)^k$ and $\vecu=(u_1,\ldots,u_k)\in\R^k$, let us introduce the notation
\begin{align*}
\vecxi\cdot\vecu:=\sum_{j=1}^k u_j\vecxi_j\in\R^d.
\end{align*}
Given any linear subspace $U\subset\R^k$, we let $L(U)$ be the linear subspace consisting of all
$\vecxi\in(\R^d)^k$ satisfying $\vecxi\cdot\vecu=\bn$ for all $\vecu\in U^\perp$,
where $U^\perp$ is the orthogonal complement of $U$ in $\R^k$ with respect to the standard inner product.
(It is natural to identify $\vecxi=(\vecxi_1,\ldots,\vecxi_k)$ with the $d\times k$-matrix with columns
$\vecxi_1,\ldots,\vecxi_k$; then $\vecxi\cdot\vecu$ is simply matrix multiplication,
and $L(U)$ is the space of all $d\times k$-matrices such that every \textit{row} vector is in $U$.)
Note that $L(U)$ is closed under multiplication from the left by any $\SL(d,\R)$-matrix.
Hence the following is a closed Lie subgroup of $G$:
\begin{align*}
H_U:=\SL(d,\R)\ltimes L(U).
\end{align*}
Let $\vece_1=(1,0,\ldots,0)\trans\in\R^d$. Then $\vece_1^\perp=\{(0,\xi_2,\ldots,\xi_d)\trans\col \xi_j\in\R\}$,
and $(\vece_1^\perp)^k$ is a linear subspace of $(\R^d)^k$.
\begin{lem}\label{HEXPLlem}
There exist 
$U\subset\R^k$ and $\vecxi\in(\vece_1^\perp)^k$ such that
$H=(1_d,\vecxi)H_U(1_d,\vecxi)^{-1}$.
\end{lem}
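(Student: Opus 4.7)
The plan is to analyse $H$ via the short exact sequence
\begin{equation*}
1\to N\to H\to\SL(d,\R)\to1,
\end{equation*}
where $N:=H\cap(\{1_d\}\times(\R^d)^k)$, identified with a closed subgroup of $(\R^d)^k$. First I would observe that $N$ is $\SL(d,\R)$-invariant: since $\pi(H)=\SL(d,\R)$, for every $M\in\SL(d,\R)$ there is $(M,\psi)\in H$, and the direct computation $(M,\psi)(1_d,\eta)(M,\psi)^{-1}=(1_d,M\eta)$ forces $M\eta\in N$ whenever $\eta\in N$.

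Next I would identify $N=L(U)$ for some $U\subseteq\R^k$. Using $(\R^d)^k\cong\R^d\otimes\R^k$ with $\R^d$ irreducible under $\SL(d,\R)$, every $\SL(d,\R)$-invariant linear subspace of $(\R^d)^k$ is of the form $L(U)$ for a unique $U\subseteq\R^k$. The identity component $N^0$, being a closed connected subgroup of the vector group $(\R^d)^k$, is a linear subspace; being $\SL(d,\R)$-invariant it equals $L(U)$ for some $U$. The discrete quotient $N/N^0$ inherits an $\SL(d,\R)$-action, but any nonzero element of $(\R^d)^k/L(U)\cong\R^d\otimes(\R^k/U)$ has positive-dimensional, hence non-discrete, $\SL(d,\R)$-orbit. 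Consequently $N/N^0=\{0\}$ and $N=L(U)$.

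Third, I would split the extension at the Lie algebra level. The Lie algebra $\ih$ of $H$ sits in a short exact sequence $0\to L(U)\to\ih\to\lsl(d,\R)\to0$ of $\lsl(d,\R)$-modules. By Whitehead's second lemma this splits, giving a Lie subalgebra $\lsl(d,\R)'\subset\ig=\lsl(d,\R)\ltimes(\R^d)^k$ that projects isomorphically onto $\lsl(d,\R)$; hence $\lsl(d,\R)'=\{(X,f(X)):X\in\lsl(d,\R)\}$ for some $1$-cocycle $f:\lsl(d,\R)\to(\R^d)^k$. By Whitehead's first lemma $H^1(\lsl(d,\R),(\R^d)^k)=0$, so $f(X)=X\vecxi_0$ for some $\vecxi_0\in(\R^d)^k$. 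A direct differentiation of the conjugation formula gives $\Ad((1_d,-\vecxi_0))(X,\bn)=(X,X\vecxi_0)$, so $\lsl(d,\R)'=\Ad((1_d,-\vecxi_0))(\lsl(d,\R))$. Since $\SL(d,\R)$ is connected, exponentiating yields $(1_d,-\vecxi_0)\SL(d,\R)(1_d,-\vecxi_0)^{-1}\subset H$; together with $L(U)\subset H$ this generates a connected subgroup equal to $(1_d,-\vecxi_0)H_U(1_d,-\vecxi_0)^{-1}$, which by dimension must be the full identity component $H^0$. A standard argument gives $H=H^0$: for any $h\in H$ pick $h'\in H^0$ with $\pi(h')=\pi(h)$, so that $h(h')^{-1}\in N=L(U)\subset H^0$ and thus $h\in H^0$. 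Setting $\vecxi:=-\vecxi_0$ gives $H=(1_d,\vecxi)H_U(1_d,\vecxi)^{-1}$.

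It remains to adjust $\vecxi$ so that $\vecxi\in(\vece_1^\perp)^k$. The conjugating element is determined only modulo $L(U)$, since $(1_d,L(U))\subset H_U$ commutes with itself. The hypothesis $n_-(\R^{d-1})\subset H$, transported back to $H_U$, forces $(n_-(\vecx)-1_d)\vecxi\in L(U)$ for all $\vecx\in\R^{d-1}$. Writing $\vecxi=(\vecxi_1,\dots,\vecxi_k)$ with $\vecxi_j=(\xi_{1,j},\dots,\xi_{d,j})\trans$, the identity $(n_-(\vecx)-1_d)\vecxi_j=\xi_{1,j}\cmatr{0}{\vecx}$ shows that the row $(\xi_{1,1},\dots,\xi_{1,k})\in\R^k$ must lie in $U$. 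Subtracting the element of $L(U)$ whose first row is $(\xi_{1,1},\dots,\xi_{1,k})$ and whose other rows vanish replaces $\vecxi$ by an element of $(\vece_1^\perp)^k$, completing the proof. The main technical step is the Lie algebra splitting via Whitehead's lemmas together with its globalisation to $H^0$; everything else reduces to direct computation.
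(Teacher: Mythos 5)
Your proof is correct and reaches the same decomposition of $H$, but by a genuinely different route. Both proofs first identify $N=H\cap(\{1_d\}\ltimes(\R^d)^k)$ with $L(U)$: the paper shows the translational part is invariant under all real $d\times d$ matrices (via $E_{ij}E_{ji}=E_{ii}$), hence under scalings, hence a subspace; you argue that $N/N^0$ is trivial because any nonzero coset in $\R^d\otimes(\R^k/U)$ has positive-dimensional, hence non-discrete, $\SL(d,\R)$-orbit. Where the proofs diverge is in producing the conjugating element. The paper forms $H'=H\cap H_{U^\perp}$, recognizes it as a Levi subgroup of $H_{U^\perp}$, and applies Malcev's theorem to conjugate it to $\SL(d,\R)$ by some $\vecxi\in L(U^\perp)$; since the rows of this $\vecxi$ already lie in $U^\perp$, the containment $n_-(\R^{d-1})\subset H$, which forces the first row of $\vecxi$ into $U$, immediately yields $\vecxi\in(\vece_1^\perp)^k$. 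You instead split the Lie algebra extension $0\to L(U)\to\ih\to\lsl(d,\R)\to 0$ using Whitehead's second lemma, turn the resulting cocycle into a coboundary $X\mapsto X\vecxi_0$ via Whitehead's first lemma, exponentiate, and establish $H=H^0$ by a short projection argument. Since your $\vecxi$ is not a priori in $L(U^\perp)$, you need the extra (correct) step of adjusting it modulo $L(U)$ to land in $(\vece_1^\perp)^k$. The two approaches are close in spirit---Malcev's theorem is in substance the global form of the Whitehead splitting plus conjugacy---but the paper's choice avoids explicit cohomology and the separate $H=H^0$ check, at the cost of quoting Malcev's theorem, while yours is more self-contained at the Lie algebra level.
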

\begin{proof}
Set $V=\{\vecxi\in(\R^d)^k\col (1_d,\vecxi)\in H\}$;
this is a closed subgroup of $\langle(\R^d)^k,+\rangle$,
and it follows using Lemma \ref{piHFULLlem} that $V$ is $\SL(d,\R)$-invariant,
i.e.\ $M\vecxi\in V$ whenever $M\in\SL(d,\R)$ and $\vecxi\in V$.
Let $\lsl(d,\R)$ be the Lie algebra of $\SL(d,\R)$,
i.e.\ the Lie algebra of $d\times d$ matrices with trace $0$.
Then for every ${\vecxi}\in V$, $A\in\lsl(d,\R)$ and $n\in\Z^+$ we have
$n(\exp(n^{-1}A){\vecxi}-{\vecxi})\in V$, and since $V$ is closed,
letting $n\to\infty$ we obtain $A{\vecxi}\in V$.
Using the formula $E_{ij}E_{ji}=E_{ii}$,
where $E_{ij}$ denotes the $d\times d$ matrix which has $(i,j)$th entry 1 and all other entries 0,
the last invariance is upgraded to:
$A{\vecxi}\in V$ for \textit{any} real $d\times d$-matrix $A$ and $\vecxi\in V$.
This is easily seen to imply 
$V=L(U)$ for some subspace $U\subset\R^k$.
Thus
\begin{align*}
N=H\cap\pi^{-1}(\{1_d\})=\{1_d\}\ltimes L(U).
\end{align*}
This is a normal subgroup of $G$.
Given any $M\in\SL(d,\R)$, by Lemma~\ref{piHFULLlem} there exists some $\vecxi\in(\R^d)^k$ such that 
$h:=(M,\vecxi)\in H$, and then $H\cap\pi^{-1}(\{M\})=Nh$.
Using also the fact that $(\R^d)^k=L(U)\oplus L(U^\perp)$ it follows that for each $M\in\SL(d,\R)$ there is a 
\textit{unique} $\veceta\in L(U^\perp)$ such that $(M,\veceta)\in H$.
Hence if we let $H'$ be the closed Lie subgroup of $H_{U^\perp}$ given by 
\begin{align*}
H':=H\cap H_{U^\perp},
\end{align*}
then $H'$ contains exactly one element above each $M\in\SL(d,\R)$,
and 
$H=NH'=H'N$.
Note that the unipotent radical of $H_{U^\perp}=\SL(d,\R)\ltimes L(U^\perp)$ equals $\{1_d\}\ltimes L(U^\perp)$,
and thus $H'$ is a Levi subgroup of $H_{U^\perp}$.
Hence by Malcev's Theorem
(\cite{Malcev42}; \cite[Ch.\ III.9]{Jacobson62})
there exists some $\vecxi\in L(U^\perp)$ such that 
$H'=(1_d,\vecxi)\SL(d,\R)(1_d,\vecxi)^{-1}$.
(Recall that we view $\SL(d,\R)$ as embedded in $G$ through $M\mapsto(M,\bn)$.)
Hence
\begin{align*}
H=NH'=
(1_d,\vecxi)H_U(1_d,\vecxi)^{-1}.
\end{align*}
Finally using \eqref{Hcontainsnm} we see that $\vecxi$ must lie in $(\vece_1^\perp)^k$.
\end{proof}


Next, for any linear subspace $U\subset\R^k$, $q\in\Z^+$ and $\vecxi\in(\vece_1^\perp)^k$,
we set
\begin{align}
\scrX_{U,q,\vecxi}=\{g\Gamma\col g\in G,\: g^{-1}\vecxi\in L(U)+q^{-1}(\Z^d)^k\}\subset X.
\end{align}
Note here that the set $L(U)+q^{-1}(\Z^d)^k$ is invariant under the action of $\Gamma$;
hence if $g^{-1}\vecxi\in L(U)+q^{-1}(\Z^d)^k$ then also 
$(g\gamma)^{-1}\vecxi\in L(U)+q^{-1}(\Z^d)^k$ for every $\gamma\in\Gamma$.
Note also that if $U$ intersects $\Z^k$ in a lattice
(viz., $\Z^k\cap U$ contains an $\R$-linear basis for $U$),
then $L(U)+q^{-1}(\Z^d)^k$ is a closed subset of
$(\R^d)^k$, and it follows that $\scrX_{U,q,\vecxi}$ is a closed subset of $X$.

\begin{lem}\label{SUPPORTlem}
There exist $q\in\Z^+$ and $\vecxi\in(\vece_1^\perp)^k$,
and a linear subspace $U\subset\R^k$ which intersects $\Z^k$ in a lattice, such that
$\supp(\nu)=Hg_0\Gamma/\Gamma\subset\scrX_{U,q,\vecxi}$.
\end{lem}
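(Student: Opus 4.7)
The plan is to retain the $U$ and $\vecxi$ supplied by Lemma \ref{HEXPLlem} and to produce the required $q$. Write $g_0=(M_0,\vecxi_0)$ and set $\vectheta:=M_0^{-1}(\vecxi_0-\vecxi)$. A direct calculation using the expression for $H$ from Lemma \ref{HEXPLlem} shows that every $h\in H$ has the form $h=(1_d,\vecxi)(M',\veceta)(1_d,-\vecxi)$ with $(M',\veceta)\in H_U$, and
\begin{align*}
(hg_0)^{-1}\vecxi=-\vectheta-M_0^{-1}(M')^{-1}\veceta.
\end{align*}
Since $L(U)$ is preserved under left multiplication by any element of $\SL(d,\R)$, the second summand ranges over all of $L(U)$ as $h$ runs through $H$. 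Thus the desired inclusion $Hg_0\Gamma/\Gamma\subset\scrX_{U,q,\vecxi}$ is equivalent to the single assertion
\begin{align*}
\vectheta\in L(U)+q^{-1}(\Z^d)^k.
\end{align*}

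The main input is the fact, already established in the discussion preceding the lemma, that $\Lambda_0:=\Gamma\cap H_0$ is a lattice in $H_0:=g_0^{-1}Hg_0$. Another short conjugation calculation (again using the $\SL(d,\R)$-invariance of $L(U)$) shows that the unipotent radical of $H_0$ is $H_0^u=\{1_d\}\ltimes L(U)$. By the standard structure theory for lattices in a Lie group of the form (semisimple)$\,\ltimes\,$(vector module), cf.\ Raghunathan, $\Lambda_0\cap H_0^u$ is a lattice in $H_0^u$ and $\pi(\Lambda_0)$ is a lattice in $H_0/H_0^u\cong\SL(d,\R)$. Now $\Lambda_0\cap H_0^u=\{1_d\}\ltimes(L(U)\cap(\Z^d)^k)$, so the first assertion yields that $L(U)\cap(\Z^d)^k$ is a lattice in $L(U)$, equivalently that $U\cap\Z^k$ is a lattice in $U$; this settles the first half of the lemma. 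The second assertion, together with $\pi(\Lambda_0)\subset\SL(d,\Z)$, shows that $\pi(\Lambda_0)$ is of finite index in $\SL(d,\Z)$.

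To locate $q$, I would expand the condition $g_0\gamma g_0^{-1}\in H$ for $\gamma=(M,\vecn)\in\Lambda_0$; a direct computation yields
\begin{align*}
(1_d-M)\vectheta+\vecn\in L(U),
\end{align*}
so $(1_d-M)\vectheta\in L(U)+(\Z^d)^k$ for every $M\in\pi(\Lambda_0)$. Because $\pi(\Lambda_0)$ has finite index in $\SL(d,\Z)$, for each ordered pair $i\neq j$ the cyclic subgroup $\{1_d+tE_{ij}:t\in\Z\}$ meets $\pi(\Lambda_0)$ nontrivially, so some positive integer $m_{ij}$ satisfies $1_d+m_{ij}E_{ij}\in\pi(\Lambda_0)$. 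Setting $q:=\mathrm{lcm}_{i\neq j}m_{ij}$, we deduce $qE_{ij}\vectheta\in L(U)+(\Z^d)^k$ for every $i\neq j$. But $E_{ij}\vectheta$ is the $d\times k$ matrix whose $i$th row equals the $j$th row of $\vectheta$ and whose other rows vanish, so membership in $L(U)+(\Z^d)^k$ forces the $j$th row of $q\vectheta$ to lie in $U+\Z^k$. Varying $j$, all rows of $q\vectheta$ lie in $U+\Z^k$, i.e.\ $q\vectheta\in L(U)+(\Z^d)^k$, and hence $\vectheta\in L(U)+q^{-1}(\Z^d)^k$, which is exactly what is needed.

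The principal obstacle is invoking the lattice-splitting result for the semidirect product $H_0=\SL(d,\R)\ltimes L(U)$ in the second paragraph; once this is granted, the rest is explicit linear algebra coupled with the standard torus-fixed-point observation that a point of $(\R^d)^k/(L(U)+(\Z^d)^k)$ fixed by all elementary matrices $1_d+mE_{ij}$ in a finite-index subgroup of $\SL(d,\Z)$ must be torsion with denominator controlled by the corresponding $m$'s.
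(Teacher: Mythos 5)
Your proof is correct and follows essentially the same route as the paper's: reuse the $U$ and $\vecxi$ from Lemma~\ref{HEXPLlem}, identify $\Gamma\cap g_0^{-1}Hg_0$ as a lattice in $g_0^{-1}Hg_0$, apply Raghunathan's structure theory for lattices in a semisimple$\,\ltimes\,$unipotent group (the paper cites \cite[Cor.~8.28]{mR72}) to conclude both that $U\cap\Z^k$ is a lattice in $U$ and that the projection to $\SL(d,\Z)$ has finite index, and then use the finite-index condition to produce the denominator~$q$. Your setup of $\vectheta=M_0^{-1}(\vecxi_0-\vecxi)$ is exactly $-\vecxi'$ in the paper's notation, and your reduction of the containment to the single statement $\vectheta\in L(U)+q^{-1}(\Z^d)^k$ matches the paper's computation for $g\in Hg_0\Gamma$.

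The one place you genuinely deviate is the last step: the paper picks a single element $\gamma$ of the finite-index subgroup for which $1_d-\gamma$ is invertible (e.g.\ a power of a hyperbolic element), and reads off $\vecxi'\in(1_d-\gamma)^{-1}\bigl(L(U)+(\Z^d)^k\bigr)\subset L(U)+(\Q^d)^k$ in one stroke. You instead use the family of elementary matrices $1_d+m_{ij}E_{ij}$, recover that each row of $q\vectheta$ lies in $U+\Z^k$, and then reassemble. Both are correct; yours is a bit more hands-on (it makes $q$ explicit as $\operatorname{lcm}_{i\neq j}m_{ij}$ and avoids having to exhibit a hyperbolic element), while the paper's is shorter. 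No gap.
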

\begin{proof}
Take $U\subset\R^k$ and $\vecxi\in(\vece_1^\perp)^k$ as in Lemma \ref{HEXPLlem};
then $H=(1_d,\vecxi)H_U(1_d,-\vecxi)$.
Now $\Gamma$ intersects $g_0^{-1}Hg_0$ in a lattice;
hence if $g=g_0^{-1}(1_d,\vecxi)$ then $g^{-1}\Gamma g$ intersects $H_U$ in a lattice.
Set $\vecxi'=g_0^{-1}\vecxi$; then $g=(M,\vecxi')=(1_d,\vecxi')M$ for some $M\in\SL(d,\R)$,
and since $M$ normalizes $H_U$, it follows that 
$\widetilde\Gamma:=(1_d,\vecxi')^{-1}\Gamma(1_d,\vecxi')\cap H_U$ is a lattice in $H_U$.
By \cite[Cor.\ 8.28]{mR72},
this implies that 
$\widetilde\Gamma_r:=\{\vecv\in L(U)\col (1_d,\vecv)\in\widetilde\Gamma\}=(\Z^d)^k\cap L(U)$ is a lattice in $L(U)$,
and $\pi(\widetilde\Gamma)$ is a lattice in $\SL(d,\R)$.
The first condition implies that $\Z^k\cap U$ contains an $\R$-linear basis for $U$,
i.e.\ $U$ intersects $\Z^k$ in a lattice.
Next we compute
\begin{align*}
\pi(\widetilde\Gamma)
=\{\gamma\in\SL(d,\Z)\col (1_d-\gamma)\vecxi'\in L(U)+(\Z^d)^k\}.
\end{align*}
This is a subgroup of $\SL(d,\Z)$ and a lattice in $\SL(d,\R)$;
hence $\pi(\widetilde\Gamma)$ must be a subgroup of finite index in $\SL(d,\Z)$.
Now fix any $\gamma\in\pi(\widetilde\Gamma)$ for which $1_d-\gamma$ is invertible 
(for example we can take $\gamma$ as an appropriate integer power of any given hyperbolic element in $\SL(d,\Z)$).
Then $1_d-\gamma\in\GL(d,\Q)$,
and we conclude $\vecxi'\in(1_d-\gamma)^{-1}(L(U)+(\Z^d)^k)\subset L(U)+(\Q^d)^k$,
i.e.\ $\vecxi'=\vecu+q^{-1}\vecm$ for some $\vecu\in L(U)$, $q\in\Z_{>0}$ and $\vecm\in(\Z^d)^k$.

Now for every $g\in Hg_0\Gamma$ we have
%
%
$(1_d,-\vecxi)g_0\Gamma g^{-1}(1_d,\vecxi)\cap H_U\neq\emptyset$,
i.e.\ there is some $\gamma\in\Gamma$ such that 
$(1_d,-\vecxi)g_0\gamma g^{-1}(1_d,\vecxi)\bn\in L(U)$,
or equivalently $g^{-1}\vecxi\in\gamma^{-1} g_0^{-1}(1_d,\vecxi)L(U)$.
But we have $g_0^{-1}(1_d,\vecxi)=(M,\vecxi')=(M,\vecu+q^{-1}\vecm)$
and hence 
$\gamma^{-1} g_0^{-1}(1_d,\vecxi)L(U)=\gamma^{-1}(L(U)+q^{-1}\vecm)\subset L(U)+q^{-1}(\Z^d)^k$.
Hence every $g\in Hg_0\Gamma$ satisfies $g^{-1}\vecxi\in L(U)+q^{-1}(\Z^d)^k$,
i.e.\ we have $Hg_0\Gamma/\Gamma\subset\scrX_{U,q,\vecxi}$.
\end{proof}


Recall that we have fixed $\mu$ as an arbitrary weak* limit of $(\mu_\rho)$ as $\rho\to0$.
The proof of the following Lemma \ref{MUTOT0lem} makes crucial use of the genericity assumption
\eqref{escMAINRATNERTHM1res1} in Theorem \ref{escMAINRATNERTHM1};
later Lemma \ref{MUTOT0lem} combined with Lemma \ref{SUPPORTlem}
will allow us to conclude that in the ergodic decomposition \eqref{ERGDEC},
we must have $\nu=\mu_X$ for $P$-almost all $\nu$.

\begin{lem}\label{MUTOT0lem}
Let $q\in\Z^+$ and let $U$ be a linear subspace of $\R^k$ of dimension $<k$ which intersects $\Z^k$ in a lattice.
Then 
$\mu\bigl(\cup_{\vecxi\in(\vece_1^\perp)^k}\:\scrX_{U,q,\vecxi}\bigr)=0$.
\end{lem}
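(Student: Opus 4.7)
My plan is to filter the uncountable union $Y := \bigcup_{\vecxi\in(\vece_1^\perp)^k}\scrX_{U,q,\vecxi}$ by size, and to show the bulk of the measure is not supported on it by a direct computation that collapses the conditions defining $Y$ into a condition on $\vecphi(\vecv)$ that shrinks tightly onto a set of $\lambda$-measure zero. Concretely, I would set
\begin{align*}
Y^R := \bigcup_{\vecxi\in(\vece_1^\perp)^k,\,\|\vecxi\|\leq R}\scrX_{U,q,\vecxi}, \qquad \scrS := L(U)+q^{-1}(\Z^d)^k,
\end{align*}
and first check that $\scrS$ is closed in $(\R^d)^k$ (since $\Z^k\cap U$ is a lattice) and that $Y^R$ is closed in $X$ (the parameter set $\overline{B_R}\cap(\vece_1^\perp)^k$ is compact, so limits of $\vecxi$-witnesses exist). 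Because $Y = \bigcup_R Y^R$ is an increasing union, it suffices to prove $\mu(Y^R)=0$ for each $R>0$.

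The essential computation is that $D(\rho)^{-1}$ acts as multiplication by $\rho$ on $\vece_1^\perp$ and $n_-(-\vecv)$ fixes $\vece_1^\perp$ pointwise, so that for $\vecxi\in(\vece_1^\perp)^k$,
\begin{align*}
g_\rho(\vecv)^{-1}\vecxi = \rho M^{-1}\vecxi - \vecphi(\vecv).
\end{align*}
Hence $g_\rho(\vecv)\Gamma\in Y^R$ if and only if $\vecphi(\vecv)\in\rho M^{-1}(\overline{B_R}\cap(\vece_1^\perp)^k)+\scrS$, and in particular $\vecphi(\vecv)\in N_{c\rho}(\scrS)$ with $c = R\,\|M^{-1}\|$. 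This is the crucial observation: the condition defining $Y^R$ forces $\vecphi(\vecv)$ to lie in a $\rho$-shrinking tubular neighborhood of the closed set $\scrS$.

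Next I would reduce the target condition $\vecphi(\vecv)\in\scrS$ to the genericity hypothesis. Since $\Z^k\cap U$ is a lattice in $U$ and $\dim U<k$, the subspace $U$ is rational and $U^\perp\cap\Z^k$ contains some $\vecw\neq\bn$. For any such $\vecw$, an element $\vecphi(\vecv)=\vecu+q^{-1}\vecm\in\scrS$ satisfies $\sum_j w_j\vecu_j=\bn$ (each row of $\vecu$ is in $U$ and $\vecw\in U^\perp$), so $\sum_j w_j\vecphi_j(\vecv)\in q^{-1}\gcd(\vecw)\Z^d\subset\Q^d\subset\R M^{-1}\cmatr1{-\vecv}+\Q^d$. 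The genericity hypothesis \eqref{escMAINRATNERTHM1res1} therefore gives $\lambda\bigl(\vecphi^{-1}(\scrS)\bigr)=0$. Since $\vecphi$ is continuous and $\scrS$ is closed, continuity of measure yields $\lambda\bigl(\vecphi^{-1}(N_\epsilon(\scrS))\bigr)\downarrow 0$ as $\epsilon\downarrow 0$.

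To transfer this to $\mu$, I would fix a compact $K\subset Y^R$ and for $\epsilon>0$ introduce the open neighbourhood $V_\epsilon:=\{g\Gamma\col \exists\vecxi\in(\vece_1^\perp)^k,\,\|\vecxi\|<R+\epsilon,\,\operatorname{dist}(g^{-1}\vecxi,\scrS)<\epsilon\}\supset Y^R$. The computation of the previous paragraph gives $\mu_\rho(V_\epsilon)\leq\lambda\bigl(\vecphi^{-1}(N_{\epsilon+c'\rho}(\scrS))\bigr)$ for a constant $c'$ depending on $R,\epsilon,M$; by the Portmanteau inequality for open sets under weak$^*$ convergence, $\mu(V_\epsilon)\leq\liminf_\rho\mu_\rho(V_\epsilon)\leq\lambda\bigl(\vecphi^{-1}(N_\epsilon(\scrS))\bigr)$. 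Letting $\epsilon\downarrow 0$ forces $\mu(K)=0$, and inner regularity of the Radon measure $\mu$ then yields $\mu(Y^R)=0$; taking $R\to\infty$ completes the proof. The main obstacle I anticipate is the non-compactness of the family of $\vecxi$'s parametrising the union — hence the filtration by $Y^R$ and the open-thickening device for controlling $\mu$ through the inequalities $\mu(K)\leq\mu(V_\epsilon)\leq\liminf_\rho\mu_\rho(V_\epsilon)$ rather than directly comparing $\mu(Y^R)$ with $\liminf_\rho\mu_\rho(Y^R)$.
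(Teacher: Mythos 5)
Your setup is sound and your derivation of $\lambda(\vecphi^{-1}(\scrS))=0$ from the genericity hypothesis (via $\vecw\in U^\perp\cap\Z^k$) is exactly what the paper uses. The computation $g_\rho(\vecv)^{-1}\vecxi=\rho M^{-1}\vecxi-\vecphi(\vecv)$ for $\vecxi\in(\vece_1^\perp)^k$ is also correct. But there is a genuine gap at the key inequality $\mu_\rho(V_\epsilon)\leq\lambda\bigl(\vecphi^{-1}(N_{\epsilon+c'\rho}(\scrS))\bigr)$, which I do not think can be salvaged in the form you propose.

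The problem is that $V_\epsilon$ must be an open subset of $X=G/\Gamma$, hence a union of full $\Gamma$-cosets. The exact condition $g^{-1}\vecxi\in\scrS$ is $\Gamma$-invariant (because $\Gamma\scrS=\scrS$), but the thickened condition $\operatorname{dist}(g^{-1}\vecxi,\scrS)<\epsilon$ is not: for a different representative $g\gamma$ (with $\gamma=(\gamma_0,\vecn)\in\Gamma$) one gets $\operatorname{dist}\bigl(\gamma_0^{-1}(g^{-1}\vecxi-\vecn),\scrS\bigr)<\epsilon$, and $\gamma_0^{-1}$ distorts distances by an unbounded factor as $\gamma_0$ ranges over $\SL(d,\Z)$. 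Thus $g_\rho(\vecv)\Gamma\in V_\epsilon$ only forces $\vecphi(\vecv)\in N_{\epsilon+c'\rho}(\scrS)$ when the witnessing representative is $g_\rho(\vecv)$ itself; other representatives give weaker conditions, so the correct inequality runs in the opposite direction. Alternatively, if you use the natural coset-invariant thickening $\{g\Gamma\col g\scrS\cap W_\epsilon\neq\emptyset\}$ with $W_\epsilon$ a genuine $\epsilon$-neighbourhood in $(\R^d)^k$ of the compact target set (this is what the paper's $\scrX_{U,q,C,\eta}$ does), the same obstruction resurfaces in a cleaner form: the target now has a small but nonzero extent $\eta$ in the $\vece_1$-direction, and $D(\rho)^{-1}$ expands that direction by $\rho^{1-d}\to\infty$. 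The preimage that $\vecphi(\vecv)$ must hit is therefore not a shrinking tubular neighbourhood of $\scrS$ but an $O(\rho)$-thin tube of length $\sim\eta\rho^{1-d}$ around each coset of $L(U)$ meeting $\scrS$, and $\lambda\bigl(\vecphi^{-1}(\text{such a tube})\bigr)$ has no reason to tend to zero merely because $\lambda(\vecphi^{-1}(\scrS))=0$.

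Controlling this residual tube contribution is the actual substance of the lemma. This is precisely what the paper's proof does after your genericity observation: it reduces, via the $\vecw$-contraction, to the family $\tT_\rho^\vecm$ indexed by $\vecm\in q^{-1}\Z^d$, uses the genericity hypothesis only to kill the finitely many small $\vecm$ (giving $\lim_\rho\lambda(X_\rho^\vecm)=0$ for each fixed $\vecm$), and then invokes the Lipschitz bound on $\vecphi$ together with a dyadic count over the remaining $\vecm$'s to show the total contribution from the long thin tube is $\ll\eta$, which can then be sent to zero. Your argument is missing this entire quantitative estimate; without it, the step from $\lambda(\vecphi^{-1}(\scrS))=0$ to $\limsup_\rho\mu_\rho(V_\epsilon)\to 0$ as $\epsilon\to 0$ is unsupported.
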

\begin{proof}
Let $\scrB_C^d$ be the closed ball of radius $C$ in $\R^d$ centered at the origin.
It suffices to prove that for each $C>0$, the set
\begin{align}
\scrX_{U,q,C}:=\bigcup_{\vecxi\in(\scrB_C^d\cap\vece_1^\perp)^k}\scrX_{U,q,\vecxi}\subset X
\end{align}
satisfies $\mu\bigl(\scrX_{U,q,C}\bigr)=0$.
Let $\scrN$ be the family of open subsets of $G$ containing the identity element.
Then for any $\Omega\in\scrN$, $\Omega\scrX_{U,q,C}$ is an open set in $X$ containing $\scrX_{U,q,C}$.
Hence, since $\mu$ is a weak* limit of $(\mu_\rho)$ as $\rho\to0$ along some subsequence, 
it now suffices to prove that for every $\ve>0$ there exists some $\Omega\in\scrN$ such that
$\limsup_{\rho\to0}\mu_\rho(\Omega\scrX_{U,q,C})<\ve$.
We have $g\Gamma\in\scrX_{U,q,C}$ if and only if
the set $g(L(U)+q^{-1}(\Z^d)^k)$ in $(\R^d)^k$ has some point in common with $(\scrB_C^d\cap\vece_1^\perp)^k$.
The latter is a compact set, which for any $\eta>0$ is contained in the open set $V_\eta^k$, 
where (after increasing $C$ by $1$)
\begin{align}
V_\eta:=\bigl\{(\xi_1,\ldots,\xi_d)\trans\col |\xi_1|<\eta,\: \|(\xi_2,\ldots,\xi_d)\|<C\bigr\}\subset\R^d.
\end{align}
Hence for every $\eta>0$, there exists some $\Omega\in\scrN$ such that
\begin{align}
\Omega\scrX_{U,q,C}\subset\scrX_{U,q,C,\eta}:=\bigl\{g\Gamma\col g(L(U)+q^{-1}(\Z^d)^k)\cap V_\eta^k\neq\emptyset\bigr\}.
\end{align}
Hence it now suffices to prove
\begin{align}
\lim_{\eta\to0}\limsup_{\rho\to0}\mu_\rho(\scrX_{U,q,C,\eta})=0.
\end{align}

By the definition of $\mu_\rho$
we have $\mu_\rho(\scrX_{U,q,C,\eta})=\lambda(T_\rho)$,
where
\begin{align*}
T_\rho &=\bigl\{\vecv\in\scrU\col D(\rho)n_-(\vecv)M(1_d,\vecphi(\vecv))\in \scrX_{U,q,C,\eta}\bigr\}
\\
&=\bigl\{\vecv\in\scrU\col D(\rho)n_-(\vecv)M(L(U)+q^{-1}(\Z^d)^k+\vecphi(\vecv))\cap V_\eta^k\neq\emptyset\bigr\}.
\end{align*}
It follows from our assumptions on $U$ that there exists some $\vecw\in\Z^k\setminus\{\bn\}$
such that $U$ is contained in $\vecw^\perp$, the orthogonal complement of $\vecw$ in $\R^k$.
Now every $\vecxi\in L(U)+q^{-1}(\Z^d)^k$
satisfies $\vecxi\cdot\vecw\in q^{-1}\Z^d$,
and hence for any $\vecv\in\scrU$, every $\vecxi$ in the set $D(\rho)n_-(\vecv)M(L(U)+q^{-1}(\Z^d)^k+\vecphi(\vecv))$
satisfies
\begin{align}
\vecxi\cdot\vecw\in D(\rho)n_-(\vecv)M(q^{-1}\Z^d+\vecphi(\vecv)\cdot\vecw).
\end{align}
But on the other hand, for every $\vecxi\in V_\eta^k$ we have 
\begin{align}
\vecxi\cdot\vecw\in \|\vecw\|V_\eta=
\bigl\{(\xi_1,\ldots,\xi_d)\trans\col |\xi_1|<\|\vecw\|\eta,\: \|(\xi_2,\ldots,\xi_d)\|<\|\vecw\|C\bigr\}.
\end{align}
Hence
\begin{align}
T_\rho\subset\bigl\{\vecv\in\scrU\col D(\rho)n_-(\vecv)M(q^{-1}\Z^d+\vecphi(\vecv)\cdot\vecw)\cap 
\|\vecw\|V_\eta\neq\emptyset\bigr\}.
\end{align}
Therefore, if we alter the constant ``$C$'' appropriately in the definition of $V_\eta$,
we see that it now suffices to prove that
\begin{align}\label{MUTOT0lempf2}
\lim_{\eta\to0}\limsup_{\rho\to0}\lambda\biggl(\bigcup_{\vecm\in q^{-1}\Z^d}\tT_\rho^\vecm\biggr)=0,
\end{align}
where
\begin{align}
\tT_\rho^\vecm :&=\bigl\{\vecv\in\scrU\col D(\rho) n_-(\vecv)M(\vecm+\vecphi(\vecv)\cdot\vecw)\in V_\eta\bigr\}.
\end{align}

For $\vecv\in\R^{d-1}$ and $\vecu=(u_1,\ldots,u_d)\trans\in\R^d$ let us write
$\vecu_\perp:=(u_2,\ldots,u_d)\trans\in\R^{d-1}$ and
$\vecell_\vecv(\vecu)=u_1\vecv+\vecu_\perp\in\R^{d-1}$,
so that $n_-(\vecv)\vecu =\cmatr{\vece_1\cdot\vecu}{\vecell_\vecv(\vecu)}$.
Then the set $\tT_\rho^\vecm$ can be expressed as
\begin{align}
\tT_\rho^\vecm=X_\rho^\vecm\cap Y_\rho^\vecm,
\end{align}
where
\begin{align*}
X_\rho^{\vecm}=\Bigl\{\vecv\in \scrU\col
\vecell_\vecv(M(\vecm+\vecphi(\vecv)\cdot\vecw))\in\scrB_{C\rho}^{d-1}\Bigr\}
\end{align*}
and
\begin{align*}
Y_\rho^\vecm=\Bigl\{\vecv\in\scrU\col\vece_1\cdot
M(\vecm+\vecphi(\vecv)\cdot\vecw)
\in(-\eta\rho^{1-d},\eta\rho^{1-d})\Bigr\}.
\end{align*}
Let us note 
that the genericity assumption \eqref{escMAINRATNERTHM1res1} in Theorem \ref{escMAINRATNERTHM1}
immediately implies that
\begin{align}\label{MUTOT0lempf1}
\lim_{\rho\to0}\lambda(X_\rho^\vecm)
=0\qquad\text{for each fixed $\vecm\in q^{-1}\Z^d$.}
\end{align}
Next, since $\vecphi$ is Lipschitz and $\scrU$ is bounded
(after the initial reduction on p.\ \pageref{INITIALREDUCTION}), 
there exists a constant $C_1>0$ such that
for any $\rho>0$ and $\vecm\in q^{-1}\Z^d$,
\begin{align}\label{escMAINRATNERTHM1pf1}
|\vece_1\cdot M\vecm |>C_1\:\Rightarrow\:
\Leb\bigl(X_\rho^\vecm\bigr)\ll\rho^{d-1}|\vece_1\cdot M\vecm|^{1-d}.
\end{align}
(Here and in the rest of the proof, the implied constant in any $\ll$ bound is allowed to depend on
$C,q,M,\vecw,\vecphi$, but \textit{not} on $\vecm,\eta,\rho$.)
Furthermore, increasing $C_1$ if necessary,
and assuming that $\rho$ is so small that $\eta\rho^{1-d}\geq1$ and $C\rho<1$, we see that
\begin{align}
|\vece_1\cdot M\vecm|\geq C_1\eta\rho^{1-d}\:\Rightarrow\: Y_\rho^\vecm=\emptyset.
\end{align}
and
\begin{align*}
\|(M\vecm)_\perp\|\geq C_1\bigl(1+|\vecm M\cdot\vece_1|\bigr)\:\Rightarrow\: X_\rho^\vecm=\emptyset.
\end{align*}
Hence if we set 
\begin{align*}
&A_1=\{\vecm\in q^{-1}\Z^d\col|\vece_1\cdot M\vecm|<C_1\eta\rho^{1-d}\};
\\
&A_2=\{\vecm\in q^{-1}\Z^d\col|\vece_1\cdot M\vecm|>C_1\};
\\
&A_3=\{\vecm\in q^{-1}\Z^d\col\|(\vecm M)_\perp\|< C_1(1+|\vece_1\cdot M\vecm|)\},
\end{align*}
then for any $0<\eta<1$ and $0<\rho<\min(C^{-1},\eta^{1/(d-1)})$, we have
\begin{align*}
\lambda\biggl(\bigcup_{\vecm\in q^{-1}\Z^d}\tT_\rho^\vecm\biggr)\leq
\sum_{\vecm\in A_1\cap A_3}\lambda\bigl(X_\rho^\vecm\bigr)
\ll\sum_{\vecm\in A_1\cap A_2\cap A_3}\rho^{d-1}|\vece_1\cdot M\vecm|^{1-d}
+\sum_{\vecm\in A_3\setminus A_2}\lambda(X_\rho^\vecm).
\end{align*}
(In the last bound we used the fact that $\lambda(A)\ll\Leb(A)$ uniformly over all Borel sets $A\subset\scrU$,
because of our initial reduction on p.\ \pageref{INITIALREDUCTION}.)
Here $A_3\setminus A_2$ is a finite set,
and hence the last sum above tends to zero as $\rho\to0$, by \eqref{MUTOT0lempf1}.
Finally the set $A_1\cap A_2\cap A_3$ can be covered by the dyadic pieces
$D_s=A_3\cap\{ 2^s C_1<|\vece_1\cdot M\vecm|\leq 2^{s+1}C_1\}$ with $s$ running through
$0,1,\ldots,S:=\lceil\log_2(\eta\rho^{1-d})\rceil$.
Here $\# D_s\ll2^{sd}$ and so
\begin{align*}
\sum_{\vecm\in A_1\cap A_2\cap A_3}\rho^{d-1}|\vece_1\cdot M\vecm|^{1-d}
\ll\rho^{d-1}\sum_{s=0}^S 2^{sd}\cdot 2^{s(1-d)}
\ll\rho^{d-1}2^{S}\ll\eta.
\end{align*}
Taken together these bounds prove that \eqref{MUTOT0lempf2} holds, and the lemma is proved.
\end{proof}

We are now in a position to complete the proof of Theorem \ref{escMAINRATNERTHM1}.

\begin{proof}[Conclusion of the proof of Theorem \ref{escMAINRATNERTHM1}]
We wish to prove that our arbitrary weak* limit $\mu$ necessarily equals $\mu_X$.
Assume the contrary; $\mu\neq\mu_X$;
then in the ergodic decomposition \eqref{ERGDEC} we have $P(\scrE\setminus\{\mu_{X}\})>0$.
Using then Lemma \ref{SUPPORTlem}, and the fact that there are only countably many $q\in\Z^+$,
and countably many subspaces $U\subset\R^k$ intersecting $\Z^k$ in a lattice,
it follows that there exists some such subspace $U$ of dimension $<k$, and some $q\in\Z^+$, such that
$\mu\bigl(\:\bigcup\:\bigl\{\scrX_{U,q,\vecxi}\col\vecxi\in(\vece_1^\perp)^k\bigr\}\bigr)>0$.
This contradicts Lemma \ref{MUTOT0lem}.
Hence Theorem~\ref{escMAINRATNERTHM1} is proved.
\label{escMAINRATNERTHM1pfFINISH}
\end{proof}

Next we note the following consequence of Theorem \ref{escMAINRATNERTHM1}.
\begin{cor}\label{SPHERICALRATNERcor}
Let $M\in\SL(d,\R)$, let $\scrU\subset\R^{d-1}$ be an open subset and let $E_1:\scrU\to\SO(d)$ be a smooth map such that the map
$\vecx\mapsto E_1(\vecx)^{-1}\vece_1$ from $\scrU$ to $\S_1^{d-1}$ has a nonsingular differential at (Lebesgue-)almost
all $\vecx\in \scrU$.
Let $\vecphi:\scrU\to(\R^d)^k$ be a Lipschitz map,
and let $\lambda$ be a Borel probability measure on $\scrU$, absolutely continuous with respect to Lebesgue measure.
Assume that for every $\vecw=(w_1,\ldots,w_k)\in\Z^k\setminus\{\bn\}$,
\begin{align}\label{SPHERICALRATNERcorASS1}
\lambda\biggl(\biggl\{\vecx\in \scrU\col
\sum_{j=1}^k w_j\cdot\vecphi_j(\vecx)\in\R M^{-1}E_1(\vecx)^{-1}\vece_1+\Q^d\biggr\}\biggr)=0.
\end{align}
Then for any bounded continuous function $f:X\times\scrU\to\R$,
\begin{align}
\lim_{\rho\to0}\int_{\scrU}
f\bigl(D(\rho) E_1(\vecx)M(1_d,\vecphi(\vecx)),\vecx\bigr)\,d\lambda(\vecx)
=\int_{X\times\scrU}f(g,\vecx)\,d\mu_X(g)\,d\lambda(\vecx).
\end{align}
\end{cor}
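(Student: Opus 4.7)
The plan is to reduce Corollary~\ref{SPHERICALRATNERcor} to Theorem~\ref{escMAINRATNERTHM1} by writing $D(\rho)E_1(\vecx)M$ as a slowly-varying left translate of the horospherical element $D(\rho)n_-(\vecv(\vecx))M$ for an appropriately chosen $\vecv(\vecx)\in\R^{d-1}$. First, by a standard no-escape-of-mass truncation (using that $\mu_X$ and $\lambda$ are probability measures and that Theorem~\ref{escMAINRATNERTHM1} already handles bounded continuous test functions on $X$), followed by Stone--Weierstrass approximation of $f$ by finite sums of product-type functions $f_1(g)\psi(\vecx)$ with $f_1\in\C_c(X)$, $\psi\in\C_c(\scrU)$, and by a partition of unity on the full-$\lambda$-measure set where the differential of $\vecx\mapsto E_1(\vecx)^{-1}\vece_1$ is nonsingular, it suffices to prove the stated limit when $\psi$ is supported in an arbitrarily small neighborhood $U_0$ of any such ``good'' point $\vecx_0\in\scrU$.

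On $U_0$, after possibly replacing $E_1(\vecx)$ by $E_1(\vecx)k_0$ and $M$ by $k_0^{-1}M$ for a suitable fixed $k_0\in\SO(d)$ (which leaves invariant both the product $E_1(\vecx)M$ and the two hypotheses of the corollary, since $(E_1k_0)^{-1}\vece_1=k_0^{-1}E_1^{-1}\vece_1$ and $(k_0^{-1}M)^{-1}(E_1k_0)^{-1}\vece_1=M^{-1}E_1^{-1}\vece_1$), we may assume $(E_1(\vecx_0))_{1,1}\neq 0$, so that we can smoothly define $\vecv(\vecx)\in\R^{d-1}$ and $\alpha(\vecx):=(E_1(\vecx))_{1,1}\neq0$ by
\begin{align*}
E_1(\vecx)^{-1}\vece_1=\alpha(\vecx)\,\cmatr{1}{-\vecv(\vecx)}=\alpha(\vecx)\,n_-(\vecv(\vecx))^{-1}\vece_1.
\end{align*}
Then $E_1(\vecx)n_-(\vecv(\vecx))^{-1}$ stabilizes the line $\R\vece_1$ and so has block form $\matr{\alpha(\vecx)^{-1}}{\vecw(\vecx)\trans}{\bn}{E_1'(\vecx)}$ with $E_1'(\vecx)\in\GL(d-1,\R)$, $\det E_1'(\vecx)=\alpha(\vecx)$. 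Using the asymmetry of $D(\rho)=\diag(\rho^{d-1},\rho^{-1},\ldots,\rho^{-1})$, the conjugate
\begin{align*}
S(\vecx,\rho):=D(\rho)\,E_1(\vecx)\,n_-(\vecv(\vecx))^{-1}\,D(\rho)^{-1}=\matr{\alpha(\vecx)^{-1}}{\rho^d\vecw(\vecx)\trans}{\bn}{E_1'(\vecx)}\in\SL(d,\R)
\end{align*}
converges, as $\rho\to0$, uniformly on $U_0$ to $S_0(\vecx):=\diag(\alpha(\vecx)^{-1},E_1'(\vecx))$, and we obtain the identity
\begin{align*}
D(\rho)E_1(\vecx)M\,(1_d,\vecphi(\vecx))=\bigl(S(\vecx,\rho),\bn\bigr)\cdot D(\rho)n_-(\vecv(\vecx))M\,(1_d,\vecphi(\vecx)).
\end{align*}
Nonsingularity of the differential of $\vecx\mapsto E_1(\vecx)^{-1}\vece_1$ at $\vecx_0$ forces $\vecx\mapsto\vecv(\vecx)$ to have nonsingular differential there too; shrinking $U_0$, we may assume $\vecv\colon U_0\to\vecv(U_0)\subset\R^{d-1}$ is a diffeomorphism. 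Moreover the identity $\R M^{-1}E_1(\vecx)^{-1}\vece_1=\R M^{-1}n_-(\vecv(\vecx))^{-1}\vece_1$ shows that the genericity assumption \eqref{SPHERICALRATNERcorASS1} pulls back under $\vecx\mapsto\vecv(\vecx)$ to exactly the genericity assumption of Theorem~\ref{escMAINRATNERTHM1}.

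We now replace $S(\vecx,\rho)$ by $S_0(\vecx)$ at cost tending to $0$ (by uniform continuity of $f_1$ and continuity of left translation on $X$), subdivide $U_0$ into small pieces $U_i$ with marked points $\vecx_i$, and on each $U_i$ replace $S_0(\vecx)$ by $S_0(\vecx_i)$ and $\psi(\vecx)$ by $\psi(\vecx_i)$---again at vanishing cost as the pieces shrink, by uniform continuity. After the change of variables $\vecv=\vecv(\vecx)$ on $U_i$, each piece contributes
\begin{align*}
\psi(\vecx_i)\int_{\vecv(U_i)}\widetilde f_{1,i}\bigl(D(\rho)n_-(\vecv)M\,(1_d,\widetilde\vecphi(\vecv))\bigr)\,d\widetilde\lambda(\vecv),\qquad \widetilde f_{1,i}(g):=f_1(S_0(\vecx_i)g),
\end{align*}
with $\widetilde\vecphi:=\vecphi\circ\vecv^{-1}$ Lipschitz and $\widetilde\lambda$ the absolutely continuous push-forward of $\lambda|_{U_i}$. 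Theorem~\ref{escMAINRATNERTHM1} then applies, and by left-invariance of $\mu_X$ gives convergence to $\mu_X(\widetilde f_{1,i})\cdot\widetilde\lambda(\vecv(U_i))=\mu_X(f_1)\cdot\lambda(U_i)$. Summing over $i$ and refining the partition yields the desired limit $\mu_X(f_1)\int_\scrU\psi\,d\lambda$. The main technical point is that the $\vecx$-dependent prefactor $S(\vecx,\rho)$---which does not come from any fixed element of $G$---must be absorbed by a combination of the Haar-invariance of $\mu_X$ on $X$ and a freeze-on-pieces approximation; this works precisely because $S_0(\vecx)$ lies in the block stabilizer of $\R\vece_1$ and because the $O(\rho^d)$ upper-triangular correction in $S(\vecx,\rho)$ is crushed at exactly the rate dictated by the asymmetry of $D(\rho)$.
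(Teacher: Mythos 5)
Your argument is correct and amounts to a self-contained working-out of the ``direct mimic of \cite[Cor.\ 5.4]{partI}'' that the paper invokes; the core idea is the same in both. The differences are in the bookkeeping. The paper first upgrades Theorem \ref{escMAINRATNERTHM1} to the version \eqref{escMAINRATNERTHM1res2gen} with $\rho$-dependent test functions $f_\rho(g,\vecv)$ converging uniformly on compacta, and then, localizing near $\vecx_0$, absorbs the block-upper-triangular prefactor $D(\rho)P(\vecx)D(\rho)^{-1}$ (their $P(\vecx)=E_1(\vecx)E_1(\vecx_0)^{-1}n_-(\widetilde\vecx)^{-1}$, which is close to $1_d$ near $\vecx_0$) directly into $f_\rho$. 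You avoid the generalized statement entirely by (i) using a fixed $k_0\in\SO(d)$ to steer clear of the pole rather than expanding around $E_1(\vecx_0)$ (a slightly more symmetric variant of the same localization), and (ii) handling the prefactor $S(\vecx,\rho)$ via uniform convergence $S(\vecx,\rho)\to S_0(\vecx)$ plus a freeze-on-pieces argument combined with left-invariance of $\mu_X$, which lets you feed the \emph{unmodified} Theorem \ref{escMAINRATNERTHM1} (for probability-normalized $\widetilde\lambda|_{U_i}$ and the Lipschitz map $\widetilde\vecphi=\vecphi\circ\vecv^{-1}$). The transfer of the genericity hypothesis checks out because $\R M^{-1}E_1(\vecx)^{-1}\vece_1=\R(k_0^{-1}M)^{-1}\cmatr1{-\vecv(\vecx)}$ and the $k_0$-replacement leaves both hypotheses and the product $E_1(\vecx)M$ unchanged, exactly as you note. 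Net effect: you trade the paper's single approximation lemma \eqref{escMAINRATNERTHM1res2gen} and reference to \cite{partI} for an explicit Stone--Weierstrass/partition/freezing scheme; the result is longer but more elementary and does not require the reader to unpack \cite[Cor.\ 5.4]{partI}.
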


\begin{proof}
Let us first note that if 
\eqref{escMAINRATNERTHM1res2} holds for every bounded continuous function
$f:X\to\R$, then by a standard approximation argument
(cf.\ \cite[proof of Thm.\ 5.3]{partI}),
also the following more general limit statement holds:
For each small $\rho>0$,
let $f_\rho:X\times\scrU\to\R$ be a continuous function
satisfying $|f_\rho|<B$
where $B$ is a fixed constant,
and assume that $f_\rho\to f$ as $\rho\to0$, uniformly on compacta,
for some continuous function $f:X\times\scrU\to\R$.
Then
\begin{align}\label{escMAINRATNERTHM1res2gen}
\lim_{\rho\to0}\int_{\scrU}
f_\rho\bigl(D(\rho) n_-(\vecv) M (1_d,\vecphi(\vecv)),\vecv\bigr)\,d\lambda(\vecv)
=\int_{X\times\scrU}f(g,\vecv)\,d\mu_X(g)\,d\lambda(\vecv).
\end{align}
Now Corollary \ref{SPHERICALRATNERcor} is proved by a direct mimic of the proof of
\cite[Cor.\ 5.4]{partI},
using \eqref{escMAINRATNERTHM1res2gen} in place of 
\cite[Thm.\ 5.3]{partI}.
(Recall that we translate from the setting in \cite{partI}
by applying the transpose map, which also changes order of multiplication.
Following the proof of 
\cite[\href{http://file://T:marklof/marklofstrombergsson2010b.pdf:27}{Cor.\ 5.4}]{partI},
the task becomes to prove that
$D(\rho) n_-(\widetilde\vecx)E_1(\vecx_0)M(1_d,\vecphi(\vecx))$,
for $\vecx$ in a fixed small neighborhood of an arbitrary point $\vecx_0\in\scrU$,
becomes asymptotically equidistributed in $X$ as $\rho\to0$.
Here 
$\widetilde\vecx=-c(\vecx)^{-1}\vecv(\vecx)$ with
$c(\vecx)$ and $\vecv(\vecx)$ given by
$\begin{pmatrix}c(\vecx)\\\vecv(\vecx)\end{pmatrix}=E_1(\vecx_0)E_1(\vecx)^{-1}\vece_1$.
The condition for equidistribution, \eqref{escMAINRATNERTHM1res1}, then becomes
\begin{align*}
\lambda\biggl(\biggl\{\vecx\col
\sum_{j=1}^k w_j\cdot\vecphi_j(\vecx)\in\R M^{-1}E_1(\vecx_0)^{-1}\begin{pmatrix}1\\-\widetilde\vecx\end{pmatrix}
+\Q^d\biggr\}\biggr)=0,
\end{align*}
or equivalently, \eqref{SPHERICALRATNERcorASS1}.)
\end{proof}

Finally from Corollary \ref{SPHERICALRATNERcor} we derive the following equidistribution result,
which is more directly adapted to the proof of Theorem \ref{thm:main001}.
Recall from Section \ref{sec:general} that we have fixed the map
$\vecv\mapsto R_\vecv$, $\US\to\SO(d)$,
such that $R_\vecv \vecv=\vece_1$ for all $\vecv\in\US$,
and such that $\vecv\mapsto R_\vecv$ is smooth throughout $\US\setminus\{\vecv_0\}$.
Note that since the proof below involves using Sard's Theorem, 
the proof does not apply to arbitrary Lipschitz maps.

\begin{thm}\label{KEYEQUIDISTRTHM2}
Let $\scrU$ be an open subset of $\R^m$ ($m\geq1$),
let $\lambda$ be a Borel probability measure on $\scrU$ which is absolutely continuous with respect to Lebesgue measure,
and let $\vecf:\scrU\to \RR^d$ be a smooth map.
Assume that $\vecf(\vecJ)\neq\bn$ for all $\vecJ\in\scrU$ and $\lambda$ is $\vecf$-regular.
Also let $\vecphi:\scrU\to(\R^d)^k$ be a smooth map such that
for every $\vecm=(m_1,\ldots,m_k)\in\Z^k\setminus\{\bn\}$,
\begin{align}\label{KEYEQUIDISTRTHM2ass}
\lambda\bigg(\bigg\{ \vecJ\in\scrU : \sum_{j=1}^k m_j \, \vecphi_j(\vecJ)
\in \RR \vecf(\vecJ) + \QQ^d \bigg\}\bigg) = 0.
\end{align}
Then for any $h\in\C_b(X\times\scrU)$, writing $\vecv(\vecJ):=\|\vecf(\vecJ)\|^{-1}\vecf(\vecJ)$,
\begin{align}\label{KEYEQUIDISTRTHM2res}
\lim_{\rho\to0}
\int_{\scrU}h\big(D(\rho) R_{\vecv(\vecJ)}\big(1_d,\vecphi(\vecJ) 
\big),\,\vecJ\bigr)\,d\lambda(\vecJ)
=\int_{\scrU}\int_{X}h(p,\vecJ)\,d\mu_X(p)\,d\lambda(\vecJ).
\end{align}
\end{thm}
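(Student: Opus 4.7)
The plan is to reduce Theorem \ref{KEYEQUIDISTRTHM2} to Corollary \ref{SPHERICALRATNERcor} by localization and dimension reduction. Three gaps have to be closed: (a) the parameter space $\scrU \subset \RR^m$ has arbitrary dimension $m$, while the corollary requires a $(d-1)$-dimensional domain; (b) $R_{\vecv(\vecJ)}$ must be realized as the ``$E_1(\vecx)$'' of the corollary (with $M = 1_d$), so the nonsingular-differential condition on $\vecx \mapsto E_1(\vecx)^{-1}\vece_1$ has to be produced; (c) the map $\vecv \mapsto R_\vecv$ is only smooth away from the fixed point $\vecv_0 \in \US$.

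\textbf{Step 1: Strip off singular points.} Set $\vecv(\vecJ) := \vecf(\vecJ)/\|\vecf(\vecJ)\|$. By $\vecf$-regularity, $\vecv_*\lambda$ is absolutely continuous on $\US$, so $\lambda(\vecv^{-1}(\{\vecv_0\})) = 0$. Applying Sard's Theorem to the smooth map $\vecv:\scrU \to \US$, the set $C \subset \US$ of critical values has Lebesgue measure zero, hence $\lambda(\vecv^{-1}(C)) = 0$ by $\vecf$-regularity, so the set of critical points of $\vecv$ has $\lambda$-measure zero. (In particular $m \geq d-1$, for otherwise $\vecv_*\lambda$ could not be absolutely continuous.) Consequently the open set
\[\scrU^* := \bigl\{\vecJ \in \scrU \:\big|\: \operatorname{rank} d\vecv(\vecJ) = d-1 \text{ and } \vecv(\vecJ) \neq \vecv_0\bigr\}\]
carries full $\lambda$-mass. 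This is the step that forces the smoothness hypothesis, as flagged in the introduction: Sard's Theorem fails for merely Lipschitz maps.

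\textbf{Step 2: Local coordinates and partition of unity.} On $\scrU^*$, $\vecv$ is a submersion into $\US \setminus \{\vecv_0\}$. For each $\vecJ_0 \in \scrU^*$, the submersion theorem yields relatively compact open sets $V_i \subset \RR^{d-1}$ and $W_i \subset \RR^{m-(d-1)}$ and a diffeomorphism $\psi_i:V_i \times W_i \to N_i \subset \scrU^*$ such that $\vecv \circ \psi_i(\vecx,\vecy) = \chi_i(\vecx)$ for a smooth embedding $\chi_i:V_i \to \US \setminus \{\vecv_0\}$. Cover $\scrU^*$ by countably many such $N_i$ and fix a smooth partition of unity $\{\rho_i\}$ subordinate to $\{N_i\}$. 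Since $h$ is bounded and $\scrU \setminus \scrU^*$ is $\lambda$-null, it suffices to prove the convergence with $\lambda$ replaced by the finite absolutely continuous measure $\rho_i \lambda$ on $N_i$, separately for each $i$.

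\textbf{Step 3: Disintegrate and invoke Corollary \ref{SPHERICALRATNERcor}.} Pull $\rho_i \lambda$ back through $\psi_i$ and disintegrate it on $V_i \times W_i$ as $\lambda_{i,\vecy}(d\vecx)\,d\sigma_i(\vecy)$, with each $\lambda_{i,\vecy}$ absolutely continuous on $V_i$. Fubini, applied once for each of the countably many $\vecw \in \ZZ^k \setminus \{\bn\}$, converts the global genericity \eqref{KEYEQUIDISTRTHM2ass} into the slicewise statement: for $\sigma_i$-almost every $\vecy$,
\[\lambda_{i,\vecy}\bigl(\bigl\{\vecx \in V_i \:\big|\: \textstyle\sum_{j} w_j \vecphi_j(\psi_i(\vecx,\vecy)) \in \RR \chi_i(\vecx) + \QQ^d\bigr\}\bigr) = 0 \quad \text{for all } \vecw \in \ZZ^k \setminus \{\bn\}.\]
Now apply Corollary \ref{SPHERICALRATNERcor} with $M = 1_d$, $E_1(\vecx) := R_{\chi_i(\vecx)}$ (so $E_1(\vecx)^{-1}\vece_1 = \chi_i(\vecx)$, which has nonsingular differential since $\chi_i$ is an embedding), $\tilde\vecphi_\vecy(\vecx) := \vecphi(\psi_i(\vecx,\vecy))$ (smooth on the relatively compact $V_i$, hence Lipschitz), and the absolutely continuous measure $\lambda_{i,\vecy}$ (normalized to a probability measure). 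This yields convergence of the inner $\vecx$-integral for $\sigma_i$-a.e.\ $\vecy$. Boundedness of $h$ allows dominated convergence in $\vecy$, and summing over $i$ gives \eqref{KEYEQUIDISTRTHM2res}.

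\textbf{Main obstacle.} The delicate point is propagating the genericity hypothesis \eqref{KEYEQUIDISTRTHM2ass} through the dimension reduction to obtain the slicewise condition \eqref{SPHERICALRATNERcorASS1}; this requires Fubini combined with a countable union over $\vecw \in \ZZ^k \setminus \{\bn\}$, and relies on the fact that the pullback of a $\lambda$-null set via the diffeomorphism $\psi_i$ is null for $(\lambda_{i,\vecy} \otimes \sigma_i)$, hence for a.e.\ $\vecy$ is a $\lambda_{i,\vecy}$-null slice. A secondary obstacle is verifying that the singular locus (where $\vecv$ fails to be a regular submersion, or where $\vecv = \vecv_0$) is $\lambda$-negligible, which is exactly what forces the combination of Sard's Theorem with $\vecf$-regularity.
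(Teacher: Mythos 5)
Your proposal is correct and takes essentially the same route as the paper: Sard's Theorem combined with $\vecf$-regularity to discard critical points and the preimage of $\vecv_0$, local straightening of the submersion $\vecv$ so that it depends only on the first $d-1$ coordinates, application of Corollary \ref{SPHERICALRATNERcor} with $M=1_d$ and $E_1=R_\vecv$ on the $(d-1)$-dimensional slices, Fubini to propagate the genericity hypothesis \eqref{KEYEQUIDISTRTHM2ass} to the slicewise condition \eqref{SPHERICALRATNERcorASS1}, and dominated convergence to reassemble. The only cosmetic difference is that you phrase the localization as a countable cover with a partition of unity and a sum over $i$, whereas the paper more tersely invokes "decomposition and approximation of $\lambda$" to reduce to a single coordinate chart; both conceal the same small bookkeeping step of truncating the countable decomposition of $\lambda$ to a finite one up to an arbitrarily small error in total mass, controlled by boundedness of $h$.
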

\begin{proof} 
Note that $\vecv$ is a smooth map from $\scrU$ to $\US$,
and the fact that $\lambda$ is $\vecf$-regular means exactly that
$\vecv_*(\lambda)$ is absolutely continuous with respect to the Lebesgue measure on $\US$.
Hence $m\geq d-1$, 
and by Sard's Theorem the set of critical values of $\vecv$ has measure zero with respect to $\vecv_*(\lambda)$,
and so the set of critical points of $\vecv$ has measure zero with respect to $\lambda$.
For each point $\vecJ\in\scrU$ which is not a critical point of $\vecv$,
there exists a diffeomorphism $\iota$ from the unit box $(0,1)^m$ onto an open neighborhood of $\vecJ$ in $\scrU$
such that $\vecv(\iota(\vecx))$ depends only on $(x_1,\ldots,x_{d-1})$,
and this function gives a diffeomorphism of $(0,1)^{d-1}$ onto an open subset of $\US$.
Hence by decomposition and approximation of $\lambda$,
it follows that it suffices to prove Theorem \ref{KEYEQUIDISTRTHM2} 
in the case when $\lambda$ is supported in a \textit{fixed} such coordinate neighborhood.
Changing coordinates via the diffeomorphism $\iota$,
we may assume from now on that $\scrU=(0,1)^m$ and that
$\vecv(\vecx)$ depends only on $(x_1,\ldots,x_{d-1})$
and gives a diffeomorphism of $(0,1)^{d-1}$ onto an open subset of $\US$.

Let us first assume $m=d-1$.
Then $\vecv$ is a diffeomorphism of $\scrU=(0,1)^{d-1}$ onto an open subset of $\US$.
Recall that $\vecv\mapsto R_\vecv$ is smooth throughout $\US\setminus\{\vecv_0\}$.
If $\vecv_0$ is in the image of $\vecv$, then we replace $\scrU$ by $\scrU\setminus\vecv^{-1}(\vecv_0)$.
Now the map $\vecx\mapsto R_{\vecv(\vecx)}$ is smooth throughout $\scrU$,
and $\vecx\mapsto R_{\vecv(\vecx)}^{-1}\vece_1=\vecv(\vecx)$ has everywhere nonsingular differential.
Now \eqref{KEYEQUIDISTRTHM2res} follows from Corollary \ref{SPHERICALRATNERcor} applied with $M=1_d$
and $E_1(\vecx)=R_{\vecv(\vecx)}$.

It remains to consider the case $m>d-1$.
We are assuming that $\lambda$ is absolutely continuous;
hence $\lambda$ has a density $\lambda'\in\L^1((0,1)^m,d\vecx)$.
Now \eqref{KEYEQUIDISTRTHM2ass} says that
\begin{align}\notag
\int_{(0,1)^m}I\biggl(\sum_{j=1}^k m_j\vecphi_j(\vecx)\in\R\vecv(\vecx)+\Q^d\biggr)
\,\lambda'(\vecx)\,d\vecx=0.
\end{align}
Decompose $\vecx$ as $(\vecx_1,\vecx_2)\in\R^{d-1}\times\R^{m-d-1}$,
and recall that $\vecv(\vecx)$ only depends on $\vecx_1$, i.e.\ we may write $\vecv(\vecx)=\vecv(\vecx_1)$.
It follows that for (Lebesgue) a.e.\ $\vecx_2\in(0,1)^{m-d-1}$,
\begin{align*}
\int_{(0,1)^{d-1}}I\biggl(\sum_{j=1}^k m_j\vecphi_j(\vecx_1,\vecx_2)
\in\R\vecv(\vecx_1)+\Q^d\biggr)
\,\lambda'(\vecx_1,\vecx_2)\,d\vecx_1=0.
\end{align*}
Furthermore $\int_{(0,1)^m}\lambda'(\vecx_1,\vecx_2)\,d\vecx_1\,d\vecx_2=1$;
hence for a.e.\ $\vecx_2$ we have
$\int_{(0,1)^{d-1}}\lambda'(\vecx_1,\vecx_2)\,d\vecx_1<\infty$.
For each fixed $\vecx_2\in(0,1)^{m-d-1}$ which satisfies both the last two conditions,
our result for the case $m=d-1$ applies, showing that 
\begin{align*}
\lim_{\rho\to0}\int_{(0,1)^{d-1}}h_1\big(D(\rho) R_{\vecv(\vecx_1)}(1_d,\vecphi(\vecx_1,\vecx_2)),(\vecx_1,\vecx_2)\big)
\,\lambda'(\vecx_1,\vecx_2)\,d\vecx_1
\\
=\int_{(0,1)^{d-1}\times X} h_1\big(p,(\vecx_1,\vecx_2))\,\lambda'(\vecx_1,\vecx_2)\,d\vecx_1\,d\mu_X(p).
\end{align*}
Now \eqref{KEYEQUIDISTRTHM2res} follows by integrating the last relation
over $\vecx_2\in(0,1)^{m-d-1}$, applying Lebesgue's Bounded Convergence Theorem
to change order of limit and integration.
\end{proof}

\section{Proof of Theorem \ref*{thm:main001}}
\label{MAINPROOFsec}


We now give the proof of Theorem \ref{thm:main001}.
We will only discuss the proof of \eqref{thm:main001res2} in detail.
The proof of \eqref{thm:main001res1} is completely similar;
basically one just has to replace $\osigma^{(k)}(\vecJ)$ with the constant
$\osigma_\lambda^{(k)}$ throughout the discussion; cf.\ Remark \ref{thm:main001res1pfrem} below.

Recall that 
\begin{align}
\vecv(\vecJ)=\frac{\vecf(\vecJ)}{\|\vecf(\vecJ)\|}\in\US
\qquad (\vecJ\in\scrU).
\end{align}

We start by making some initial reductions.
First, the assumptions of Theorem \ref{thm:main001} 
imply that the open subset
\begin{align}\label{thm:main001pf1}
\{\vecJ\in\scrU\col\vecv(\vecJ)\neq\vecv_0,\:\vecu_j(\vecJ)\neq\vecv_0\:\forall j
\}
\end{align}
has full measure in $\scrU$ with respect to $\lambda$,
and so we may just as well replace $\scrU$ by that set. 
Hence from now on $R_{\vecv(\vecJ)}$ is a smooth function on all $\scrU$,
and the same holds for $R_{\vecu_j(\vecJ)}$ for each $j\in\{1,\ldots,k\}$.
Next let us set, for $\eta>0$,
\begin{align}\label{Ueta1}
\scrU_\eta:=\{\vecJ\in\scrU\col\|\vecphi_j(\vecJ)-\vecphi_\ell(\vecJ)\|>\eta\:\forall j\neq\ell\},
\end{align}
where $\|\cdot\|$ denotes distance to the origin in $\TT^d$
(viz., $\|\vecx\|=\inf_{\vecm\in\Z^d}\|\tvecx-\vecm\|$ for any $\vecx\in\TT^d$,
where $\tvecx$ is any lift of $\vecx$ to $\R^d$).
Note that the fact that $(\vecphi_1,\ldots,\vecphi_k)$ is $(\vectheta,\lambda)$-generic
implies that for any $j\neq\ell$, 
$\vecphi_j(\vecJ)\neq\vecphi_\ell(\vecJ)$ holds for $\lambda$-a.e.\ $\vecJ\in\scrU$.
Hence $\lambda(\scrU_\eta)\to1$ as $\eta\to0$,
and thus by a standard approximation argument
(cf., e.g., \cite[Thm.\ 4.28]{kallenberg02}),
it suffices to prove that for all sufficiently small $\eta>0$, the convergence \eqref{thm:main001res2}
holds when $\scrU$ is replaced by $\scrU_\eta$ and $\lambda$ is replaced by
$\lambda(\scrU_\eta)^{-1}\lambda_{|\scrU_\eta}$.
In other words, from now on we may assume that there exists a constant $0<\eta<1$ such that
$\|\vecphi_j(\vecJ)-\vecphi_\ell(\vecJ)\|>\eta$ for all $j\neq\ell$ and $\vecJ\in\scrU$.

For any $j\in\{1,\ldots,k\}$, $\rho>0$, $T>0$, we
introduce the following ``cylinder'' subset of $\R^d\times\scrU$:
\begin{align}\label{AjrhoTdef}
&A_{j,\rho,T}:=\biggl\{\biggl(t\vecf(\vecJ)-\rho R_{\vecu_j(\vecJ)}^{-1}\cmatr{0}{\vecx},\vecJ\biggr)\:\bigg|\:
(\vecx,\vecJ)\in\Omega_j,\:0<t\leq T\osigma^{(k)}(\vecJ)\rho^{1-d}\biggr\}.
\end{align}
For any subset $A\subset\R^d\times\scrU$ and $\vecJ\in\scrU$,
we write $A(\vecJ):=\{\vecx\in\R^d\col(\vecx,\vecJ)\in A\}$.
Let us set
\begin{align}\label{Cdef}
C:=\sup\bigl\{\|\vecx\|\col j\in\{1,\ldots,k\},\:(\vecx,\vecJ)\in\Omega_j\bigr\};
\end{align}
this is a finite positive real constant, since each $\Omega_j$ is a non-empty bounded open set.
\begin{lem}\label{MAINPFlem1}
For any $0<\rho<\eta/(10C)$, $(\vectheta,\vecJ)\in\TT^d\times\scrU$, $n\in\Z^+$ and $T>0$, 
the following equivalence holds:
\begin{align}\label{MAINPFlem1res1}
&
\frac{\rho^{d-1}t_n(\vectheta,\vecJ,\scrD_\rho^{(k)})}{\osigma^{(k)}(\vecJ)}\leq T
\quad\Leftrightarrow\quad
\sum_{j=1}^k\#\bigl(A_{j,\rho,T}(\vecJ)\cap(\vecphi_j(\vecJ)-\vectheta+\Z^d)\bigr)\geq n.
\end{align}
\end{lem}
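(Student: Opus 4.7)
The plan is to prove \eqref{MAINPFlem1res1} by constructing an explicit bijection between the hitting times of $\scrD_\rho^{(k)}$ in the interval $(0,T\osigma^{(k)}(\vecJ)\rho^{1-d}]$ and the disjoint union $\bigsqcup_{j=1}^{k}\bigl(A_{j,\rho,T}(\vecJ)\cap(\vecphi_j(\vecJ)-\vectheta+\Z^d)\bigr)$. Once this is in place, both sides of \eqref{MAINPFlem1res1} assert that the common cardinality is at least $n$, so the equivalence follows.

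For the forward map, first unfold the hitting condition: $t>0$ is a hitting time via sub-target $j$ iff there exist $\vecx\in\Omega_j(\vecJ)$ and $\vecm\in\Z^d$ with $\vectheta+t\vecf(\vecJ)=\vecphi_j(\vecJ)+\rho R_{\vecu_j(\vecJ)}^{-1}\cmatr{0}{\vecx}+\vecm$, equivalently $\vecphi_j(\vecJ)-\vectheta+\vecm=t\vecf(\vecJ)-\rho R_{\vecu_j(\vecJ)}^{-1}\cmatr{0}{\vecx}$; under the restriction $0<t\leq T\osigma^{(k)}(\vecJ)\rho^{1-d}$ the right-hand side is, by definition \eqref{AjrhoTdef}, precisely a point of $A_{j,\rho,T}(\vecJ)$, and the procedure reverses. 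Three disjointness statements then need verification, all exploiting the hypothesis $\rho<\eta/(10C)$:
\begin{enumerate}[{\rm (a)}]
\item Given $t$ and $j$, the pair $(\vecx,\vecm)$ is unique: two representations would force $\vecm_1-\vecm_2=\rho R_{\vecu_j(\vecJ)}^{-1}\cmatr{0}{\vecx_2-\vecx_1}$, a vector of Euclidean norm $\leq 2\rho C<1$, hence $\vecm_1=\vecm_2$ and $\vecx_1=\vecx_2$.
\item Distinct hitting times $t_1\neq t_2$ with the same $j$ produce distinct lattice points: if the associated $\vecm$'s coincided then $(t_1-t_2)\vecf(\vecJ)=\rho R_{\vecu_j(\vecJ)}^{-1}\cmatr{0}{\vecx_1-\vecx_2}\in\vecu_j(\vecJ)^\perp$, and pairing with $\vecu_j(\vecJ)$ together with the transversality $\vecu_j(\vecJ)\cdot\vecf(\vecJ)>0$ forces $t_1=t_2$, a contradiction.
\item The same $t$ cannot hit two distinct sub-targets $j_1\neq j_2$: subtracting the two equations and shifting by an appropriate $\vecm\in\Z^d$ would yield a lift of $\vecphi_{j_1}(\vecJ)-\vecphi_{j_2}(\vecJ)$ of Euclidean norm $\leq 2\rho C<\eta/5$, contradicting $\|\vecphi_{j_1}(\vecJ)-\vecphi_{j_2}(\vecJ)\|_{\TT^d}>\eta$ from the reduction at \eqref{Ueta1}.
\end{enumerate}

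With the bijection established, the condition $t_n(\vectheta,\vecJ,\scrD_\rho^{(k)})\leq T\osigma^{(k)}(\vecJ)\rho^{1-d}$ on the left of \eqref{MAINPFlem1res1} is equivalent to there being at least $n$ hitting times in that interval, and this count equals $\sum_{j=1}^k\#\bigl(A_{j,\rho,T}(\vecJ)\cap(\vecphi_j(\vecJ)-\vectheta+\Z^d)\bigr)$ by the bijection, giving the right-hand inequality. I expect (c) to be the crux: items (a) and (b) are essentially formal calculations exploiting the smallness of $\rho$ and the transversality, whereas (c) is the reason why the preparatory reduction to $\scrU_\eta$ at \eqref{Ueta1} had to be carried out in advance.
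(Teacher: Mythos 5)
Your proposal is correct and follows essentially the same route as the paper's proof: the paper first observes that each $\scrD_\rho(\vecu_j,\vecphi_j,\Omega_j)(\vecJ)$ lies in a ball of radius $<\eta/10$ around $\vecphi_j(\vecJ)$ (yielding injective embedding in $\TT^d$, your item (a)) and that the $k$ sub-targets are pairwise disjoint (your item (c)), then invokes transversality for discreteness of the hitting-time set (your item (b)), and finally unfolds from $\TT^d$ to $\R^d$ to match the lattice-point count — which is exactly your bijection spelled out in more detail.
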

(In \eqref{MAINPFlem1res1}, $\vecphi_j(\vecJ)-\vectheta+\Z^d$ denotes a translate of the lattice $\Z^d$,
i.e.\ a subset of $\R^d$. 
Note that this set is well-defined, i.e.\ independent of the choice of lifts of
$\vecphi_j(\vecJ)$ and $\vectheta$ to $\R^d$.)
\begin{proof}
Let $\rho$, $(\vectheta,\vecJ)$, $n$ and $T$ be given as in the statement of the lemma.
Note that the given restriction on $\rho$ implies that each target set,
\begin{align}\label{MAINPFlem1pf2}
\scrD_\rho(\vecu_j,\vecphi_j,\Omega_j)(\vecJ)=\biggl\{
\vecphi_j(\vecJ)+\rho R_{\vecu_j(\vecJ)}^{-1}\cmatr{0}{\vecx}\:\bigg|\:\vecx\in\Omega_j(\vecJ)\biggr\}
\subset\TT^d
\end{align}
is contained within a ball of radius $<\eta/10<1/10$, 
centered at $\vecphi_j(\vecJ)$.
In particular each target is injectively embedded in $\TT^d$,
and the targets for $j=1,\ldots,k$ are pairwise disjoint, since 
$\|\vecphi_j(\vecJ)-\vecphi_\ell(\vecJ)\|>\eta$ for all $j\neq\ell$. 
Hence the left inequality in \eqref{MAINPFlem1res1} holds if and only if
\begin{align}\label{MAINPFlem1pf1}
\sum_{j=1}^k\#\biggl\{t\in\bigl(0,T\osigma^{(k)}(\vecJ)\rho^{1-d}\bigr]\col
\vectheta+t\vecf(\vecJ)\in\scrD_\rho(\vecu_j,\vecphi_j,\Omega_j)(\vecJ)\biggr\}\geq n.
\end{align}
Note that each set in the left hand side is a discrete set of $t$-values, since the target set
$\scrD_\rho(\vecu_j,\vecphi_j,\Omega_j)(\vecJ)$ is contained in a hyperplane orthogonal to $\vecu_j(\vecJ)$,
and $\vecu_j(\vecJ)\cdot\vecf(\vecJ)>0$ by assumption.
Lifting the situation from $\TT^d$ to $\R^d$ we now see,
via \eqref{MAINPFlem1pf2} and \eqref{AjrhoTdef}, that
for each $j$ the corresponding term in the left hand side of \eqref{MAINPFlem1pf1} equals
$\#\bigl(A_{j,\rho,T}(\vecJ)\cap(\vecphi_j(\vecJ)-\vectheta+\Z^d)\bigr)$.
Hence the lemma follows.
\end{proof}

Next we prove that the linear map
$D(\rho)R_{\vecv(\vecJ)}$ takes the cylinder $A_{j,\rho,T}(\vecJ)$ 
into a cylinder which is approximately \textit{normalized,} 
in an appropriate sense.
Indeed, for any real numbers $Y<Z$, define $\tA_{j,Y,Z}\subset\R^d\times\scrU$ 
through
\begin{align}\label{tApdef}
\tA_{j,Y,Z}:=\biggl\{\biggl(\cmatr{t}{-\tR_j(\vecJ)\vecx},\vecJ\biggr)\:\bigg|\:
(\vecx,\vecJ)\in\Omega_j,\:\osigma^{(k)}(\vecJ)\|\vecf(\vecJ)\|Y<t\leq \osigma^{(k)}(\vecJ)\|\vecf(\vecJ)\|Z\biggr\},
\end{align}
where $\tR_j(\vecJ)$ is as on p.\ \pageref{tRjJdef}.
We then have the following lemma.
\begin{lem}\label{AjrhoTapprLEM}
Given $\ve>0$ and $T>0$,
there exists $\rho_0>0$ such that for all $\rho\in(0,\rho_0)$, $j\in\{1,\ldots,k\}$ and $\vecJ\in\scrU$,
\begin{align*}
\tA_{j,\ve,T-\ve}(\vecJ)\subset D(\rho)R_{\vecv(\vecJ)}A_{j,\rho,T}(\vecJ)\subset \tA_{j,-\ve,T+\ve}(\vecJ)
\end{align*}
\end{lem}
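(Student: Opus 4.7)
The plan is to reduce everything to a direct computation: apply $D(\rho)R_{\vecv(\vecJ)}$ to a general point of $A_{j,\rho,T}(\vecJ)$, rewrite the result in the normalized coordinates used to define $\tA_{j,Y,Z}(\vecJ)$, and control the discrepancy uniformly in $\vecJ$.

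First I would write $\fR_j(\vecJ)=R_{\vecv(\vecJ)}R_{\vecu_j(\vecJ)}^{-1}\in\SO(d)$ in the block form
\begin{equation*}
\fR_j(\vecJ)=\begin{pmatrix}\alpha_j(\vecJ)&\vecb_j(\vecJ)\trans\\ \vecc_j(\vecJ)&\tR_j(\vecJ)\end{pmatrix},
\end{equation*}
where the bottom-right block is $\tR_j(\vecJ)$ by definition and $\|\vecb_j(\vecJ)\|\leq1$ (since $\fR_j(\vecJ)$ is orthogonal). Using $R_{\vecv(\vecJ)}\vecf(\vecJ)=\|\vecf(\vecJ)\|\vece_1$, an arbitrary point $t\vecf(\vecJ)-\rho R_{\vecu_j(\vecJ)}^{-1}\cmatr{0}{\vecx}$ of $A_{j,\rho,T}(\vecJ)$ is mapped by $D(\rho)R_{\vecv(\vecJ)}$ to
\begin{equation*}
\begin{pmatrix}\rho^{d-1}t\|\vecf(\vecJ)\|-\rho^{d}\vecb_j(\vecJ)\trans\vecx\\ -\tR_j(\vecJ)\vecx\end{pmatrix}.
\end{equation*}
Substituting $s=\rho^{d-1}t\|\vecf(\vecJ)\|$, the range of $t\in(0,T\osigma^{(k)}(\vecJ)\rho^{1-d}]$ becomes exactly the range $s\in(0,T\osigma^{(k)}(\vecJ)\|\vecf(\vecJ)\|]$ appearing in the definition of $\tA_{j,0,T}(\vecJ)$. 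Thus the image $D(\rho)R_{\vecv(\vecJ)}A_{j,\rho,T}(\vecJ)$ differs from $\tA_{j,0,T}(\vecJ)$ only through the additive perturbation $-\rho^{d}\vecb_j(\vecJ)\trans\vecx$ in the first coordinate.

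The perturbation is uniformly small: since $\|\vecb_j(\vecJ)\|\leq1$ and $\|\vecx\|\leq C$ for $\vecx\in\Omega_j(\vecJ)$ (cf.\ \eqref{Cdef}), it is bounded by $\rho^d C$. The one non-trivial point is to show that $\osigma^{(k)}(\vecJ)\|\vecf(\vecJ)\|$ is bounded below by a positive constant $c$ independent of $j$ and $\vecJ$; this is where one uses that each $\Omega_j$ is bounded (so that $\Leb(\Omega_j(\vecJ))\leq V$ for some fixed $V>0$) together with $\vecu_j(\vecJ)\cdot\vecf(\vecJ)\leq\|\vecf(\vecJ)\|$, giving $\osigma^{(k)}(\vecJ)\|\vecf(\vecJ)\|\geq 1/(kV)=:c$ via the explicit formula \eqref{osigmakformula}.

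With $c$ in hand, I would choose $\rho_0>0$ such that $\rho_0^d C<\ve c$. For any $\rho\in(0,\rho_0)$, the right-hand inclusion follows because every element of $D(\rho)R_{\vecv(\vecJ)}A_{j,\rho,T}(\vecJ)$ has first coordinate in $(-\rho^d C,\,T\osigma^{(k)}(\vecJ)\|\vecf(\vecJ)\|+\rho^d C)\subset(-\ve\osigma^{(k)}(\vecJ)\|\vecf(\vecJ)\|,\,(T+\ve)\osigma^{(k)}(\vecJ)\|\vecf(\vecJ)\|)$, while the second coordinate already has the required form $-\tR_j(\vecJ)\vecx$ with $\vecx\in\Omega_j(\vecJ)$. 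For the left-hand inclusion, given a point $(s,-\tR_j(\vecJ)\vecy)\in\tA_{j,\ve,T-\ve}(\vecJ)$ I would solve $s=s'-\rho^d\vecb_j(\vecJ)\trans\vecy$ for $s'$, verifying that $s'\in(0,T\osigma^{(k)}(\vecJ)\|\vecf(\vecJ)\|]$ by the same $\rho^d C<\ve c$ estimate, and then taking $\vecx=\vecy$ and $t=s'/(\rho^{d-1}\|\vecf(\vecJ)\|)$ exhibits a preimage in $A_{j,\rho,T}(\vecJ)$. There is no substantive obstacle—the argument is just bookkeeping, the one thing to be careful about being the uniform lower bound on $\osigma^{(k)}(\vecJ)\|\vecf(\vecJ)\|$.
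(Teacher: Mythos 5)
Your proof is correct and follows essentially the same route as the paper's: compute $D(\rho)R_{\vecv(\vecJ)}$ applied to a general point of $A_{j,\rho,T}(\vecJ)$, observe that it differs from a point of $\tA_{j,0,T}(\vecJ)$ only by a first-coordinate perturbation of size $\leq C\rho^d$ (the paper gets this bound directly from $\fR_j(\vecJ)\in\SO(d)$ rather than via the block $\vecb_j$, but it is the same estimate), and then use the uniform lower bound on $\osigma^{(k)}(\vecJ)\|\vecf(\vecJ)\|$ coming from \eqref{osigmakformula} and the boundedness of the $\Omega_j$. The paper leaves the final two-sided inclusion as "the lemma follows from these observations," which you have simply spelled out.
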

\begin{proof}
By direct computation,
\begin{align*}
D(\rho)R_{\vecv(\vecJ)}A_{j,\rho,T}(\vecJ)
=\biggl\{t\vece_1-\rho D(\rho)\fR_j(\vecJ)\cmatr{0}{\vecx}\:\bigg|\:
\vecx\in\Omega_j(\vecJ),\:0<t\leq \osigma^{(k)}(\vecJ)\|\vecf(\vecJ)\| T\biggr\}.
\end{align*}
Using $\rho D(\rho)=\diag(\rho^d,1,\ldots,1)$ and 
\eqref{Cdef}, it follows that for every $\vecx\in\Omega_j(\vecJ)$,
\begin{align*}
\rho D(\rho)\fR_j(\vecJ)\cmatr{0}{\vecx}
=\cmatr{r}{\tR_j(\vecJ)\vecx}
\end{align*}
where $|r|\leq C\rho^d$.
Note also that, by \eqref{osigmakformula},
\begin{align*}
\bigl(\osigma^{(k)}(\vecJ)\|\vecf(\vecJ)\|\bigr)^{-1}=
\sum_{j=1}^k\Leb(\Omega_j(\vecJ))\,\vecu_j(\vecJ)\cdot\vecv(\vecJ)
\end{align*}
and this sum is bounded from above by a constant independent of $\vecJ$, since each set $\Omega_j$ is bounded.
The lemma follows from these observations.
\end{proof}

Let $G_1=\SL(d,\R)\ltimes\R^d$.
This is the group ``$G$ for $k=1$'';
in particular $G_1$ acts on $\R^d$ (cf.\ \eqref{GACTRd}).
For $g=(M,(\vecxi_1,\ldots,\vecxi_k))\in G$ and $j\in\{1,\ldots,k\}$
we write $g^{[j]}:=(M,\vecxi_j)\in G_1$.
We also introduce the short-hand notation $\oN:=\{1,\ldots,N\}$.
Given real numbers $Y_n<Z_n$ for $n\in\oN$, we define $B[(Y_n),(Z_n)]$ to be the following subset of $X\times\scrU$:
\begin{align}\label{Bdef}
&B[(Y_n),(Z_n)]:=\biggl\{(g\Gamma,\vecJ)\in X\times\scrU\col
\sum_{j=1}^k\#\bigl(\tA_{j,Y_n,Z_n}(\vecJ)\cap g^{[j]}(\Z^d)\bigr)\geq n
\hspace{8pt}\forall n\in\oN\biggr\}.
\end{align}
In the following the Lebesgue measure in various dimensions will appear within the same discussion;
for clarity we will therefore write $\Leb_m$ for the Lebesgue measure in $\R^m$.

The following is a ``trivial'' variant of Siegel's mean value theorem \cite{Siegel}:
\begin{lem}\label{SIEGELLEM}
For any $j\in\{1,\ldots,k\}$ and $f\in\L^1(\R^d)$,
\begin{align}\label{SIEGELLEMres}
\int_X\sum_{\vecm\in\Z^d}f(g^{[j]}(\vecm))\,d\mu_X(g)
=\int_{\R^d}f(\vecx)\,d\vecx.
\end{align}
In particular for any Lebesgue measurable subset $A\subset\R^d$,
\begin{align}\label{SIEGELLEMres2}
\mu_X(\{\Gamma g\in X\col g^{[j]}(\Z^d)\cap A\neq\emptyset\})\leq\Leb_d(A).
\end{align}
\end{lem}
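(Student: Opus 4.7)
The plan is to reduce \eqref{SIEGELLEMres} to the classical Siegel-type formula on the space of affine unimodular lattices $X_1:=G_1/\Gamma_1$ with $\Gamma_1:=\SL(d,\Z)\ltimes\Z^d$, by projecting to the $j$-th factor, and then to verify the latter directly by unfolding.

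First I would check that $\Phi_j:G\to G_1$, $(M,(\vecxi_1,\ldots,\vecxi_k))\mapsto(M,\vecxi_j)$, is a continuous surjective group homomorphism with $\Phi_j(\Gamma)=\Gamma_1$; this is immediate from the semidirect-product multiplication law. Hence $\Phi_j$ descends to a continuous map $\widetilde\Phi_j:X\to X_1$. Embedding $G_1\hookrightarrow G$ by placing its translational part in the $j$-th slot, one verifies that $\widetilde\Phi_j$ intertwines the induced $G_1$-actions on $X$ and $X_1$; since this action preserves $\mu_X$ by left invariance, the push-forward $(\widetilde\Phi_j)_*\mu_X$ is a $G_1$-invariant Borel probability measure on $X_1$, and therefore coincides with $\mu_{X_1}$ by uniqueness of the Haar probability measure. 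Since $\sum_{\vecm\in\Z^d}f(g^{[j]}(\vecm))$ depends on $g\Gamma$ only through $\widetilde\Phi_j(g\Gamma)$, this identification reduces \eqref{SIEGELLEMres} to the case $k=1$.

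For $k=1$, I would unfold directly. Fix a fundamental domain $F_0\subset\SL(d,\R)$ for $\SL(d,\R)/\SL(d,\Z)$; the right-action $(M,\vecxi)\cdot(\sigma,\vecn)=(M\sigma,\vecxi+M\vecn)$ of $\Gamma_1$ shows that $\{(M,\vecxi)\col M\in F_0,\ \vecxi\in M[0,1)^d\}$ is a fundamental domain for $G_1/\Gamma_1$. Writing the Haar measure on the semidirect product $G_1$ as $d\mu_{\SL(d,\R)}(M)\,d\vecxi$, and using that the torus $\R^d/M\Z^d$ has Lebesgue volume $\det M=1$, the classical unfolding
\begin{equation*}
\int_{\R^d/M\Z^d}\sum_{\vecm\in\Z^d}f(M\vecm+\vecxi)\,d\vecxi=\int_{\R^d}f(\vecx)\,d\vecx
\end{equation*}
combined with pinning down the overall constant by testing against $f\equiv 1$ (so that $\mu_{X_1}$ is a probability measure) yields \eqref{SIEGELLEMres}.

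Finally, \eqref{SIEGELLEMres2} follows by taking $f=\one_A$ in \eqref{SIEGELLEMres} and using the pointwise bound $\one\{g^{[j]}(\Z^d)\cap A\neq\emptyset\}\leq\sum_{\vecm\in\Z^d}\one_A(g^{[j]}(\vecm))$. I expect no substantive obstacle here; the only mild care points are the fundamental-domain bookkeeping for $X_1$ and the verification of the measure normalizations, which is exactly why the authors advertise the lemma as a trivial variant of Siegel's theorem.
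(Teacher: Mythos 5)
Your proof is correct and follows essentially the same route as the paper: the authors write $g=(M,(\vecxi_1,\ldots,\vecxi_k))$, integrate out the variables $\vecxi_\ell$ for $\ell\neq j$, substitute $\vecxi_j=M\veceta$, and unfold, which is exactly your reduction-to-$k=1$ step followed by the classical Siegel computation, just presented concretely rather than via the projection homomorphism $\Phi_j$ and uniqueness of the $G_1$-invariant probability measure on $X_1$. Your packaging of the reduction is a bit more abstract but amounts to the same manipulation, and the derivation of \eqref{SIEGELLEMres2} from \eqref{SIEGELLEMres} via $f=\one_A$ matches the paper's.
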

\begin{proof}
(Cf., e.g., \cite[proof of Lemma 10]{SV}.)
In the left hand side of \eqref{SIEGELLEMres} we write $g=(M,(\vecxi_1,\ldots,\vecxi_k))$,
integrate out all variables $\vecxi_\ell$, $\ell\neq j$,
and then substitute $\vecxi_j=M\veceta$; this gives
\begin{align}\label{SIEGELLEMpf1}
\int_X\sum_{\vecm\in\Z^d}f(g^{[j]}(\vecm))\,d\mu_X(g)=
\int_F\int_{[0,1]^d}\sum_{\vecm\in\Z^d}f(M(\vecm+\veceta))\,d\veceta\,d\mu(M),
\end{align}
where $F\subset\SL_d(\R)$ is a fundamental domain for $\SL_d(\R)/\SL_d(\Z)$
and $\mu$ is Haar measure on $\SL_d(\R)$ normalized so that $\mu(F)=1$.
Now \eqref{SIEGELLEMres} follows since the inner integral in \eqref{SIEGELLEMpf1} 
equals $\int_{\R^d}f(\vecx)\,d\vecx$ for every $M$.
The last statement of the lemma then follows by noticing that the left hand side of 
\eqref{SIEGELLEMres2} is bounded above by
the left hand side of \eqref{SIEGELLEMres} with $f$ equal to the characteristic function of $A$.
\end{proof}

\begin{lem}\label{BBpvolsamecontLEM}
The number $(\mu_X\times\lambda)\bigl(B[(Y_n),(Z_n)]\bigr)$ depends continuously on $((Y_n),(Z_n))$.
\end{lem}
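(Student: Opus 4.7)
The plan is to show that the map $((Y_n),(Z_n))\mapsto(\mu_X\times\lambda)(B[(Y_n),(Z_n)])$ is continuous at any fixed reference point $((Y_n^0),(Z_n^0))$ by a dominated-convergence argument, after establishing that for $(\mu_X\times\lambda)$-almost every $(g\Gamma,\vecJ)$ the indicator $\one_{B[(Y_n),(Z_n)]}(g\Gamma,\vecJ)$ is locally constant in the parameters near the reference point. Since $(\mu_X\times\lambda)$ is a probability measure, once such a null exceptional set is identified the dominated convergence theorem with dominating function $1$ closes the argument via the sequential characterization of continuity.

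Write $K(\vecJ):=\osigma^{(k)}(\vecJ)\|\vecf(\vecJ)\|$. As $((Y_n),(Z_n))$ varies, only the $t$-interval in the definition of $\tA_{j,Y_n,Z_n}(\vecJ)$ changes, not the cross-section $-\tR_j(\vecJ)\Omega_j(\vecJ)$. The integer count $\#\bigl(\tA_{j,Y_n,Z_n}(\vecJ)\cap g^{[j]}(\Z^d)\bigr)$ is locally constant at $((Y_n^0),(Z_n^0))$ provided no lattice point of $g^{[j]}(\Z^d)$ lies on the boundary of $\tA_{j,Y_n^0,Z_n^0}(\vecJ)$. That boundary decomposes into two endcaps $\{K(\vecJ)Y_n^0,K(\vecJ)Z_n^0\}\times(-\tR_j(\vecJ)\overline{\Omega_j(\vecJ)})$, which obviously lie in two hyperplanes of $\R^d$ and hence have $\Leb_d$-measure zero, and a side piece $[K(\vecJ)Y_n^0,K(\vecJ)Z_n^0]\times(-\tR_j(\vecJ)\partial\Omega_j(\vecJ))$, whose $\Leb_d$-measure is proportional to $\Leb_{d-1}(\partial\Omega_j(\vecJ))$ and therefore vanishes for $\lambda$-a.e.\ $\vecJ$ by hypothesis (iii) of Theorem \ref{thm:main001} together with Fubini. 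Thus $\Leb_d(\partial\tA_{j,Y_n^0,Z_n^0}(\vecJ))=0$ for $\lambda$-a.e.\ $\vecJ$.

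I would then apply the Siegel-type bound \eqref{SIEGELLEMres2} of Lemma \ref{SIEGELLEM} with $A=\partial\tA_{j,Y_n^0,Z_n^0}(\vecJ)$ for such $\vecJ$ and each $j$, integrate over $\vecJ$ via Fubini, and take a finite union over $j\in\{1,\ldots,k\}$ and $n\in\oN$ to obtain a ``bad set''
\begin{align*}
E:=\bigcup_{j=1}^k\bigcup_{n=1}^N\bigl\{(g\Gamma,\vecJ)\in X\times\scrU\col g^{[j]}(\Z^d)\cap\partial\tA_{j,Y_n^0,Z_n^0}(\vecJ)\neq\emptyset\bigr\}
\end{align*}
satisfying $(\mu_X\times\lambda)(E)=0$. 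For any $(g\Gamma,\vecJ)\notin E$, discreteness of $g^{[j]}(\Z^d)$ together with boundedness of a slight enlargement of $\tA_{j,Y_n^0,Z_n^0}(\vecJ)$ reduces the count to finitely many lattice points, each at positive distance from the cylinder boundary; hence the count, and so also $\one_{B[(Y_n),(Z_n)]}(g\Gamma,\vecJ)$, is constant in a neighborhood of $((Y_n^0),(Z_n^0))$. The main delicate point is matching the half-open-in-$t$, open-in-cross-section conventions of $\tA_{j,Y,Z}(\vecJ)$: a lattice point whose $t$-coordinate equals $K(\vecJ)Y_n^0$ or $K(\vecJ)Z_n^0$ and whose $\vecy$-coordinate lies strictly inside $-\tR_j(\vecJ)\Omega_j(\vecJ)$ really does induce a jump, but such points lie on an endcap and hence are captured by $E$. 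Once this is verified, dominated convergence applied along an arbitrary sequence $((Y_n^{(m)}),(Z_n^{(m)}))\to((Y_n^0),(Z_n^0))$ completes the proof.
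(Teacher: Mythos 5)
Your proof is correct, but it takes a genuinely different route from the paper's. The paper first applies the $G$-invariance of $\mu_X$ under the $\vecJ$-dependent diagonal element $\fD(\vecJ)$ to pass to the normalized cylinders $\tA'_{j,Y,Z}(\vecJ)=\{(t,-\vecx)\col\vecx\in\tOmega_j(\vecJ),\,Y<t\leq Z\}$, then bounds the symmetric difference of the two $B$-sets directly via the Siegel bound \eqref{SIEGELLEMres2}, and finally evaluates the resulting volume integrals using the normalization $\sum_j\Leb_{d-1}(\tOmega_j(\vecJ))=1$ from \eqref{TOTVOL1}. This yields the explicit Lipschitz estimate
$\bigl|(\mu_X\times\lambda)(B[(Y_n),(Z_n)])-(\mu_X\times\lambda)(B[(Y'_n),(Z'_n)])\bigr|\leq\sum_n|Y_n-Y'_n|+\sum_n|Z_n-Z'_n|$,
which is quantitatively stronger than continuity. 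Your argument is softer: you fix a reference point, show the indicator $\one_B$ is $(\mu_X\times\lambda)$-a.e.\ locally constant in the parameters (by proving that lattice points a.e.\ avoid $\partial\tA_{j,Y_n^0,Z_n^0}(\vecJ)$ via the Siegel bound combined with Fubini), and invoke dominated convergence. This gives continuity but no modulus; on the other hand, it is structurally parallel to the paper's own Lemma~\ref{BBDRYMEAS0lem}, and it avoids the $\fD(\vecJ)$ rescaling. One small inefficiency worth noting: since the cross-section $-\tR_j(\vecJ)\Omega_j(\vecJ)$ does not change as $((Y_n),(Z_n))$ varies, a lattice point sitting on the \emph{side} of the cylinder can never enter or leave the cylinder, so only the endcaps can produce jumps in the count. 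The endcap argument alone (hyperplanes have $\Leb_d$-measure zero) therefore suffices, and your appeal to hypothesis~(iii) for the side piece, while harmless, is not needed. Finally, when applying Fubini you should note that $\partial(\Omega_j(\vecJ))\subset(\partial\Omega_j)(\vecJ)$, which is the slice for which hypothesis~(iii) gives $\lambda$-a.e.\ vanishing; these two ``boundaries of the slice'' versus ``slices of the boundary'' need not coincide, but the inclusion goes the right way.
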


\vspace{-5pt}

(Here we keep $((Y_n),(Z_n))\in\R^N\times\R^N$ subject to $Y_n<Z_n$ for all $n\in\oN$, as before.)

\begin{proof}
Let $\fD(\vecJ)\in\SL(d,\R)$ be the diagonal matrix
\begin{align*}
\fD(\vecJ)=\diag\Bigl[\bigl(\osigma^{(k)}(\vecJ)\|\vecf(\vecJ)\|\bigr)^{-1},
\bigl(\osigma^{(k)}(\vecJ)\|\vecf(\vecJ)\|\bigr)^{1/(d-1)},\ldots,\bigl(\osigma^{(k)}(\vecJ)\|\vecf(\vecJ)\|\bigr)^{1/(d-1)}
\Bigr].
\end{align*}
Using the fact that $\mu_X$ is $G$-invariant
(thus invariant under $g\Gamma\mapsto\fD(\vecJ)g\Gamma$)
we see that 
\begin{align}\label{BBpvolsamecontLEMpf1}
(\mu_X\times\lambda)\bigl(B[(Y_n),(Z_n)]\bigr)=(\mu_X\times\lambda)\bigl(B'[(Y_n),(Z_n)]\bigr),
\end{align}
where $B'[(Y_n),(Z_n)]$ is the set obtained by replacing
$\tA_{j,Y,Z}(\vecJ)$ by $\tA_{j,Y,Z}'(\vecJ):=\fD(\vecJ)\tA_{j,Y,Z}(\vecJ)$ in the definition \eqref{Bdef}.
Hence it now suffices to prove that $(\mu_X\times\lambda)\bigl(B'[(Y_n),(Z_n)]\bigr)$ depends continuously on $((Y_n),(Z_n))$.
Note also that
\begin{align}\label{tAjYZ}
\tA'_{j,Y,Z}(\vecJ):=\biggl\{\cmatr{t}{-\vecx}\:\bigg|\:\vecx\in\tOmega_j(\vecJ),\:Y<t\leq Z\biggr\},
\end{align}
where $\tOmega_j(\vecJ)$ is as in \eqref{tOmegadef2}.

To prove the continuity, consider any real numbers $Y_n,Z_n,Y_n',Z_n'$ for $n\in\oN$,
subject to $Y_n<Z_n$ and $Y_n'<Z_n'$.
Writing $\triangle$ for symmetric set difference, we have
\begin{align*}
&B'[(Y_n),(Z_n)]\:\triangle\: B'[(Y'_n),(Z'_n)]
\\
&\subset\bigcup_{n\in\oN}\bigcup_{j=1}^k\biggl\{(g\Gamma,\vecJ)\in X\times\scrU\col
\bigl(\tA'_{j,Y_n,Z_n}(\vecJ)\:\triangle\:\tA'_{j,Y'_n,Z'_n}(\vecJ)\bigr)\cap g^{[j]}(\Z^d)\neq\emptyset\biggr\},
\end{align*}
and hence by \eqref{SIEGELLEMres2} 
and \eqref{tAjYZ},
\begin{align*}
(\mu_X\times\lambda)\bigl(B'[(Y_n),(Z_n)]\:\triangle\: B'[(Y'_n),(Z'_n)]\bigr)
\leq\sum_{n\in\oN}\sum_{j=1}^k\int_{\scrU}\Leb_d\bigl(\tA_{j,Y_n,Z_n}'(\vecJ)\:\triangle\:\tA'_{j,Y'_n,Z'_n}(\vecJ)\bigr)
\,d\lambda(\vecJ)
\\
\leq\sum_{n\in\oN}\sum_{j=1}^k\int_{\scrU}\Leb_1\Bigl((Y_n,Z_n]\:\triangle\:(Y_n',Z_n']\Bigr)
\Leb_{d-1}(\tOmega_j(\vecJ))\,d\lambda(\vecJ).
\end{align*}
However it follows from \eqref{dettRjJ} and \eqref{tOmegadef2} that
\begin{align*}
\Leb_{d-1}(\tOmega_j(\vecJ))=\osigma^{(k)}(\vecJ)\Leb_{d-1}(\Omega_j(\vecJ))\,\vecu_j(\vecJ)\cdot\vecf(\vecJ),
\end{align*}
and using also \eqref{osigmakformula} it follows that
\begin{align}\label{TOTVOL1}
\sum_{j=1}^k\Leb_{d-1}(\tOmega_j(\vecJ))=1
\end{align}
for all $\vecJ\in\scrU$.
Hence we conclude
\begin{align*}
&\Bigl|(\mu_X\times\lambda)\bigl(B'[(Y_n),(Z_n)]\bigr)-(\mu_X\times\lambda)\bigl(B'[(Y'_n),(Z'_n)]\bigr)\Bigr|
\leq\sum_{n\in\oN}\bigl|Y_n-Y_n'\bigr|+\sum_{n\in\oN}\bigl|Z_n-Z_n'\bigr|.
\end{align*}
This proves the desired continuity.
\end{proof}

We wish to prove that the limit relation \eqref{KEYEQUIDISTRTHM2res} in Theorem \ref{KEYEQUIDISTRTHM2}
holds with $h$ equal to the characteristic function of $B=B[(Y_n),(Z_n)]$.
For this we need to prove that the boundary, $\partial B$, has measure zero with respect to $\mu_X\times\lambda$.
Here by $\partial B$ we denote the boundary of $B$ \textit{in} $X\times\scrU$,
and similarly $\partial\tA_{j,Y_n,Z_n}$ denotes the boundary of $\tA_{j,Y_n,Z_n}$ \textit{in} $\R^d\times\scrU$.
(The alternative would have been to consider the boundaries in $X\times\R^m$ and $\R^d\times\R^m$, respectively.)
\begin{lem}\label{BDRYlem}
For any $B=B[(Y_n),(Z_n)]$, if $(g\Gamma,\vecJ)\in\partial B$ then 
$g^{[j]}(\Z^d)\cap(\partial\tA_{j,Y_n,Z_n})(\vecJ)\neq\emptyset$
for some $j\in\{1,\ldots,k\}$ and $n\in\oN$.
\end{lem}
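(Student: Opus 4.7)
The plan is to prove the contrapositive. Fix $(g_0\Gamma,\vecJ_0)\in X\times\scrU$ for which
\begin{align*}
g_0^{[j]}(\Z^d)\cap(\partial\tA_{j,Y_n,Z_n})(\vecJ_0)=\emptyset
\qquad\forall j\in\{1,\ldots,k\},\:n\in\oN,
\end{align*}
and aim to show $(g_0\Gamma,\vecJ_0)\notin\partial B$. Since $B$ is cut out by the finite system of inequalities $c_n\geq n$ ($n\in\oN$) on the integer-valued counts
\begin{align*}
c_n(g,\vecJ):=\sum_{j=1}^k\#\bigl(\tA_{j,Y_n,Z_n}(\vecJ)\cap g^{[j]}(\Z^d)\bigr),
\end{align*}
it suffices to find an open neighborhood $\scrO\subset G\times\scrU$ of $(g_0,\vecJ_0)$ on which every $c_n$ is constant; its image in $X\times\scrU$ then lies entirely in $B$ or in its complement.

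The construction of $\scrO$ has two ingredients. First, I would use the uniform bound \eqref{Cdef} together with the continuity of $\vecJ\mapsto\osigma^{(k)}(\vecJ)\|\vecf(\vecJ)\|$ to enclose all cylinders $\tA_{j,Y_n,Z_n}(\vecJ)$, for $\vecJ$ in a fixed small neighborhood $W$ of $\vecJ_0$ and all $(j,n)\in\{1,\ldots,k\}\times\oN$, inside a single compact set $K\subset\R^d$. Because the affine lattice $g_0^{[j]}(\Z^d)$ is discrete, the set $S_j:=\{\vecm\in\Z^d\col g_0^{[j]}(\vecm)\in K'\}$ is finite for any bounded enlargement $K'\supset K$. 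Combining the continuity of $g\mapsto g^{[j]}(\vecm)$ for individual $\vecm$ with the coercivity estimate $\|g^{[j]}(\vecm)\|\gtrsim\|\vecm\|$ uniform for $g$ near $g_0$ then yields an open $\scrO\subset G\times W$ such that, for every $(g,\vecJ)\in\scrO$ and every $\vecm\in\Z^d\setminus S_j$, one has $g^{[j]}(\vecm)\notin K$; such $\vecm$ therefore cannot contribute to $c_n(g,\vecJ)$. Second, for the remaining, finitely many $\vecm\in S_j$, the hypothesis together with the partition $\R^d\times\scrU=\mathrm{int}\,\tA_{j,Y_n,Z_n}\sqcup\partial\tA_{j,Y_n,Z_n}\sqcup\mathrm{int}((\R^d\times\scrU)\setminus\tA_{j,Y_n,Z_n})$ implies that $(g_0^{[j]}(\vecm),\vecJ_0)$ lies in the interior of $\tA_{j,Y_n,Z_n}$ or in the interior of its complement. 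Since $(g,\vecJ)\mapsto(g^{[j]}(\vecm),\vecJ)$ is continuous and $S_j\times\oN$ is finite, shrinking $\scrO$ preserves this classification simultaneously for all $\vecm\in S_j$ and all $(j,n)$; thus $c_n(g,\vecJ)=c_n(g_0,\vecJ_0)$ on $\scrO$ for every $n\in\oN$.

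The main obstacle is the first ingredient: ruling out contributions from far-away lattice points $\vecm\in\Z^d\setminus S_j$ under perturbation, as both the base of a cylinder ($\Omega_j(\vecJ)$) and the lattice $g^{[j]}(\Z^d)$ can in principle deform. This is handled by exploiting the uniform boundedness of $\Omega_j$ in $\R^{d-1}\times\scrU$ and the continuity of $\osigma^{(k)}\|\vecf\|$ to trap all nearby cylinders in one compact set, and by the fact that the affine map $\vecm\mapsto g^{[j]}(\vecm)$ is a perturbation of the proper map $\vecm\mapsto g_0^{[j]}(\vecm)$ whose image escapes to infinity linearly. Once this localization is in place, the remainder of the argument is the standard observation that an integer-valued count is locally constant at configurations whose counted points avoid the boundary of the counting region.
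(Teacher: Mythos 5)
Your proof is correct and is, structurally, the contrapositive of the paper's argument. The paper takes $(g\Gamma,\vecJ)\in\partial B$, produces sequences from inside and outside $B$ converging to it, and after repeated passage to subsequences (first to fix $n$, then $j$, then a lattice vector $\vecq$ from a finite set $C\cap\Z^d$) lands a lattice point on $\partial\tA_{j,Y_n,Z_n}$ in the limit. You instead show directly that the integer-valued counts $c_n$ are locally constant on a neighborhood of any $(g_0,\vecJ_0)$ whose affine lattices avoid all cylinder boundaries, whence $(g_0\Gamma,\vecJ_0)$ lies in the interior of $B$ or of its complement. Both arguments rest on the same three facts: the cylinders $\tA_{j,Y_n,Z_n}(\vecJ)$ are uniformly bounded for $\vecJ$ in a small neighborhood of $\vecJ_0$ (continuity of $\osigma^{(k)}\|\vecf\|$ plus \eqref{Cdef}); discreteness of the affine lattice together with the coercivity $\|g^{[j]}(\vecm)\|\gtrsim\|\vecm\|$ near $g_0$ restricts attention to a fixed finite set of $\vecm\in\Z^d$; and continuity of $(g,\vecJ)\mapsto(g^{[j]}(\vecm),\vecJ)$ then propagates the interior/exterior classification. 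Your formulation avoids the subsequence bookkeeping, in exchange for the (routine, and correctly implicit in your write-up) observation that an open set in $G\times\scrU$ has open image in $X\times\scrU$ and that the counts $c_n$ descend to $X\times\scrU$ because $g^{[j]}(\Z^d)$ depends only on $g\Gamma$.
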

\begin{proof}
Assume $(g\Gamma,\vecJ)\in\partial B$.
Then there exist sequences $\{(g_m\Gamma,\vecJ_m)\}$ 
and $\{(\tg_m\Gamma,\tJ_m)\}$ in $X\times\scrU$ 
such that both $(g_m\Gamma,\vecJ_m)\to(g\Gamma,\vecJ)$ and $(\tg_m\Gamma,\tJ_m)\to(g\Gamma,\vecJ)$ as $m\to\infty$,
and $(g_m\Gamma,\vecJ_m)\in B$ and $(\tg_m\Gamma,\tJ_m)\notin B$ for all $m$.
In particular for each $m$ there is some $n\in\oN$ such that
\begin{align}\label{BDRYlempf1}
\sum_{j=1}^k\#\bigl(\tA_{j,Y_n,Z_n}(\tJ_m)\cap \tg_m^{[j]}(\Z^d)\bigr)<n.
\end{align}
By passing to an appropriate subsequence, we may in fact assume that $n$ is \textit{fixed} in \eqref{BDRYlempf1},
i.e.\ $n$ does not depend on $m$.
On the other hand $(g_m\Gamma,\vecJ_m)\in B$ for each $m$, and thus
\begin{align}\label{BDRYlempf2}
\sum_{j=1}^k\#\bigl(\tA_{j,Y_n,Z_n}(\vecJ_m)\cap g_m^{[j]}(\Z^d)\bigr)\geq n.
\end{align}
Hence for each $m$ there is some $j\in\{1,\ldots,k\}$ such that
\begin{align}\label{BDRYlempf3}
\#\bigl(\tA_{j,Y_n,Z_n}(\tJ_m)\cap \tg_m^{[j]}(\Z^d)\bigr)
<\#\bigl(\tA_{j,Y_n,Z_n}(\vecJ_m)\cap g_m^{[j]}(\Z^d)\bigr).
\end{align}
By again passing to a subsequence we may assume that also $j$ is independent of $m$.
We have $g_m\Gamma\to g\Gamma$ as $m\to\infty$, and by choosing the $g_m$'s 
appropriately we may even assume $g_m\to g$; similarly we may assume $\tg_m\to g$.
Using now $g_m\to g$ and $\vecJ_m\to\vecJ$ 
together with the fact that $\Omega_j$ is bounded,
it follows that there exists a compact set $C\subset\R^d$ such that
$(g_m^{[j]})^{-1}\tA_{j,Y_n,Z_n}(\vecJ_m)\subset C$ for all $m$,
and in particular the cardinality of
$\tA_{j,Y_n,Z_n}(\vecJ_m)\cap g_m^{[j]}(\Z^d)$ remains the same if we replace
$\Z^d$ with the finite set $C\cap\Z^d$.
Now \eqref{BDRYlempf3} implies that for each $m$ there is some $\vecq\in C\cap\Z^d$ such that
$\tg_m^{[j]}(\vecq)\notin\tA_{j,Y_n,Z_n}(\tJ_m)$
but $g_m^{[j]}(\vecq)\in\tA_{j,Y_n,Z_n}(\vecJ_m)$;
and since $C\cap\Z^d$ is finite we may assume, after passing to a subsequence, that
$\vecq$ is independent of $m$.
Taking now $m\to\infty$ it follows that $(g^{[j]}(\vecq),\vecJ)\in\partial\tA_{j,Y_n,Z_n}$, and the lemma is proved.
\end{proof}

\begin{lem}\label{BBDRYMEAS0lem}
Every set $B=B[(Y_n),(Z_n)]$ satisfies $(\mu_X\times\lambda)(\partial B)=0$.
\end{lem}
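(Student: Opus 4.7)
My plan is to combine Lemma \ref{BDRYlem} with Siegel's mean value theorem to reduce the claim to a purely Lebesgue null-set statement for the boundary of the cylinder $\tA_{j,Y_n,Z_n}$ in $\R^d\times\scrU$, and then verify that statement via a global change of coordinates. By Lemma \ref{BDRYlem},
\begin{equation*}
\partial B\subseteq\bigcup_{j=1}^k\bigcup_{n\in\oN}\scrY_{j,n},\qquad
\scrY_{j,n}:=\bigl\{(g\Gamma,\vecJ)\in X\times\scrU\col g^{[j]}(\Z^d)\cap(\partial\tA_{j,Y_n,Z_n})(\vecJ)\neq\emptyset\bigr\},
\end{equation*}
so it suffices to show that each $\scrY_{j,n}$ is $\mu_X\times\lambda$-null.

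First I would apply Fubini (integrating out $g\Gamma$ for fixed $\vecJ$) together with the easy bound \eqref{SIEGELLEMres2} of Siegel's mean value theorem to obtain
\begin{equation*}
(\mu_X\times\lambda)(\scrY_{j,n})\leq\int_\scrU\Leb_d\bigl((\partial\tA_{j,Y_n,Z_n})(\vecJ)\bigr)\,d\lambda(\vecJ)
=(\Leb_d\times\lambda)(\partial\tA_{j,Y_n,Z_n}).
\end{equation*}
The task thus reduces to proving the right-hand side is zero. For this I would introduce the global diffeomorphism
\begin{equation*}
\Psi:\R\times\R^{d-1}\times\scrU\to\R^d\times\scrU,\qquad
(t,\vecx,\vecJ)\mapsto\biggl(\cmatr{t}{-\tR_j(\vecJ)\vecx},\vecJ\biggr),
\end{equation*}
which is well-defined and smooth thanks to the initial reduction ensuring that $\vecJ\mapsto R_{\vecu_j(\vecJ)}$ is smooth throughout $\scrU$. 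Its inverse image of $\tA_{j,Y_n,Z_n}$ is
\begin{equation*}
E_{j,Y_n,Z_n}:=\bigl\{(t,\vecx,\vecJ)\col(\vecx,\vecJ)\in\Omega_j,\:g_j(\vecJ)Y_n<t\leq g_j(\vecJ)Z_n\bigr\},
\quad g_j(\vecJ):=\osigma^{(k)}(\vecJ)\|\vecf(\vecJ)\|.
\end{equation*}
Since $\Psi$ is a homeomorphism preserving the $\vecJ$-coordinate, for every $\vecJ$ the slice $(\partial\tA_{j,Y_n,Z_n})(\vecJ)$ equals $\Psi_\vecJ\bigl((\partial E_{j,Y_n,Z_n})(\vecJ)\bigr)$, where $\Psi_\vecJ(t,\vecx)=\cmatr{t}{-\tR_j(\vecJ)\vecx}$ is a linear isomorphism with Jacobian $|\det\tR_j(\vecJ)|\leq 1$; hence
\begin{equation*}
(\Leb_d\times\lambda)(\partial\tA_{j,Y_n,Z_n})\leq(\Leb_1\times\Leb_{d-1}\times\lambda)(\partial E_{j,Y_n,Z_n}).
\end{equation*}

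Finally, $\partial E_{j,Y_n,Z_n}$ in $\R\times\R^{d-1}\times\scrU$ is contained in
\begin{equation*}
(\R\times\partial\Omega_j)\:\cup\:\bigl\{(t,\vecx,\vecJ)\col(\vecx,\vecJ)\in\overline{\Omega_j},\:t=g_j(\vecJ)Y_n\bigr\}
\:\cup\:\bigl\{(t,\vecx,\vecJ)\col(\vecx,\vecJ)\in\overline{\Omega_j},\:t=g_j(\vecJ)Z_n\bigr\}.
\end{equation*}
The first piece is $\Leb_1\times\Leb_{d-1}\times\lambda$-null by assumption (iii), while each of the two caps is the graph of a continuous function of $(\vecx,\vecJ)$, so its $t$-fibers are singletons and it is null by Fubini. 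Combining these bounds yields $(\mu_X\times\lambda)(\partial B)=0$.

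The main obstacle I anticipate is only notational care: the $\vecJ$-slice $(\partial\tA_{j,Y_n,Z_n})(\vecJ)$ is a priori strictly larger than the boundary of the slice $\tA_{j,Y_n,Z_n}(\vecJ)$, which is why I prefer to reduce the estimate to a single global measure inequality via the diffeomorphism $\Psi$ rather than slice in $\vecJ$ first. Once that is in place, the only substantive input is assumption (iii); the rest is routine measure theory.
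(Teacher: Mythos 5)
Your proof is correct and follows essentially the same approach as the paper's: reduce via Lemma \ref{BDRYlem} and the Siegel-type bound \eqref{SIEGELLEMres2} to showing $(\Leb_d\times\lambda)(\partial\tA_{j,Y_n,Z_n})=0$, then compute the boundary of the cylinder and conclude from assumption (iii) via Fubini. The paper simply writes out $\partial\tA_{j,Y,Z}$ directly in the coordinates that your diffeomorphism $\Psi$ makes explicit, so the coordinate change you introduce is implicit in its formula; the substantive inputs (Siegel bound, assumption (iii), Fubini) and the decomposition of the boundary into lateral surface plus caps are identical.
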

\begin{proof}
In view of Lemma \ref{BDRYlem} and \eqref{SIEGELLEMres2} in Lemma \ref{SIEGELLEM},
it suffices to prove that for every $j\in\{1,\ldots,k\}$ and $n\in\oN$,
$\partial\tA_{j,Y_n,Z_n}$ has measure zero with respect to $\Leb_d\times\lambda$.
Recalling \eqref{tApdef} we see that, for any $Y<Z$,
\begin{align*}
\partial\tA_{j,Y,Z}=
\biggl\{\biggl(\cmatr{t}{-\tR_j(\vecJ)\vecx},\vecJ\biggr)\:\bigg|\:
(\vecx,\vecJ)\in\partial\Omega_j,\:
\osigma^{(k)}(\vecJ)\|\vecf(\vecJ)\|\,Y\leq t\leq\osigma^{(k)}(\vecJ)\|\vecf(\vecJ)\|\,Z\biggr\}
\hspace{3pt}
\\
\bigcup\:
\biggl\{\biggl(\cmatr{t}{-\tR_j(\vecJ)\vecx},\vecJ\biggr)\:\bigg|\:
(\vecx,\vecJ)\in\overline{\Omega_j},\:
t\in\bigl\{\osigma^{(k)}(\vecJ)\|\vecf(\vecJ)\|\,Y,\osigma^{(k)}(\vecJ)\|\vecf(\vecJ)\|\,Z\bigr\}\biggr\}.
\end{align*}
Now the claim follows by Fubini's Theorem,
using the assumption from Theorem \ref{thm:main001}
that $\partial\Omega_j$ has measure zero with respect to $\Leb_{d-1}\times\lambda$.
\end{proof}

We are now ready to complete the proof of Theorem \ref{thm:main001}.

\begin{proof}[Conclusion of the proof of Theorem \ref{thm:main001}]
Let $\tphi:\scrU\to(\R^d)^k$ be the map 
$$\vecJ\mapsto\bigl(\vecphi_1(\vecJ)-\vectheta(\vecJ),\ldots,\vecphi_k(\vecJ)-\vectheta(\vecJ)\bigr).$$
Then Theorem~\ref{KEYEQUIDISTRTHM2} applies for our $\scrU,\lambda,\vecf$ and $\tphi$;
in particular, the condition \eqref{KEYEQUIDISTRTHM2ass} holds for $\tphi$ since we assume that 
$(\vecphi_1,\ldots,\vecphi_k)$ is $(\vectheta,\lambda)$-generic.
Now for any fixed set $B=B[(Y_n),(Z_n)]$,
since $(\mu_X\times\lambda)(\partial B)=0$ by Lemma \ref{BBDRYMEAS0lem},
a standard approximation argument (cf., e.g., \cite[Thm.\ 4.25]{kallenberg02})
shows that the conclusion of Theorem~\ref{KEYEQUIDISTRTHM2},
\eqref{KEYEQUIDISTRTHM2res},
applies also for $h=\one_{B}$, the characteristic function of $B$.
In other words, 
\begin{align}\label{PORTMANTEAUappllemres}
\lim_{\rho\to0}\lambda\bigl(\bigl\{\vecJ\in\scrU\col
\bigl(D(\rho)R_{\vecv(\vecJ)}\bigl(1_d,\tphi(\vecJ)
\bigr),\:\vecJ\bigr)\in B\bigr\}\bigr)
=(\mu_X\times\lambda)(B).
\end{align}
Combining this with the definition of $B=B[(Y_n),(Z_n)]$, \eqref{Bdef}, we conclude:
\begin{align}\notag
\lim_{\rho\to0}\lambda\biggl(\biggl\{\vecJ\col
\sum_{j=1}^k\#\bigl(\tA_{j,Y_n,Z_n}(\vecJ)\cap
D(\rho)R_{\vecv(\vecJ)}\bigl(\vecphi_j(\vecJ)-\vectheta(\vecJ)+\Z^d\bigr)\bigr)\geq n\hspace{8pt}
\forall n\in\oN\biggr\}\biggr)
\\[3pt]\label{thm:main001pf2}
=(\mu_X\times\lambda)(B).
\end{align}

Now let positive real numbers $T_1,\ldots,T_n$ be given,
and consider a number $\ve$ subject to $0<\ve<\frac12\min(T_1,\ldots,T_n)$.
Applying \eqref{thm:main001pf2} with $Y_n=\ve$ and $Z_n=T_n-\ve$ we get,
via Lemma~\ref{AjrhoTapprLEM}:
\begin{align}\notag
\liminf_{\rho\to0}\lambda\biggl(\biggl\{\vecJ\col
\sum_{j=1}^k\#\bigl(A_{j,\rho,T_n}(\vecJ)\cap
\bigl(\vecphi_j(\vecJ)-\vectheta(\vecJ)+\Z^d\bigr)\bigr)\geq n\hspace{8pt}
\forall n\in\oN\biggr\}\biggr)
\\[3pt]\label{thm:main001pf3}
\geq(\mu_X\times\lambda)\bigl(B[(\ve)_{n=1}^N,(T_n-\ve)_{n=1}^N]\bigr).
\end{align}
Similarly if we take $Y_n=-\ve$ and $Z_n=T_n+\ve$ then we get
\begin{align}\notag
\limsup_{\rho\to0}\lambda\biggl(\biggl\{\vecJ\col
\sum_{j=1}^k\#\bigl(A_{j,\rho,T_n}(\vecJ)\cap
\bigl(\vecphi_j(\vecJ)-\vectheta(\vecJ)+\Z^d\bigr)\bigr)\geq n\hspace{8pt}
\forall n\in\oN\biggr\}\biggr)
\\[3pt]\label{thm:main001pf4}
\leq(\mu_X\times\lambda)\bigl(B[(-\ve)_{n=1}^N,(T_n+\ve)_{n=1}^N]\bigr).
\end{align}
These relations 
hold for all sufficiently small $\ve>0$;
letting $\ve\to0$ we get, via 
Lemma \ref{BBpvolsamecontLEM},
when also rewriting the left hand side using Lemma \ref{MAINPFlem1}:
\begin{align}\label{thm:main001pf5}
\lim_{\rho\to0}\lambda\biggl(\biggl\{\vecJ\col
\frac{\rho^{d-1}t_n(\vectheta,\vecJ,\scrD_\rho^{(k)})}{\osigma^{(k)}(\vecJ)}\leq T_n\hspace{8pt}\forall n\in\oN\biggr\}\biggr)
=(\mu_X\times\lambda)\bigl(B[(0)_{n=1}^N,(T_n)_{n=1}^N]\bigr).
\end{align}
The fact that \eqref{thm:main001pf5} holds for any $T_1,\ldots,T_N>0$ implies that 
\eqref{thm:main001res2} in Theorem \ref{thm:main001} holds.
\end{proof}

\begin{remark}\label{thm:main001res1pfrem}
As mentioned, the proof of \eqref{thm:main001res1} in Theorem \ref{thm:main001} is completely similar;
in principle one only has to replace $\osigma^{(k)}(\vecJ)$ with the constant
$\osigma_\lambda^{(k)}$ throughout the discussion.
However a couple of extra technicalities appear.
First of all, it may happen that $\osigma_\lambda^{(k)}=\infty$;
however in this case \eqref{thm:main001res1} is trivial, with $\tau_i=0$ for all $i$.
Hence from now on we assume $0<\osigma_\lambda^{(k)}<\infty$.
Secondly, the last steps of the proofs of Lemmata \ref{AjrhoTapprLEM} and \ref{BBpvolsamecontLEM} do not carry over
verbatim.
One way to manage those steps is to assume from start that 
$0<\eta<\|\vecf(\vecJ)\|<\eta^{-1}$ for all $\vecJ\in\scrU$;
this is permissible by the argument given below \eqref{Ueta1}, but with $\scrU_\eta$ replaced with
\begin{align}\label{Ueta2}
\scrU_\eta:=\{\vecJ\in\scrU\col\|\vecphi_j(\vecJ)-\vecphi_\ell(\vecJ)\|>\eta\:\forall j\neq\ell\:\text{ and }\:
\eta<\|\vecf(\vecJ)\|<\eta^{-1}\}.
\end{align}
With this assumption,
we have $\bigl(\osigma_\lambda^{(k)}\|\vecf(\vecJ)\|\bigr)^{-1}<\bigl(\osigma_\lambda^{(k)}\eta\bigr)^{-1}$
for all $\vecJ\in\scrU$, and using this the proof of Lemma \ref{AjrhoTapprLEM} extends to the present situation.
Furthermore, by \eqref{oOmegadef1} and \eqref{dettRjJ},
\begin{align*}
\Leb_{d-1}\bigl(\oOmega_j(\vecJ)\bigr)=
\bigl(\osigma^{(k)}_\lambda\,\vecu_j(\vecJ)\cdot\vecf(\vecJ)\bigr)\Leb_{d-1}\bigl(\Omega_j(\vecJ)\bigr)
<\osigma^{(k)}_\lambda \eta^{-1}\Leb_{d-1}\bigl(\Omega_j(\vecJ)\bigr),
\end{align*}
which is bounded from above by a constant independent of $\vecJ$, since the set $\Omega_j$ is bounded.
Using this fact, the proof of the continuity in Lemma \ref{BBpvolsamecontLEM} carries over to the present situation.
\end{remark}

Concerning the distribution of the limit variables $(\ttau_1,\ldots,\ttau_N)$,
we see from the above proof of \eqref{thm:main001res2} that for any $T_1,\ldots,T_n>0$,
\begin{align}\label{taundistrEXPL1}
&\PP\bigl(\ttau_n\leq T_n\hspace{8pt}\forall n\in\oN\bigr)
=(\mu_X\times\lambda)\bigl(B[(0)_{n=1}^N,(T_n)_{n=1}^N]\bigr).
\end{align}
Combining this with \eqref{BBpvolsamecontLEMpf1} and \eqref{tAjYZ}, we get
\begin{align}\label{taundistrEXPL}
&\PP\bigl(\ttau_n\leq T_n\hspace{8pt}\forall n\in\oN\bigr)
\\\notag
&=(\mu_X\times\lambda)\biggl(\biggl\{(g\Gamma,\vecJ)\col
\sum_{j=1}^k\#\biggl\{\cmatr t{\vecx}\in g^{[j]}(\Z^d)\col
0<t\leq T_n,\: \vecx\in-\tOmega_j(\vecJ)\biggr\}\geq n
\hspace{8pt}\forall n\in\oN\biggr\}\biggr).
\end{align}
Hence the limit variables $(\ttau_i)_{i=1}^\infty$ may be described as follows.
Recall \eqref{PLOMEGAdef}.
Let $\vecJ$ be a random point in $\scrU$ distributed according to $\lambda$,
and let $g\Gamma$ be a random point in $X$ distributed according to $\mu_X$,
and independent from $\vecJ$.
Then $(\ttau_i)_{i=1}^\infty$ can be taken to be the elements of the random set
\begin{align}\label{ttauiGENDISCR2}
\bigcup_{j=1}^k\scrP(g^{[j]}(\Z^d),\tOmega_j(\vecJ)),
\end{align}
ordered by size.
Similarly, $(\tau_i)_{i=1}^\infty$ can be taken to be the elements of the random set
\begin{align}\label{tauiGENDISCR2}
\bigcup_{j=1}^k\scrP(g^{[j]}(\Z^d),\oOmega_j(\vecJ)),
\end{align}
ordered by size.
This description clearly agrees with the one in \eqref{tauiGENDISCR1} and \eqref{ttauiGENDISCR1}.
Let us also note that it follows from \eqref{taundistrEXPL1} and Lemma \ref{BBpvolsamecontLEM},
and the $\osigma^{(k)}_\lambda$-analogues of these, that
the distribution functions
$\PP\bigl(\tau_n\leq T_n\hspace{8pt}\forall n\in\oN\bigr)$ and
$\PP\bigl(\ttau_n\leq T_n\hspace{8pt}\forall n\in\oN\bigr)$ 
depend continuously on $(T_n)\in\R_{>0}^N$,
as stated in Section \ref{sec:general}.


\begin{thebibliography}{99}

\bibitem{Abadi11}
M.\ Abadi and B.\ Saussol, Hitting and returning to rare events for all alpha-mixing processes. Stochastic Process. Appl. 121 (2011), no. 2, 314--323. 

\bibitem{Balint:2011jt}
P. B{\'a}lint, N. Chernov, and D. Dolgopyat, Limit theorems for dispersing
  billiards with cusps, Commun. Math. Phys. 308
  (2011), no.~2, 479--510.
  
\bibitem{Berry77}
M.V.\ Berry and M.\ Tabor, 
Level clustering in the regular spectrum, Proc. Roy. Soc.  A 356 (1977) 375--394.  
  
\bibitem{Boca07} 
F.P. Boca and A. Zaharescu, The distribution of the free path lengths in the periodic two-dimensional Lorentz gas in the small-scatterer limit, Commun. Math. Phys. 269 (2007), 425--471.

\bibitem{Bolsinov-Fomenko} A.V. Bolsinov and A.T. Fomenko, Integrable Hamiltonian systems. Geometry, topology, classification. Chapman \& Hall/CRC, Boca Raton, FL, 2004.

\bibitem{Bufetov13}
A. Bufetov, Limit theorems for translation flows. Ann. of Math. 179 (2014) 431--499.

\bibitem{Bufetov13b}
A. Bufetov and B. Solomyak, Limit theorems for self-similar tilings, Comm. Math. Phys. 319 (2013) 761--789.

\bibitem{Bufetov14}
A. Bufetov and G. Forni, Limit theorems for horocycle flows,  Ann. Sci. \'Ec. Norm. Sup\'er. (4) 47 (2014), no. 5, 851--903.

\bibitem{BD05} L.A. Bunimovich and C.P. Dettmann, Open circular billiards and the Riemann hypothesis,  Phys. Rev. Lett. { 94} (2005) 100201.

\bibitem{Bunimovich:1980ur}
L.A. Bunimovich and Ya.G. Sinai, Statistical properties of Lorentz gas
  with periodic configuration of scatterers, Comm. Math. Phys. {78} (1980), no.~4, 479--497.


\bibitem{Chazottes13}
J.-R.\ Chazottes and P.\ Collet, Poisson approximation for the number of visits to balls in non-uniformly hyperbolic dynamical systems. Ergodic Theory Dynam. Systems 33 (2013), no. 1, 49--80. 
 
\bibitem{Chernov97}
N.\ Chernov, Entropy, Lyapunov exponents, and mean free path for billiards. J. Statist. Phys. 88 (1997), no. 1-2, 1--29.


\bibitem{CCS93} B. Crespi, S.-J. Chang and K.-J. Shi, Elliptical billiards and hyperelliptic functions, J. Math. Phys. 34 (1993) 2257--2289.

\bibitem{Dahlqvist97}
P. Dahlqvist, The Lyapunov exponent in the Sinai billiard in the small
scatterer limit.  Nonlinearity  10  (1997), 159--173.

\bibitem{Dolgopyat:2009bl}
D. Dolgopyat and N. Chernov, Anomalous current in periodic Lorentz
  gases with an infinite horizon, Russian Math. Surveys 64 (2009), no.~ 4, 651--699.
  
\bibitem{DF1}
D. Dolgopyat and B. Fayad, Deviations of ergodic sums for toral translations I. Convex bodies, Geom. Funct. Anal. 24 (2014) 85--115.

\bibitem{DF}
D.\ Dolgopyat and B.\ Fayad, Limit theorems for toral translations. Hyperbolic dynamics, fluctuations and large deviations, 227--277, Proc. Sympos. Pure Math., 89, Amer. Math. Soc., Providence, RI, 2015. 



\bibitem{Elkies04}
N.D.\ Elkies and C.T.\ McMullen, Gaps in $\sqrt{n}\bmod 1$ and ergodic theory.  Duke Math. J.  {123}  (2004), 95--139,
and a correction in Duke Math J.\ {129} (2005), 405--406.

\bibitem{Folland}
G.\ Folland, \textit{Real analysis}, John Wiley \& Sons Inc., New York, 1999.

\bibitem{Freitas14}
J.M. Freitas, N. Haydn and M. Nicol, Convergence of rare event point processes to the Poisson process for planar billiards. Nonlinearity 27 (2014), no. 7, 1669--1687.
 
\bibitem{Gouezel15}
S.\ Gou\"ezel, Limit theorems in dynamical systems using the spectral method. Hyperbolic dynamics, fluctuations and large deviations, 161--193, Proc. Sympos. Pure Math., 89, Amer. Math. Soc., Providence, RI, 2015.  

\bibitem{Griffin14}
J.\ Griffin and J.\ Marklof, Limit theorems for skew translations. J. Mod. Dyn. 8 (2014), no. 2, 177--189. 


\bibitem{Haydn13} 
N.\ Haydn, Entry and return times distribution. Dyn. Syst. 28 (2013), no. 3, 333--353. 
 
\bibitem{Haydn14a}  
N.\ Haydn and Y.\  Psiloyenis, Return times distribution for Markov towers with decay of correlations. Nonlinearity 27 (2014), no. 6, 1323--1349.
  
  
\bibitem{Hirata93}  
M.\ Hirata, Poisson law for Axiom A diffeomorphisms. Ergodic Theory Dynam. Systems 13 (1993), no. 3, 533--556.  
  
\bibitem{Jacobson62}
N.\ Jacobson, \textit{Lie algebras},
Interscience Tracts in Pure and Applied Mathematics, No. 10,
Interscience Publishers (a division of John Wiley \& Sons),
New York-London, 1962.


\bibitem{kallenberg02}
O.\ Kallenberg, \textit{Foundations of modern probability}, 2nd Edition, Springer-Verlag, New York, 2002.

\bibitem{KleinbockMargulis96}
D.\ Kleinbock and G.\ Margulis,
Bounded orbits of nonquasiunipotent flows on homogeneous spaces,
Amer.\ Math.\ Soc.\ Transl.\ {171} (1996), 141--172.

\bibitem{Lucarini16}
V. Lucarini et al., {\em Extremes and Recurrence in Dynamical Systems,} Wiley, New York, 2016. 

\bibitem{Malcev42}
A.\ Malcev,
On the representation of an algebra as a direct sum of the radical and a semi-simple subalgebra.
C. R. (Doklady) Acad. Sci. URSS (N.S.) {36} (1942), 42–-45. 

\bibitem{Margulisthesis}
G.\ Margulis, \textit{On Some Aspects of the Theory of Anosov Systems},
Springer Monographs in Mathematics. Springer, Berlin, 2004.
(A translation of Phd Thesis, Moscow State University, 1970.)

\bibitem{Marklof01}
J. Marklof, The Berry-Tabor conjecture. {\em European Congress of Mathematics, Vol. II (Barcelona, 2000),}  421--427,
Progr. Math., 202, Birkh\"auser, Basel, 2001. 

\bibitem{suspension}
J. Marklof, Entry and return times for semi-flows,  arXiv:1605.02715.

\bibitem{partI}
J. Marklof and A. Str\"ombergsson, The distribution of free path lengths in
the periodic Lorentz gas and related lattice point problems,
Annals of Math. {172} (2010), 1949--2033.

\bibitem{OLBC10}F.W.J. Olver et al.,
editors,  {\em NIST Handbook of Mathematical Functions,} 
Cambridge University Press, New York, NY, 2010. 

\bibitem{Pitskel91}
B. Pitskel,
Poisson limit law for Markov chains.
Ergodic Theory Dynam. Systems 11 (1991), no. 3, 501--513. 

\bibitem{mR72}
M.S.\ Raghunathan, 
\textit{Discrete subgroups of Lie groups},
Springer-Verlag, New York, 1972.

\bibitem{mR91a}
M.\ Ratner, On Raghunathan's measure conjecture,
Ann.\ of Math.\ {134} (1991), 545--607.

\bibitem{Rousseau14}
J.\ Rousseau, 
Hitting time statistics for observations of dynamical systems. Nonlinearity 27 (2014), no. 9, 2377--2392.

\bibitem{Shah2010}
N.\ Shah,
Expanding translates of curves and Dirichlet-Minkowski theorem on linear forms,
J.\ Amer.\ Math.\ Soc.\ 23 (2010), 563--589.

\bibitem{Siegel}
C.L.\ Siegel,
A mean value theorem in geometry of numbers,
Ann.\ of Math.\ {46} (1945), 340--347.

\bibitem{Sinai60}
Ya.G. Sinai, The central limit theorem for geodesic flows on manifolds of constant negative curvature.  Soviet Math. Dokl. 1 (1960) 983--987.

\bibitem{Sinai91a}
Ya.G. Sinai, Mathematical problems in the theory of quantum chaos. Geometric aspects of functional analysis (1989Ð90), 41--59, Lecture Notes in Math., 1469, Springer, Berlin, 1991.

\bibitem{Sinai91b}
Ya.G. Sinai, Poisson distribution in a geometric problem. Dynamical systems and statistical mechanics (Moscow, 1991), 199--214, Adv. Soviet Math., 3, Amer. Math. Soc., Providence, RI, 1991.
 
\bibitem{SMOD77}V.M. Strutinsky, A.G. Magner, S.R. Ofengenden and T. D\o{}ssing,  Semiclassical interpretation of the gross-shell structure in deformed nuclei, Z. Phys. A { 283} (1977) 269-285.

\bibitem{SV}
A.\ Str\"ombergsson, A.\ Venkatesh, Small solutions to linear congruences and Hecke equidistribution, Acta Arith., {118} (2005), 41-78.

\bibitem{Szasz:2007uo}
D. Sz{\'a}sz and T. Varj{\'u}, Limit laws and recurrence for
  the planar {L}orentz process with infinite horizon, J. Stat. Phys. 
  129 (2007), no.~1, 59--80.

\bibitem{Tab05} 
S.\ Tabachnikov, {\em Geometry and billiards,} Student Mathematical Library, 30. American Mathematical Society, Providence, RI, 2005.

\bibitem{Varadarajan}
V.S.\ Varadarajan, Groups of automorphisms of {B}orel spaces, Trans. Amer. Math. Soc.\ 109 (1963), 191--220.



\end{thebibliography}
\end{document}